\tikzset{currarrow/.style={-Stealth}} % Define 'currarrow' as a Stealth arrow
\newtheorem{theorem}{Theorem}[section]
\newtheorem{lemma}[theorem]{Lemma}
\newtheorem{corollary}[theorem]{Corollary}
\newtheorem{observation}[theorem]{Observation}
\newtheorem{example}[theorem]{Example}
\newcommand{\reals}{\mathbb{R}}
\newcommand{\sinn}[1]{\sin\left(#1\right)}
\newcommand{\ignore}[1]{}
\newcommand{\pe}[2]{$\sc{PE}^{#1}_{#2}$}
\newcommand{\de}[1]{$\sc{DE}_{#1}$}
\newcommand{\wpe}[2]{$\sc{PE}^{#2}({#1})$}
\newcommand{\wde}[1]{$\sc{DE}({#1})$}
\begin{document}

\title{
Multi-Agent Search-Type Problems on Polygons
%\& Improved Lower Bounds for Disk Priority Evacuation
%Improved Lower Bounds for Search Problems on the Disk \\
%Using Metric Relaxations and Linear Programs
%\thanks{{\bf Very preliminary draft.}}
}

\ignore{
\author{
Konstantinos Georgiou
\inst{1}
\thanks{Research supported in part by NSERC Discovery grant.}
\and
Caleb Jones
\and
Jesse Lucier
}
\institute{Department of Mathematics, Toronto Metropolitan University, Toronto, Ontario, Canada.}
}

\author{
Konstantinos Georgiou\footnotemark[1] ~\footnotemark[5] 
\and
Caleb Jones\footnotemark[1] ~\footnotemark[5]
\and 
Jesse Lucier\footnotemark[1] ~\footnotemark[5]
}

\def\thefootnote{\fnsymbol{footnote}}
\footnotetext[5]{Research supported in part by NSERC of Canada.}
%\footnotetext[6]{Research supported in part by the Ontario Graduate Scholarship (OGS) Program.}
\footnotetext[1]{
Department of Mathematics, 
Toronto Metropolitan University, 
Toronto, ON, Canada, \texttt{konstantinos,caleb.w.jones,jesse.lucier@torontomu.ca}
}

\maketitle

\begin{abstract}

We present several advancements in search-type problems for fleets of mobile agents operating in two dimensions under the wireless model. Potential hidden target locations are equidistant from a central point, forming either a disk (infinite possible locations) or regular polygons (finite possible locations). Building on the foundational disk evacuation problem, the disk priority evacuation problem with $k$ Servants, and the disk $w$-weighted search problem, we make improvements on several fronts.
First we establish new upper and lower bounds for the $n$-gon priority evacuation problem with $1$ Servant for $n \leq 13$, and for $n_k$-gons with $k=2, 3, 4$ Servants, where $n_2 \leq 11$, $n_3 \leq 9$, and $n_4 \leq 10$, offering tight or nearly tight bounds. The only previous results known were a tight upper bound for $k=1$ and $n=6$ and lower bounds for $k=1$ and $n \leq 9$. 
Second, our work improves the best lower bound known for the disk priority evacuation problem with $k=1$ Servant from $4.46798$ to $4.64666$ and for $k=2$ Servants from $3.6307$ to $3.65332$. 
Third, we improve the best lower bounds known for the disk $w$-weighted group search problem, significantly reducing the gap between the best upper and lower bounds for $w$ values where the gap was largest. These improvements are based on nearly tight upper and lower bounds for the $11$-gon and $12$-gon $w$-weighted evacuation problems, while previous analyses were limited only to lower bounds and only to $7$-gons.

\vspace{0.5cm}
\noindent
{\bf Mobile Agents, Evacuation, Priority Evacuation, Disk, Polygons, Wireless Model}
\end{abstract}

\tableofcontents

\section{Introduction}

Search theory is a branch of operations research that focuses on determining optimal strategies for locating targets amid uncertainty and limited information, with search and rescue missions being a prime example. Traditional \emph{search} requires an agent to identify the target quickly, while \emph{evacuation} missions need the entire fleet to gather at the hidden target. In some scenarios, only a specific agent must reach the target, or agents with different significance factors influence the solution's quality. Theoretical foundations for these problems date back to the 1960s and have been revitalized by recent advancements in robotics. The field now integrates tools from online algorithms, mobile agent distributed computing, operations research, discrete mathematics, and optimization.

Theoretically, autonomous mobile agents are modeled as volumeless entities navigating discrete domains (networks or graphs) or continuous domains (2D or 3D Euclidean spaces). In continuous domains, possible target locations shape the geometric search areas, including lines, disks, circles, squares, rectangles, and triangles. Significant progress has been made, yielding elegant distributed algorithms for mobile agent computing and technical arguments for impossibility results. However, achieving tight upper and lower bounds remains challenging, with many problems still lacking optimal solutions.

This work is the first to systematically explore a continuous search space with discrete target locations. We study \emph{priority evacuation}, an asymmetric search problem where a distinguished agent, the Queen, must reach the hidden item, aided by other agents, the Servants. The possible target locations form regular $n$-gons, and we provide upper and lower bounds, often tight, for various $n$ values and numbers of Servants. This problem has been previously studied for the line with $1$ Servant and the disk with any number of Servants. As a result, we improve the best lower bounds for the disk evacuation problem with $1$ and $2$ Servants. Additionally, we improve the best lower bounds known for the so-called $w$-weighted search problem on a disk. These improvements are based on new, and nearly tight bounds for the $w$-weighted search problem on $11$-gons and $12$-gons.

\subsection{Related Work}

The field of search problems dates back to the 1960s and has since developed a robust theoretical framework, detailed in various books and surveys~\cite{ahlswede1987search,alpern2013search,AlpGal03,czyzowicz2019groupkos,11340,hohzaki2016search}. Initially, the focus was on optimizing objectives for single searchers~\cite{beck1964linear,kleinberg1994line}, but with the rise of robotic fleets, multi-searcher problems have gained prominence~\cite{CzyzowiczGGKMP14,pattanayak2018evacuating}.

The linear search problem is a classic example studied extensively for both single~\cite{baezayates1993searching,bose2017general} and multiple searchers~\cite{ChrobakGGM15}. Variations aim at minimizing the weighted average of search completion times~\cite{GLweightedLine2023}. Other one-dimensional search settings include searching along rays~\cite{BrandtFRW20}, anomalous terrains~\cite{CzyzowiczKKNOS17}, graphs~\cite{AngelopoulosDL19}, and for multiple objects~\cite{Borowiecki0DK16,CzyzowiczDGKM16}.

In the past decade, attention has shifted to two-dimensional search problems, exploring domains like polygons~\cite{FeketeGK10}, disks~\cite{CzyzowiczGGKMP14}, planes~\cite{feinerman2017ants}, regular polygons~\cite{czyzowicz2020priority123}, equilateral triangles and squares~\cite{BagheriNO19,ChuangpishitMNO20,CzyzowiczKKNOS15,GJ22-triangle-algosensros}, arbitrary triangles~\cite{georgiou2022triangle}, and $\ell_p$ unit disks~\cite{GLLKllp2023}. Variations involve different communication models and searcher specifications, such as face-to-face communication~\cite{brandt2017collaboration,CGKNOV20,disser2019evacuating}, varying searcher speeds~\cite{BampasCGIKKP19}, searching for multiple exits~\cite{pattanayak2018evacuating}, and diverse communication capabilities~\cite{czyzowicz2021groupevac,GGK2022asym,georgiou2022asymmetricevacuation}. Fault tolerance has also been studied, addressing issues like faulty agents and Byzantine faults in~\cite{behrouz2023byzantine,BGMP2022pfaulty,CzyzowiczGGKKRW17,czyzowicz2021searchbyz} and in~\cite{czyzowicz2021searchnew,CzyzowiczKKNO19,GeorgiouKLPP19,Sun2020}.

Research has also explored non-standard objectives, including multi-objective search problems~\cite{chuangpishit2020multi}, competitive algorithmic approaches~\cite{0001DJ19}, and trade-offs between information and cost~\cite{MillerP15}. Studies have also considered time and energy 
trade-offs~\cite{czyzowiczICALP,czyzowicz2021energy}, and search-and-fetch problems in two dimensions~\cite{kranakis2019search,GEORGIOU202118fetch1}.

\subsection{Closely Related Work, Improvements and Significance of New Contributions}

Our work extends and improves a series of studies on search-type problems where possible hidden target locations lie on a unit disk, assuming instantaneous communication between agents under the wireless model. The initial study designed search trajectories to minimize the time for the last agent to reach the target, known as the disk evacuation problem~\cite{CzyzowiczGGKMP14}. This foundational study yielded mostly optimal results.

Subsequent research focused on designing search trajectories for a distinguished agent to evacuate quickly, with other agents, termed Servants, assisting. These disk priority evacuation problems were examined for $k=1,2,3$ Servants~\cite{czyzowicz2020priority123} and for $k\geq 4$ Servants~\cite{czyzowicz2020priority4}. Notably, significant gaps existed between the upper and lower bounds. The lower bound of 4.38962 for $k=1$ Servant relied on lower bounds for evacuating the significant agent from a $6$-gon, with the $n$-gon (as $n$ approaches infinity) approximating the disk priority evacuation problem.\footnote{The possible target placements on a disk are uncountable many, whereas for every $n$, the vertices of an $n$-gon are finite many. However, for every $\epsilon>0$, there is large enough $n$  so that every target placement on the disk is no more than $\epsilon$ away from a target placement on the $n$-gon.}
A tight upper bound for the $6$-gon priority evacuation was also provided.

An extension introduced in~\cite{GW24-iwoca} considered the $w$-weighted group search on a disk with two agents ($w \in [0,1]$), where $w=0$ corresponds to the disk priority evacuation problem with $k=1$ Servant. The paper provided upper and lower bounds, with gaps diminishing as $w \rightarrow 0$. Their lower bound relied on $w$-weighted group search bounds on $n$-gons with $n \leq 7$, but no upper bounds for the discrete domain were reported. They also provided lower bounds for the $n$-gon priority evacuation problem with $n \leq 9$, improving the~\cite{czyzowicz2020priority123} lower bound to 4.46798 for the disk priority evacuation problem with $k=1$ Servant.

We derive a number of improved results:
(i) We provide upper and lower bounds for the $n$-gon priority evacuation problem with one Servant for $n \leq 13$, achieving nearly tight or tight results (previously only $n=6$ upper bounds and $n \leq 9$ lower bounds were known).
(ii) We study priority evacuation on $n_k$-gons for $k=2,3,4$ Servants, with $n_2 \leq 11$, $n_3 \leq 9$, and $n_4 \leq 10$, offering tight or nearly tight bounds. No prior results were known.
(iii) We improve the best lower bound for the disk priority evacuation problem with $1$ Servant from 4.46798~\cite{GW24-iwoca} to 4.64666.
(iv) We present the first improved lower bound for the disk evacuation problem with $2$ Servants from 3.6307~\cite{czyzowicz2020priority123} to 3.65332.
(v) We improve the lower bounds for the disk $w$-weighted group search problem of~\cite{GW24-iwoca} for all $w \in [0,0.7]$.
(vi) Motivated by the fact that the previous improved lower bounds were obtained as corollaries to the 
$w$-weighted search problem on $n$-gons for $n=12$, we are the first to provide upper and lower bounds for the $11$-gon and the $12$-gon $w$-weighted evacuation problem for all $w \in [0,1]$, achieving tight or nearly tight results.

Our priority evacuation upper bounds are accompanied with search trajectories that resemble the best trajectories known for disk evacuation problems. Despite remaining gaps, this gives further indication that the known search trajectories may be optimal.

\subsection{Paper Organization}
\label{sec: paper organization}

In Section~\ref{sec: Notation and Terminology}, we introduce key notation and terminology. 
The formal definition of the problems we study, along with a formal description of our results can be found in Section~\ref{sec: definition and results}. These results pertain to the $n$-gon priority evacuation problem with $k$ Servants \pe{n}{k}, the disk priority evacuation problem with $k$ Servants \de{k}, the $n$-gon $w$-weighted search problem \wpe{w}{n}, and the disk $w$-weighted search problem \wde{w}. The proofs of our results for all these problems appear in the subsequent sections. 
Indeed, in Section~\ref{sec: formulations} we present formulations and relaxations for solving \pe{n}{k} (the same machinery is also applied later to~\wpe{w}{n}). 
Then, in Section~\ref{sec: lower bounds to pe} we discuss lower bounds on \pe{n}{k}, and in Section~\ref{sec: lower bounds to pe} upper bounds on the same problem. 
Equipped with the lower bounds on~\pe{n}{k}, we show improved lower bounds on~\de{k}, $k=1,2$ in Section~\ref{sec: lower bounds to de}. Then, in Section~\ref{sec: w-weighted results} we justify the reported upper and lower bounds for problems \wpe{w}{n}~and \wde{w}. Finally, in Section~\ref{sec: discussion}, we conclude with some open questions. 
Due to space limitations, many details have been moved to the appendix, and proper citations can be found for any omissions.

\ignore{
\pe{n}{k}		n-gon priority evacuation with k Servants \\
\de{k}		disk evacuation with k Servants \\
\wde{w}		weighted disk evacuation with 1 servant, with weight $w=0.5$ \\
\wpe{w}{n}	n-gon weighted evacuation with 1 servant and weight $w=0.5$ \\
\wde{0}=\de{1}
\wpe{0}{n} = \pe{n}{1}
}

\section{Preliminaries}
\label{sec: preliminaries}

\subsection{Notation \& Terminology}
\label{sec: Notation and Terminology}
We use $\|\cdot\|_2$ to denote the Euclidean norm over $\mathbb{R}^2$. For $n\in \mathbb{N}$, let $\mathcal{P}_n$ denote the set of all permutations of $\{1,\ldots, n\}$. We define $\mathcal{B}^n_k$ as the set of all $n$-dimensional $(k+1)$-ary strings, i.e., $\{0,\ldots,k\}^n$, so $\mathcal{B}^n_1$ is the set of $n$-dimensional binary strings.

We use the term \emph{unit speed trajectory} to refer to a continuous and differentiable function $\tau: \mathbb{R}_+ \to \mathbb{R}^2$ that induces a speed of at most 1. Specifically, if $\tau(t)=(\tau_1(t), \tau_2(t))$, then $\tau(t)$ is 1-Lipschitz continuous, meaning it satisfies $\| \tau(t_1)- \tau(t_2) \|_2 \leq |t_1 - t_2|$ for all $t_1, t_2 \in \mathbb{R}_+$.

In this work, $n$-gons exclusively refer to regular $n$-polygons inscribed in circles of radius 1. For convenience, we consider $n$-gons and the circle as embedded in the 2-dimensional Euclidean space. Thus, for a fixed $n$, the vertices of the $n$-gon are represented as
$
V^n_i := \left( \cos\left(2i\pi/n\right), \sin\left(2i\pi/n\right) \right), 
i=1,\ldots, n.
$
It follows that for all $i,j \in \{1,\ldots,n\}$, we have
$
\|V^n_i - V^n_j\|_2 = 2 \sin\left( \tfrac{\pi}{n} \cdot \mathrm{mod}(|i-j|,n) \right),
$
and that the edges of the $n$-gon have length $2 \sin\left( \tfrac{\pi}{n} \right)$.

Parameter $n$ will denote the number of vertices of the $n$-gon, and $k$ will denote the number of Servants in our search problem involving $k+1$ mobile agents. The following set will be useful in our later formulations:
$
\mathcal{X}^n_k := \{0,\ldots,k\} \times \{1,\ldots,n\},
$
which corresponds to pairs of agents and $n$-gon vertices in the multi-agent search problem we are considering.

\subsection{Problem Definition \& Main Contributions Made Formal}
\label{sec: definition and results}

%\subsubsection{Polygon Priority Evacuation:}
\paragraph{Polygon Priority Evacuation:}
We study the \emph{Polygon Priority Evacuation Problem \pe{n}{k}} on $n$-gons with $k$ Servants, a multi-agent search problem where the hidden target lies in a discrete domain, specifically $n$-gons. In this search problem, the host space is the Euclidean 2-dimensional space (modeled for convenience as a Cartesian plane), and the searchers are $k+1$ unit speed \emph{agents}. Among these agents, one is labeled $0$ and is distinguished as the \emph{Queen}, while the other agents, labeled $1,2,\ldots,k$, are called \emph{Servants}. We consider a regular $n$-gon with vertices $V^n_i$, $i=1,\ldots,n$, inscribed in a unit radius disk and centered at the origin $O=(0,0)$. The orientation of the $n$-gon is known to the agents, i.e. to the algorithm.

Agent movements are determined by unit speed trajectories $\tau_i:\mathbb{R}_+ \to \mathbb{R}^2$, where the initial placements of the agents $\tau_i(0)$ are algorithmic choices, $i=0,\ldots,k$. The agents' movements are \textit{feasible} if for each polygon vertex $j \in \{1,\ldots,n\}$, there exists an agent $i \in \{0,\ldots,k\}$ and time $t'=t'(j) \in \mathbb{R}_+$ such that $\tau_i(t') = V^n_{j}$ (i.e., every polygon vertex is eventually visited by some agent). For each polygon vertex $j \in \{1,\ldots,n\}$, we denote by $T_j$ the smallest such $t'$, calling it the \emph{visitation time of vertex $j$}. The cost of the feasible solution $\{\tau_i\}_{i \in \{0,\ldots,k\}}$, referred to as the \textit{priority evacuation cost}, is defined as
$
\max_{j \in \{1,\ldots,n\} }
\left\{
T_j +\| V^n_j - \tau_0( T_j ) \|_2
\right\}
$
and the objective is to minimize that cost.

Next, we provide a high-level explanation of the above model. By definition, the agents' specifications correspond to the so-called wireless communication model, which allows them to share information instantaneously. All $k+1$ agents contribute to searching for a hidden target, often referred to as the \emph{exit}, located at one of the vertices of the $n$-gon. 
%By subtracting $1$ unit of time from the evacuation cost, we may also assume that all agents are initially placed equidistant from all vertices, i.e. at the origin $O=(0,0)$. 
The optimal solution to the problem, should the position of the exit be known, is $1$. However, in this online problem, the hidden exit is identified only when any of the $k+1$ agents visit the corresponding $n$-gon vertex (unknown to the agents), and subsequently, all agents are notified accordingly. For an exit placement, the cost of a solution is given by the time that the Queen \emph{evacuates}, i.e., when she reaches the hidden exit, ignoring thereafter the whereabouts of the $k$ Servants. %In that sense, the servants effectively search for the hidden target (together with the Queen), but their role is to facilitate the Queen's evacuation as soon as possible. 
Compatible with worst-case analysis, the overall performance of a feasible solution is defined as the worst-case Queen evacuation time over all exit placements.

In this work, we provide upper and lower bounds on \pe{n}{k} with $k=1,2,3,4$ Servants over $n$-gons, and for various values of $n$. All our results are summarized in the next theorem, which is our first main contribution.

\begin{theorem}\footnote{Our positive and negative results provide only the first 5 digits of our computations, even though our numerical evaluations extend to at least 10 digits of accuracy. Often, we also have closed-form expressions, involving algebraic and trigonometric operations, that describe these numbers. However, for larger values of $n$, these expressions become too extensive to be informative, and hence we omit them.}
\label{thm: n-gon k Servants upper and lower bounds}
For $k=1,2,3,4$, and for various values of $n$, 
Priority Evacuation Problem \pe{n}{k} can be solved in time $u^n_k$.
Also, no algorithm for the problem has evacuation cost less than $l^n_k$.
The upper bounds $u^n_k$, and the lower bounds $l^n_k$ appear in Table~\ref{tab: summary of polygon evacuation results}.
Moreover, for $n$-gons $n=12,13$, we have 
$u^{12}_1 =3.38511$, $l^{12}_1 =3.38486$,  and $u^{13}_1 =3.36362$, $l^{13}_1 =3.36361$. 
\begin{table}[h]
\centering
\begin{small}
\begin{tabular}{|c|c|c|c|c|c|c|c|c|c|c|c|}
\hline
\diagbox{$k$}{$n$} & 3 & 4 & 5 & 6 & 7 & 8 & 9 & 10 & 11  \\ \hline
1 & \makecell{ 1.73205\\ 1.73205 \\ \textbf{1.}} & \makecell{ 2.14626\\ 2.12132 \\ \textbf{1.01175}} & \makecell{ 2.71441\\ 2.71441 \\ \textbf{1.} } & \makecell{ 2.86603\\ 2.86602 \\ \textbf{1.} } & \makecell{ 2.97391\\ 2.95125 \\ \textbf{1.00768} } & \makecell{ 3.02649\\ 3.00320 \\ \textbf{1.00776} } & \makecell{ 3.21891\\ 3.21891 \\ \textbf{1.} } & \makecell{ 3.21549\\ 3.18712 \\ \textbf{1.0089} } & \makecell{ 3.35919\\ 3.35577 \\ \textbf{1.00102} }  \\ \hline
2 & \makecell{ 1.00000\\ 1.00000 \\ \textbf{1.} } & \makecell{ 1.70711\\ 1.70710\\ \textbf{1.} } & \makecell{ 1.90211\\ 1.90211 \\ \textbf{1.} } & \makecell{ 2.00000\\  2.00000\\ \textbf{1.} } & \makecell{ 2.14027\\ 2.08348 \\ \textbf{1.02726} } & \makecell{ 2.25951\\ 2.23784 \\ \textbf{1.00968} } & \makecell{ 2.37176\\ 2.35288 \\ \textbf{1.00802} } & \makecell{ 2.38956\\ 2.37810 \\ \textbf{1.00482} } & \makecell{ 2.50211\\ 2.46291 \\ \textbf{1.01592} }  \\ \hline
3 & \makecell{ 1.00000\\ 1.00000 \\ \textbf{1.} } & \makecell{ 1.00000\\ 1.00000 \\ \textbf{1.} } & \makecell{ 1.55017\\ 1.53884 \\ \textbf{1.00736} } & \makecell{ 2.00000\\  1.86602 \\ \textbf{1.0718} } & \makecell{ 1.86777\\ 1.84269 \\ \textbf{1.01361} } & \makecell{ 1.91342\\ 1.91341 \\ \textbf{1.} } & \makecell{ 1.91362\\ 1.85083 \\ \textbf{1.03393} } & \makecell{ NA \\ NA \\ \textbf{NA} } & \makecell{ NA \\ NA \\ \textbf{NA} }  \\ \hline
4 & \makecell{ 1.00000\\ 1.00000 \\ \textbf{1.} } & \makecell{ 1.00000\\ 1.00000 \\ \textbf{1.} } & \makecell{ 1.00000\\ 1.00000 \\ \textbf{1.} } & \makecell{ 1.5000\\  1.50000 \\ \textbf{1.} } & \makecell{ 1.64960\\ 1.64959 \\ \textbf{1.} } & \makecell{ 1.76537\\ 1.68924 \\ \textbf{1.04507} } & \makecell{ 1.68404\\ 1.66884 \\ \textbf{1.00911} } & \makecell{ 1.65153\\ 1.61803 \\ \textbf{1.02070} } & \makecell{ NA \\ NA \\ \textbf{NA} }  \\ \hline
\end{tabular}
\caption{Summary of our upper and lower bound results for \pe{n}{k}. 
Every entry is populated with the upper bound $u^n_k$, the lower bound known $l^n_k$ , and the corresponding optimality gap, i.e. $u^n_k/l^n_k$. Therefore a reported gap of $1.0$ corresponds to an optimal result. For $n=12,13$, we only have results for \pe{n}{1}, and we report them separately.
However, for these $n$-gons, the optimality gaps are $1.00007$ and $1.0$, respectively. 
The only values known before were $u^6_1$, and $l^n_1$ for $n\leq 9$. 
}
\label{tab: summary of polygon evacuation results}
\end{small}
\end{table}
\end{theorem}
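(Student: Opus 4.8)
The plan is to read each cell of Table~\ref{tab: summary of polygon evacuation results} as two independent assertions --- an upper bound $u^n_k$ witnessed by an explicit algorithm, and a lower bound $l^n_k$ certified against \emph{all} algorithms --- and to prove them by two different machines, both built on the formulations of Section~\ref{sec: formulations}. The common starting point is to separate the continuous part of the problem (the trajectories $\tau_0,\ldots,\tau_k$) from its combinatorial skeleton. A feasible solution induces, for each vertex $j$, the index of the agent that first reaches $V^n_j$ and the order in which that agent's assigned vertices are visited; I would encode the assignment by a string in $\mathcal{B}^n_k$ and the visiting order by permutations in $\mathcal{P}_n$, so that every algorithm falls into one of finitely many \emph{patterns}. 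Since the dihedral symmetry of the $n$-gon acts on patterns without changing cost, I would first quotient by this symmetry to keep the enumeration manageable.

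For the lower bounds I would fix a pattern and relax the inner minimization. Full trajectory feasibility is replaced by its necessary consequences at the finitely many visitation instants: the Queen's and Servants' positions at successive $T_j$'s must be $1$-Lipschitz-reachable from one another and from the chosen starting points, i.e. constrained only by the triangle-style inequalities $\|\tau_i(T_{j'})-\tau_i(T_j)\|_2 \le |T_{j'}-T_j|$ together with $\tau_i(T_j)=V^n_j$ when agent $i$ is the finder. The adversary's $\max_j$ is realized by declaring the exit to be the last vertex the pattern visits, which couples the visitation schedule to the term $\|V^n_j-\tau_0(T_j)\|_2$. For each pattern this yields a finite-dimensional (nonconvex) program whose optimum is a valid lower bound on the cost of every algorithm inducing that pattern; minimizing over the finitely many symmetry classes of patterns then lower-bounds the value of \pe{n}{k} itself. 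The reported $l^n_k$ are the values of these minima, and the sharper separate figures for $n=12,13$ come from running the same program at larger $n$.

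For the upper bounds I would exhibit, for each $(n,k)$, one pattern together with a low-dimensional family of trajectories in the style of the known disk-evacuation curves (agents fanning out along the circle, the Queen held centrally until the exit is revealed and then moving straight to it). Substituting such a family into the cost functional $\max_{j}\{T_j+\|V^n_j-\tau_0(T_j)\|_2\}$ gives a small optimization over the trajectory parameters; solving it --- in closed form for the smallest $n$ and numerically otherwise --- produces the $u^n_k$, and feasibility of the construction certifies them as genuine upper bounds.

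The main obstacle is the lower-bound side. The number of patterns is a priori $(k+1)^n n!$, so symmetry reduction alone is not enough and I expect to need additional structural pruning (e.g.\ discarding patterns in which an agent backtracks, or in which the Queen's finder role is dominated). More seriously, each relaxed program is nonconvex, so certifying that its reported optimum is a true global minimum --- rather than a local one --- is what actually guarantees the lower bound; I would address this with a convexifying reformulation where possible and otherwise with a global or interval solver that returns a rigorous bound. Finally, the claim of \emph{tightness} is not automatic: it requires that the relaxed lower bound and the constructed upper bound for the same $(n,k)$ nearly coincide, so I would have to verify, cell by cell, that the optimal pattern of the relaxation is matched by an implementable trajectory, which is exactly where any residual gap in Table~\ref{tab: summary of polygon evacuation results} comes from.
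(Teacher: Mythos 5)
Your overall architecture matches the paper's: enumerate combinatorial patterns $(s,\rho)\in\mathcal{B}^n_k\times\mathcal{P}_n$, prune by symmetry and domination, prove lower bounds by relaxing each pattern's program, and prove upper bounds by explicit trajectories whose few parameters solve small non-linear systems. But there is a genuine gap on the lower-bound side, and it is exactly the step the paper's machinery is built to supply. Your relaxation keeps the agents' positions at the visitation instants as points of $\mathbb{R}^2$ subject to Lipschitz/triangle constraints, so each per-pattern program remains nonconvex; you then defer to ``a convexifying reformulation where possible and otherwise a global or interval solver.'' The paper's Lemma~\ref{lem: relaxation is LP} is precisely that convexifying reformulation, and it is not generic: the variables are the pairwise distances $d_{r_1,r_2}$ of an \emph{abstract} finite metric space on $\mathcal{X}^n_k$, constrained only by the metric axioms, with the distances among finder--vertex pairs fixed to the true polygon distances; the requirement of isometric embeddability into $\left(\reals^2,\|\cdot\|_2\right)$ (constraint~\eqref{eq: emb const}) is simply dropped. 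What remains is a \emph{Linear Program}, so every numerical solution carries a dual certificate of global optimality. Without this step your plan neither scales nor certifies: the enumeration reaches roughly $5.1\cdot 10^{13}$ configurations for $(n,k)=(13,1)$, and, as the paper notes for the nonconvex formulation, off-the-shelf solvers ``can only ensure that the returned solution is a local optimizer'' --- and a local optimum of your relaxed program is not a lower bound on anything, so every reported $l^n_k$ would be unjustified.

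Two smaller points. First, your claim that the adversary's $\max_j$ ``is realized by declaring the exit to be the last vertex the pattern visits'' is wrong as written: the worst-case placement is typically an interior vertex of the visitation order (see the starred entries at $i=5,7$ in Table~\ref{9gonk1tab}); the paper uses the exit-found-last execution only to pin down the common trajectory, while the objective still maximizes over all placements. Second, the LP relaxation is not always tight --- the paper observes that the optimal LP metric can have minimum distortion provably greater than $1$ into $\left(\reals^2, \ell_2\right)$, which is why several cells of Table~\ref{tab: summary of polygon evacuation results} retain a gap; your cell-by-cell matching of relaxation value against an implementable trajectory is the right response to this, and it is what the paper does.
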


\paragraph{Disk Priority Evacuation:}
Our second contribution pertains to \emph{improved} lower bounds for the \emph{disk priority evacuation problem with $k$-Servants}, which we denote by \de{k}. 
The problem \de{k} was first considered in~\cite{czyzowicz2020priority123} for $k=1,2,3$ Servants, and in~\cite{czyzowicz2020priority4} for $k \geq 4$ Servants. For completeness, we include the definition of the problem by adapting the description of \pe{n}{k}. The primary difference between the two problems is that in the disk evacuation problem, the hidden exit can lie \emph{anywhere} on the perimeter of the unit radius disk. 
Moreover, the agents' starting positions are at the origin, i.e., for the feasible unit speed trajectories $\tau_i:\mathbb{R}_+ \to \mathbb{R}^2$, we have $\tau_i(0)=O=(0,0)$, for $i=0,\ldots,k$. We quantify the positioning of the exit by some $\theta \in [0,2\pi]$, corresponding to the exit placement
$
P_\theta := \left( \cos(\theta), \sin(\theta) \right).
$
Then, for a feasible solution $\{\tau_i\}_{i=0,\ldots,k}$, we require that for each $\theta$, there exists an agent $i \in \{0,\ldots,k\}$ and a time $t' \in \mathbb{R}_+$ such that $\tau_i(t') = P_\theta$. For each exit placement $\theta$, we denote by $T({\theta})$ the smallest corresponding time $t'$. The evacuation cost of the feasible solution $\{\tau_i\}_{i \in \{0,\ldots,k\}}$ is then defined as
$
\sup_{\theta \in [0,2\pi)}
\left\{
T({\theta}) + \| P_{\theta} - \tau_0(T({\theta})) \|_2
\right\}
$
and the objective is to minimize this cost.

Our second main contribution are new lower bounds for \de{k} which are summarized in the next statement. 
\begin{theorem}
\label{thm: improved lower bounds summary}
No algorithm for \de{k}~has evacuation cost
less than $4.64666$ for $k=1$, 
and less than $3.65332$ for $k=2$. 
%\todo{no better results yet for $k\geq3$}
%and less than $3.34528$ for $k=3$ Servants.
\end{theorem}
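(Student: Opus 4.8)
The plan is to bound \de{k} from below by reducing it to a discretized, relaxed optimization built on the \pe{n}{k} machinery of Section~\ref{sec: formulations}. I would begin with the naive reduction: fix an arbitrary feasible disk algorithm $\{\tau_i\}_{i=0}^{k}$ and inscribe a regular $n$-gon in the unit circle. Because a disk solution visits \emph{every} point of the circle, it in particular visits all $n$ vertices, so it induces a feasible \pe{n}{k} schedule (using the same trajectories) whose cost is at most the disk cost --- the disk cost is a supremum over all angles, the polygon cost only a maximum over the $n$ vertices. Hence the disk cost is at least $l^n_k$. However, the entries of Table~\ref{tab: summary of polygon evacuation results} top out near $3.36$ for $k=1$ and $2.46$ for $k=2$, far short of the targets, so this reduction cannot succeed on its own.

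The missing ingredient is that a disk algorithm must sweep the entire circle, not merely the $n$ vertices. I would encode this as a covering constraint on the visitation times. Since each agent starts at the origin, it reaches the unit circle no earlier than time $1$, and thereafter can trace arc-length at speed at most $1$; hence the union of arcs covered by the $k+1$ agents up to any time $t$ has total length at most $(k+1)\max\{0,\,t-1\}$. Forcing this union to reach $2\pi$ already yields a finish time of at least $1 + 2\pi/(k+1)$ (about $4.14$ for $k=1$ and $3.09$ for $k=2$), which by itself \emph{exceeds} the pure polygon bound $l^n_k$. The reason for combining the two is that sweeping efficiently and keeping the Queen positioned to reach the last-revealed vertex are in tension: no algorithm can do both optimally, so the combined optimum strictly exceeds either bound in isolation, and I would aim for it to reach $4.64666$ and $3.65332$.

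Concretely, I would augment the \pe{n}{k} lower-bound formulation with arc-covering inequalities relating the visitation times $T_j$ to the arcs each agent can have swept by time $T_j$, obtaining a relaxation of \de{k} whose optimum is a valid lower bound on the disk cost. I expect this optimum to be monotone non-decreasing in $n$, since finer discretization tightens the relaxation; this is consistent with the improvement from the earlier $4.46798$ (obtained from $n\le 9$ in~\cite{GW24-iwoca}) to $4.64666$ here, using the new bounds up to $n=13$ for $k=1$ and $n=11$ for $k=2$. I would then solve the augmented formulation numerically and, to make the result a rigorous lower bound rather than an estimate, exhibit an explicit dual certificate, exactly as the $l^n_k$ entries are certified.

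The step I expect to be the main obstacle is the formalization of the covering constraint so that it is \emph{simultaneously} (i) a genuine relaxation of the continuous problem --- guaranteeing that the finite-$n$ optimum is a valid lower bound and not merely a numerical approximation --- and (ii) tight enough, once coupled with the polygon evacuation inequalities, to push the bound past both the pure covering value $1+2\pi/(k+1)$ and the pure evacuation value $l^n_k$. Capturing the interaction between which agent sweeps which arc and where the Queen sits at each revelation time, without over-constraining (which would invalidate the bound) or under-constraining (which would weaken it), is the delicate part, together with certifying the resulting large optimization to the claimed precision.
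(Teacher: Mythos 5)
You have correctly isolated the two ingredients---the certified polygon bounds $l^n_k$ and the fact that sweeping the whole perimeter from the centre costs at least $1+2\pi/(k+1)$---and you correctly observe that neither suffices alone. But the step you defer as your ``main obstacle'' is precisely the one idea the proof needs, and the paper resolves it in a way your plan does not contain: the two bounds are made to \emph{add}, not by augmenting the LP, but by an adversarial-rotation (pigeonhole) argument, Lemma~\ref{lem: n-gon lb gives disk lb}. Run any algorithm for \de{k} for time $1+2\pi/((k+1)n)-\epsilon$; since each agent needs time $1$ to reach the perimeter, the agents have swept at most $(k+1)\bigl(2\pi/((k+1)n)-\epsilon\bigr)<2\pi/n$ of the circle. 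As the rotations $\phi\in[0,2\pi/n)$ of an inscribed $n$-gon trace disjoint vertex-arcs covering the whole perimeter, some rotation has \emph{all} $n$ of its vertices unvisited at that time. The adversary then confines the exit to that polygon's vertices, and---crucially, because \pe{n}{k}~allows arbitrary initial agent placements---the remaining time is at least $l^n_k$ no matter where the agents happen to be. This yields the additive bound $1+2\pi/((k+1)n)+l^n_k$, which, using Table~\ref{tab: lb ngons k=1} at $(n,k)=(12,1)$ and Table~\ref{tab: lb ngons k=2} at $(n,k)=(11,2)$, evaluates exactly to $4.64666$ and $3.65332$. No new optimization, covering inequalities, or dual certificates are required beyond the already-certified $l^n_k$.

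Your proposed substitute---covering inequalities added to the LP of a \emph{fixed} inscribed $n$-gon---does not capture this, because for a fixed polygon nothing forces its vertices to be visited late: a disk algorithm may prioritize the $n$ fixed vertices (keeping the Queen well placed for them) and only then finish covering the circle, so the vertex part of the objective and the covering time decouple, and the natural relaxation degenerates toward $\max\{1+l^n_k,\;1+2\pi/(k+1)\}$, far below the targets. It is exactly the adversary's freedom to choose the rotation \emph{after} the sweep budget is exhausted that turns this maximum into a sum, and your formulation has no mechanism for it. Two symptoms confirm the gap: first, the true bound $1+2\pi/((k+1)n)+l^n_k$ is \emph{not} monotone in $n$ (Table~\ref{tab: summary disk lb based on old method}: for $k=1$ it is $4.64666$ at $n=12$ but drops to $4.60528$ at $n=13$), contradicting the monotonicity you expect; second, under that expectation you would report the $n=13$ value for $k=1$, which does not match the claimed constant, whereas the additive formula at $n=12$ and $n=11$ reproduces both stated numbers exactly.
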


For an informative perspective, we summarize in Table~\ref{tab:previous positive and negative disk evac results}
all previous best upper and lower bounds known for \de{k}.
\begin{table}[h]
\centering
\begin{tabular}{|c|c|c|c|c|}
\hline
		$k$		& $1$ 						& $2$ 						& $3$ 						& $4$ \\ \hline
Upper Bound & $4.81854$~\cite{czyzowicz2020priority123}	&  $3.8327$~\cite{czyzowicz2020priority123}	& $3.3738$~\cite{czyzowicz2020priority123}			&  3.30129~\cite{czyzowicz2020priority4} \\ \hline
Lower Bound & $4.56798$~\cite{GW24-iwoca}	&  $3.6307$~\cite{czyzowicz2020priority123}	& $3.2017$~\cite{czyzowicz2020priority123}	&  $2.91322$ ~\cite{czyzowicz2020priority4} \\ \hline
\end{tabular}
\caption{Summary of best upper and lower bound results known for \de{k}, prior to our work.
Contrast the upper and lower bound values to the improved lower bounds of Theorem~\ref{thm: improved lower bounds summary}.}
\label{tab:previous positive and negative disk evac results}
\end{table}

\paragraph{Polygon and Disk $w$-Weighted Search}
Our third contribution pertains to a search problem where the objective is the arithmetic weighted average of the termination times of $2$ mobile agents, first considered in~\cite{GW24-iwoca}. In this problem, the hidden item lies either in the unit radius disk and the $2$ agents start from the center of the disk, or it lies on a vertex of an $n$-gon, and the initial placement of the $2$ agents is an algorithmic choice. Having the feasible trajectories identified exactly as in \pe{n}{1}~and \de{1}, the cost of the solution is instead defined as the arithmetic weighted average of the times that the $2$ agents reach the hidden item. Indeed, for each $w\in [0,1]$ and feasible trajectories $\tau_0,\tau_1$, and for the exit placement $x$, let $T_i(x)$ denote the time that agent $i=0,1$ reaches $x$. The $w$-weighted search cost for input $x$ is defined as $(T_0(x) +w\cdot T_1(x))/(1+w)$. 

For the \emph{$n$-Gon $w$-Weighted Search Problem \wpe{w}{n}}, the objective is to minimize $\max_x \tfrac{T_0(x) +w\cdot T_1(x)}{1+w}$, where $x$ ranges over all $n$ vertices of the $n$-gon.
Similarly, for the \emph{Disk $w$-Weighted Search Problem \wde{w}}, the objective is to minimize $\sup_x \tfrac{T_0(x) +w\cdot T_1(x)}{1+w}$ and the supremum is considered over all points $x$ on the perimeter of the disk. 
From these definitions, it is immediate that setting $w=0$ results in the priority evacuation objective with $1$ Servant, i.e. that 
\wpe{0}{n}~is equivalent to problem \pe{n}{1}, 
and that \wde{0}~ is equivalent to problem \de{1}.

\ignore{
\pe{n}{k}		n-gon priority evacuation with k servants \\
\de{k}		disk evacuation with k servants \\
\wde{w}		weighted disk evacuation with 1 servant, with weight $w=0.5$ \\
\wpe{w}{n}	n-gon weighted evacuation with 1 servant and weight $w=0.5$ \\
\wde{0}=\de{1}
\wpe{0}{n} = \pe{n}{1}
}

The main contribution pertaining to the $w$-weighted search objective is an improvement to the previously best lower bound known for the disk. The result is obtained numerically, and is quantified in the next statement. 

\begin{theorem}
\label{thm: new w-weighted lower bound}
Each of the purple and brown curves in Figure~\ref{fig: wWeightedDisk} is a lower bound to the \wde{w}, for all $w\in [0,1]$.\footnote{The lower bounds were computed for values of $w$ from $0$ to $1$ with step size $0.01$.}
\end{theorem}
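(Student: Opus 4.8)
\textbf{Proof proposal for Theorem~\ref{thm: new w-weighted lower bound}.}

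The plan is to transfer lower bounds from the discrete ($n$-gon) domain to the continuous (disk) domain, exploiting the approximation relationship noted in the paper's introduction: every target placement on the disk lies within $\epsilon$ of an $n$-gon vertex for $n$ large. The two reported curves (purple and brown) will correspond to the lower bounds obtained from the $11$-gon and $12$-gon $w$-weighted evacuation problems, \wpe{w}{11}~and \wpe{w}{12}, whose nearly tight bounds are promised as contribution (vi). So the first step is to establish, for each fixed $w$, a lower bound $\ell(w)$ on the cost of \wpe{w}{n}~for the relevant $n \in \{11,12\}$, using the formulation/relaxation machinery of Section~\ref{sec: formulations} and the lower-bound arguments of Section~\ref{sec: lower bounds to pe}.

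The key reduction I would carry out is a domination argument showing that $\wde{w} \geq \wpe{w}{n}$ for every $n$ and every $w$. The intuition is that any feasible pair of trajectories $\tau_0,\tau_1$ for the disk problem must visit (or bring the Queen to) \emph{every} point of the circle, in particular each of the $n$ vertices $V^n_1,\ldots,V^n_n$; hence the same trajectories are feasible for the $n$-gon instance, and since the $n$-gon objective takes a $\max$ over a \emph{subset} of the placements over which the disk objective takes a $\sup$, the disk cost is at least the $n$-gon cost. This inequality holds for the optima as well, so $\operatorname{cost}(\wde{w}) \geq \operatorname{cost}(\wpe{w}{n}) \geq \ell(w)$. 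This step is the crux: one must be careful that the starting positions differ (the disk problem fixes $\tau_i(0)=O$, whereas the $n$-gon problem allows arbitrary starts), so the feasibility transfer goes in the correct direction only because a disk-feasible solution, which starts at $O$, is \emph{also} a valid (if not optimal) $n$-gon-feasible solution.

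With the reduction in hand, I would then evaluate $\ell(w)$ numerically for $w$ ranging over $\{0,0.01,0.02,\ldots,1\}$ as indicated in the footnote, for each of $n=11$ and $n=12$, producing the two piecewise curves. Plotting both and taking them together (each is individually a valid lower bound, so the theorem claims both curves lie below the true optimal cost) yields Figure~\ref{fig: wWeightedDisk}. The main obstacle I anticipate is \emph{rigor in the numerical lower bound} for the $n$-gon problem itself: establishing $\operatorname{cost}(\wpe{w}{n}) \geq \ell(w)$ requires that the optimization giving $\ell(w)$ be a genuine relaxation (a minimization over a superset of strategies, or a dual/LP bound), not merely the value of a candidate solution. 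I would therefore lean on the relaxation framework of Section~\ref{sec: formulations}, arguing that the quantity computed is a provable lower bound to the true $n$-gon cost for each sampled $w$, and then invoke continuity/monotonicity in $w$ to justify that the finitely many sampled points faithfully trace a valid lower-bound curve across all of $[0,1]$. The secondary obstacle is purely presentational: ensuring the two curves are correctly identified with $n=11$ versus $n=12$ and that whichever is larger is highlighted as the stronger bound for each regime of $w$.
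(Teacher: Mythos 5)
Your core reduction has a genuine gap: it is valid but quantitatively far too weak to prove the theorem, because the curves in Figure~\ref{fig: wWeightedDisk} are not the raw $n$-gon lower bounds $\ell_n(w) = \min_{(s,\rho)} wobj^n(s,\rho)$; they are $1 + \tfrac{\pi}{n} + \ell_n(w)$. Your domination argument (a disk-feasible pair of trajectories starts at $O$, which is one allowed starting configuration for the $n$-gon problem, and visits all $n$ vertices) only yields that the optimal cost of \wde{w}~is at least the optimal cost of \wpe{w}{n}, hence at least $\ell_n(w)$. This inequality is correct but lossy: optimal $n$-gon algorithms exploit the free choice of starting positions to begin with agents already sitting on polygon vertices, so $\ell_n(w)$ falls short of the disk cost by roughly the time a disk algorithm must spend before the adversary is forced to commit. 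Concretely, at $w=0$ one has $\ell_{12}(0) \approx 3.385$, which is below even the universal $1+\pi \approx 4.14$ bound of~\cite{GW24-iwoca}, whereas the plotted curve value is $\approx 4.647$. So your argument establishes bounds that lie strictly below the purple/brown curves and improves nothing; it cannot certify that those curves are lower bounds.

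The missing idea is the reduction of Lemma~\ref{lem: lb to disk from ngon w-search} (cited from~\cite{GW24-iwoca}; its $k$-Servant analogue is Lemma~\ref{lem: n-gon lb gives disk lb}, proved in Section~\ref{sec: lower bounds to de}), which exploits the fact that the disk agents start at the \emph{center}. Fix $\epsilon>0$ and run any disk algorithm for time $1+\pi/n-\epsilon$. Each of the two agents needs time $1$ just to reach the perimeter, so the searched portion of the perimeter has measure at most $2(\pi/n-\epsilon)<2\pi/n$. Now rotate an inscribed $n$-gon by $\phi\in[0,2\pi/n)$: the trajectories of its vertices over these rotations sweep disjoint arcs covering the circle, so the set of rotations $\phi$ for which at least one vertex has already been visited has measure at most the searched measure, which is strictly less than $2\pi/n$. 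Hence some rotated copy of the $n$-gon has no visited vertex; the adversary presents this copy, and the remaining cost is at least $\ell_n(w)$ (here your LP-relaxation reasoning, which matches the paper's, does apply). Letting $\epsilon\to 0$ gives the bound $1+\pi/n+\ell_n(w)$, which is exactly what the figure's curves depict for $n=11,12$. Without this center-start rotation argument, the additive $1+\pi/n$ term — the entire source of the claimed improvement — is lost.
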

% see 2024-06-13 WEIGHTED-EVACUATION n-gons - Parallel - LP all code.jl

\begin{figure}[h!]
\centering
\includegraphics[width=7.5cm]{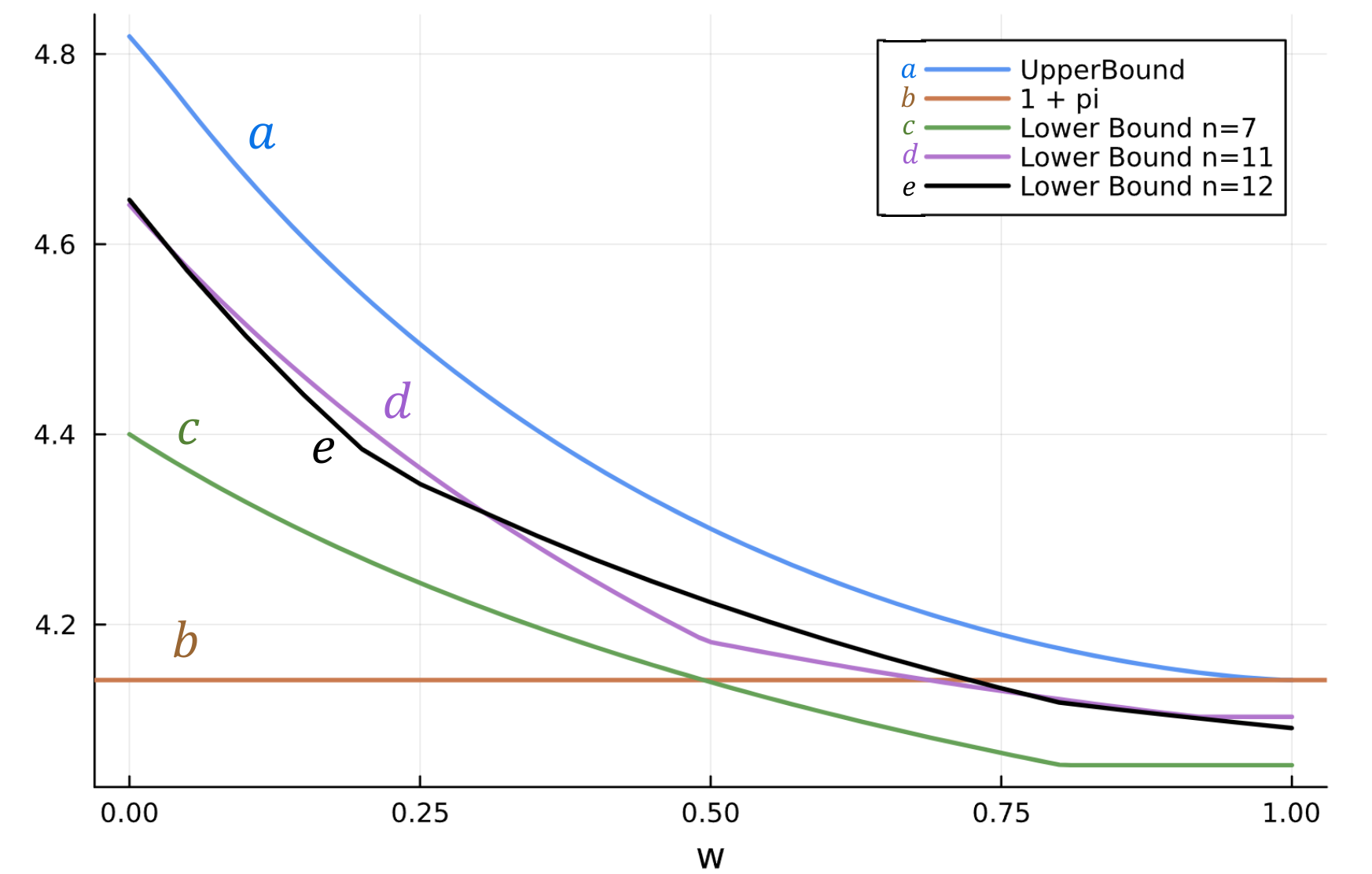}
\caption{Known upper bounds compared to our improved lower bounds for \wde{w}. 
The blue curve (labeled $a$) depicts the best upper bound known~\cite{GW24-iwoca}. The green curve (labeled $c$) depicts the previously best lower bound known, by a reduction to \wpe{w}{7}. The orange line (labeled $b$) depicts a universal lower bound of $1+\pi$ also proved in~\cite{GW24-iwoca}. 
The purple and black curves (labeled $d,e$, respectively) are new lower bounds on \wde{w} by a reductions to \wpe{w}{11}~ and \wpe{w}{12}, respectively (hence the maximum of them applies). 
The lower bounds were calculated for values of $w\in[0,1]$ starting from $0$ and with step size $0.01$. 
}
\label{fig: wWeightedDisk}
\end{figure}

As indicated before, the lower bounds on \wde{w} are obtained via reductions to \wpe{w}{n}. 
The best lower bound achieved as $w$ ranges in $[0,1]$ were obtained for $n=11,12$. We are therefore motivated to report upper and lower bounds for \wpe{w}{11}~and \wpe{w}{12}, demonstrating that our analysis is (nearly) tight. This is quantified in the following statement. 
\begin{theorem}
\label{thm: upper and lower w-weighted ngon}
For values of $w\in [0,1]$, Figure~\ref{fig: wWeightedgon} shows upper and lower bounds 
to the \wpe{w}{n} problem, for $n=11,12$. 
\footnote{The lower bounds were computed for values of $w$ from $0$ to $1$ with step size $0.01$.
The upper bounds were computed for values of $w$ from $0$ to $1$ with step size $0.02$, hence we only depict them as points. 
}
\end{theorem}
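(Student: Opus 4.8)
The plan is to prove the two directions separately, both numerically, by reusing essentially verbatim the formulation-and-relaxation machinery of Section~\ref{sec: formulations} and the lower-bound arguments of Section~\ref{sec: lower bounds to pe}, with the single modification that the objective is replaced by the weighted cost. Since \wpe{0}{n}~is exactly \pe{n}{1}, the weighted problem is a one-parameter generalization of the $k=1$ priority-evacuation problem already treated; the key adaptation is to track, for each vertex $V^n_j$, both the Queen's arrival time $T_0(V^n_j)$ and the Servant's arrival time $T_1(V^n_j)$, and to form the per-vertex cost $\tfrac{T_0 + w\,T_1}{1+w}$ in place of the pure Queen evacuation time. I would fix a grid of $w$-values (step $0.01$ for lower bounds, $0.02$ for upper bounds, matching the footnote), and for each grid point run the corresponding optimization.

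For the lower bound I would invoke the relaxation developed for \pe{n}{1}. The essential observation is that, because the $n$ target locations are discrete, any feasible pair of unit-speed trajectories induces (i) a visitation order, i.e. a permutation in $\mathcal{P}_n$, and (ii) an assignment of each vertex to the agent that first reaches it, i.e. a string in $\mathcal{B}^n_1$ over the index set $\mathcal{X}^n_1$. Conditioned on such a combinatorial choice, the constraints that the trajectories be unit-speed and pass through the prescribed vertices $V^n_j$ reduce to triangle-inequality (and monotonicity) constraints among the visitation times, which I would relax to a tractable program whose optimum is provably a valid lower bound on the true cost. The adversary is then encoded by maximizing over the choice of exit vertex, so the reported lower bound for each $w$ is the value of the resulting min–max program. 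I must argue that every feasible solution of the original problem maps to a feasible point of the relaxation whose weighted cost is no smaller, which is what certifies validity of the numerical output as a genuine lower bound.

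For the upper bound I would exhibit explicit parametric trajectories for the two agents, chosen to resemble the near-optimal disk-evacuation trajectories (as the paper emphasizes), verify that together they visit all $n$ vertices so that the solution is feasible, write the worst-case weighted cost as an explicit function of the finitely many design parameters and of $w$, and optimize those parameters numerically. Plotting the optimized value at each sampled $w$ yields the upper curve of Figure~\ref{fig: wWeightedgon}; since the optimization is over a concrete feasible family, each plotted point is automatically a valid upper bound.

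The main obstacle will be the lower-bound side. The underlying combinatorial optimization ranges over $\mathcal{P}_n$ and $\mathcal{B}^n_1$, so for $n=11,12$ the number of orderings is far too large for brute force; the relaxation must therefore be structured (exploiting the rotational symmetry of the $V^n_j$ and pruning dominated orderings) so that the min–max program is solvable while remaining a certified lower bound rather than a heuristic. A secondary difficulty is that the relaxed program is non-convex in the visitation times, so care is needed to ensure the numerical optimum is a true (or certified-valid) bound uniformly across the $w$-grid, and the resulting lower curve must be shown to lie close enough to the constructed upper curve to justify the claimed near-tightness for both $n=11$ and $n=12$.
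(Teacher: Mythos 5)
Your outline matches the paper's (Section~\ref{sec: w-weighted results}): lower bounds by conditioning on a configuration $(s,\rho)$, relaxing the conditioned program, minimizing over configurations on a $w$-grid of step $0.01$; upper bounds from feasible trajectories on a grid of step $0.02$. But there is a genuine gap on the lower-bound side: you assert that ``the relaxed program is non-convex in the visitation times'' and leave certification of its numerical optimum as an unresolved difficulty. This misses (indeed contradicts) the structural fact the whole lower-bound argument rests on: the $w$-weighted objective $\max_{j}\bigl\{t_j + \tfrac{d_{(0,j),(s_j,j)} + w\, d_{(1,j),(s_j,j)}}{1+w}\bigr\}$ is still \emph{linear} in the distance variables, so, exactly as in Lemma~\ref{lem: relaxation is LP}, dropping only the embeddability constraint~\eqref{eq: emb const} while keeping the metric (triangle-inequality) constraints yields a Linear Program $\textsc{wLP}^n(s,\rho)$. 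Linearity is what allows each instance to be solved to certified global optimality (with a dual gap), so that $\min_{(s,\rho)} wobj^n(s,\rho)$ is a rigorous lower bound. If the relaxation were genuinely non-convex, your plan would be unsound, not merely delicate: a solver applied to a non-convex minimization returns a local minimizer whose value can strictly exceed the global minimum, and such a value is not a valid lower bound on the optimal $(s,\rho)$-algorithm. Non-convexity lives only in the exact formulation~\ref{equa:nlp}, through constraint~\eqref{eq: emb const}, which is never used on the lower-bound side.

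On the upper-bound side your route is sound but differs from the paper's. You propose hand-designed parametric trajectory families (in the style of the \pe{n}{k}~upper bounds of Section~\ref{sec: algo details upper bounds polygon priority evacuation}), optimized numerically for each sampled $w$. The paper instead fixes one configuration $(s,\rho)$ for each of $n=11$ and $n=12$ and runs an interior-point solver (Ipopt) on formulation~\ref{equa:nlp} with the weighted objective; local optimality suffices there because, by the constructive direction of Lemma~\ref{lem: nlp is opt s,rho algo}, any feasible point of the NLP translates into feasible trajectories and hence a valid upper bound. Both are legitimate, and this asymmetry is worth internalizing: lower bounds need certified \emph{global} optima (hence the LP), whereas upper bounds need only \emph{feasibility} (hence a local NLP solve, or your explicit trajectories, is enough). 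With the convexity confusion repaired, your proposal reproduces the theorem.
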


\begin{figure}[h!]
\centering
\includegraphics[width=14.5cm]{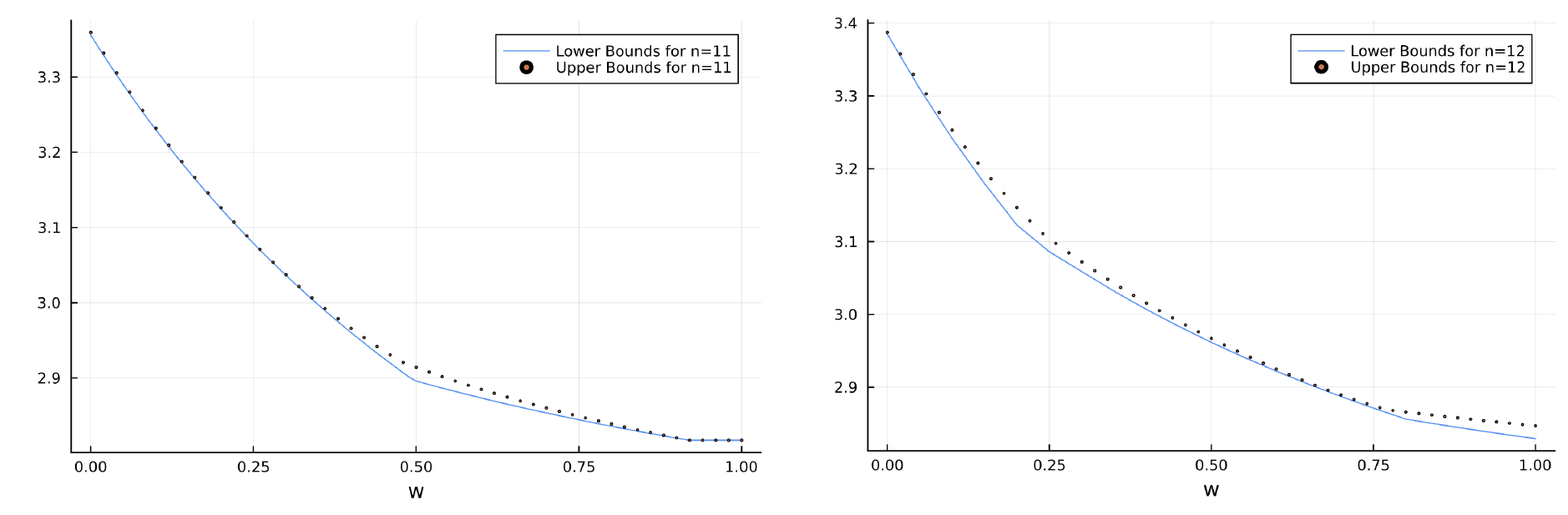}
\caption{
Upper and lower bound obtained for \wpe{w}{11} (left hand-side) and \wpe{w}{12} (right hand-side). Starting from $w=0$, the lower bounds were obtained with $w$ step size of $0.01$ and the upper bounds with $w$ step size of $0.02$. }
\label{fig: wWeightedgon}
\end{figure}

%\cite{GW24-iwoca}
%\cite{czyzowicz2020priority123}
%\cite{czyzowicz2020priority4}

\section{Formulations \& Relaxations to \pe{n}{k}}
\label{sec: formulations}

Our contributions are based on the observation that the optimal solution to the \pe{n}{k} problem can be found using a Non-Linear Program (NLP), assuming we know the order in which the polygon vertices are visited and the identities of the agents who first visit those vertices. This idea was first implemented in~\cite{GW24-iwoca} for $k=1$ Servants, primarily to provide lower bounds for \de{1}, the disk priority evacuation problem with 1 Servant, and for the disk $w$-weighted search problem. Here, we extend this approach to provide upper and lower bounds also for \pe{n}{k} for $k=2,3,4$ and various values of $n$, improving the known lower bounds for \de{1} and introducing the first lower bound improvements for \de{k} for $k=2$.

First, we introduce some terminology to clarify the notation. For any feasible solution $\{\tau_i\}_{i \in \{0,\ldots,k\}}$ to \pe{n}{k}, each vertex $V^n_i$ of the $n$-gon is visited by some agent. Let permutation $\rho \in \mathcal P_n$ be the ordered list of vertices by their visitation times, where vertex $V^n_{\rho_i}$ has the $i$'th smallest visitation time (ties are broken arbitrarily). Define $s \in \mathcal B^n_k$ as the $(k+1)$-ary string where $s_i$ is the label of the agent visiting $\rho_i$. Thus, the corresponding feasible solution to \pe{n}{k} is called an \emph{$(s,\rho)$-algorithm}. We begin with the following observation.

\begin{observation}
The optimal solution to \pe{n}{k} is an $(s,\rho)$-algorithm, for some $s\in \mathcal B^n_k$ and $\rho \in \mathcal P_n$. 
\end{observation}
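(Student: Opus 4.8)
The plan is to observe that the statement is, at its core, a matter of definitions: the only substantive point is that the (a priori infinite-dimensional) set of feasible solutions is covered by the \emph{finite} collection of $(s,\rho)$-classes indexed by $\mathcal B^n_k\times\mathcal P_n$. First I would recall that, by feasibility, in any solution $\{\tau_i\}_{i\in\{0,\ldots,k\}}$ every vertex $V^n_j$ is reached by some agent at a finite visitation time $T_j$. Sorting the $n$ visitation times $T_1,\ldots,T_n$ in nondecreasing order (breaking ties by any fixed rule, e.g.\ smallest vertex index) yields a permutation $\rho\in\mathcal P_n$ with $T_{\rho_1}\leq\cdots\leq T_{\rho_n}$, and recording for each position $i$ the label of an agent realizing the visitation of $V^n_{\rho_i}$ at time $T_{\rho_i}$ (again fixing a tie-breaking rule when several agents coincide there) defines the string $s\in\mathcal B^n_k$. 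Thus every feasible solution canonically induces a pair $(s,\rho)\in\mathcal B^n_k\times\mathcal P_n$ and, by the definition given just above the statement, is an $(s,\rho)$-algorithm. Since this tagging is defined for \emph{all} feasible solutions, the $(s,\rho)$-classes—at most $(k+1)^n\,n!$ of them—cover the entire feasible set.

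Next I would invoke existence of an optimal solution so that the phrase ``the optimal solution'' is justified. Because some feasible solution has finite cost, the optimal value $\mathrm{OPT}$ is finite; restricting attention to the compact horizon $[0,\mathrm{OPT}]$ (every vertex must be visited by then, else the cost would exceed $\mathrm{OPT}$), the admissible trajectories form an equi-Lipschitz, uniformly bounded family, hence a compact set in the uniform topology by the Arzel\`a--Ascoli theorem; as the cost functional $\max_j\{T_j+\|V^n_j-\tau_0(T_j)\|_2\}$ is lower semicontinuous on this set, the infimum is attained by some $\{\tau_i^*\}$. Applying the tagging of the first paragraph to $\{\tau_i^*\}$ produces a specific pair $(s^*,\rho^*)$, so $\{\tau_i^*\}$ is an $(s^*,\rho^*)$-algorithm, which is exactly the claim.

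The main obstacle is not the combinatorial tagging, which is immediate, but this last analytic point: justifying that the infimum over the infinite-dimensional space of Lipschitz trajectories is \emph{achieved} rather than merely approached. If one prefers to sidestep the compactness argument, the same conclusion holds in a weaker but sufficient form: since every feasible solution falls into some $(s,\rho)$-class, the optimum over all feasible solutions equals the minimum, over the finitely many pairs $(s,\rho)\in\mathcal B^n_k\times\mathcal P_n$, of the best cost achievable within that class. This is precisely the decomposition needed for the per-class NLP formulation developed in the sections that follow, so the weaker reading loses nothing downstream.
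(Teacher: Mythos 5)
Your proposal is correct, and its combinatorial core coincides with the paper's own justification: the paper offers no separate argument for this Observation because the definition of an $(s,\rho)$-algorithm already tags \emph{every} feasible solution --- sort the visitation times, break ties arbitrarily, record the visiting agents --- which is exactly what your first paragraph does. Where you go beyond the paper is the existence question: the paper takes ``the optimal solution'' for granted, whereas you justify it via Arzel\`a--Ascoli on the equi-Lipschitz family restricted to the horizon $[0,\mathrm{OPT}]$ together with lower semicontinuity of the cost functional. That extra step is legitimate and addresses a real (if benign) imprecision in the paper's phrasing, but be aware that the lower semicontinuity you assert is not automatic: under uniform convergence a vertex can be visited strictly \emph{earlier} in the limit trajectory than along the approximating ones, and the cost term must then be evaluated at that earlier time, where the Queen might be farther away. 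The claim does hold here, precisely because the Queen is also $1$-Lipschitz --- shifting the evaluation time earlier by $\delta$ can increase her distance to the vertex by at most $\delta$, so each limit cost term is bounded by the liminf of the corresponding terms --- and spelling this out would make your compactness argument complete. Alternatively, your fallback reading, namely that the infimum over all feasible solutions equals the minimum over the finitely many $(s,\rho)$-classes of the per-class optimum, is exactly the form in which the Observation is consumed downstream (Corollary~\ref{cor: lower bound to penk}), so nothing is lost if one sidesteps the existence issue entirely.
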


For fixed $s,\rho$, we show next how to find the optimal $(s,\rho)$-algorithm to \pe{n}{k}. 
%As above, the intended meaning of $\rho,s$ is that $\rho_i$ is the $i$'th visited vertex (with respect to the time by which they are visited) and that is achieved by agent $s_i$. 
To that end, consider the following Non Linear Program (NLP) on variables $t_1, \ldots, t_n$ and $d_{r_1,r_2}$, where $r_1,r_2 \in \mathcal X^n_k $.

\begin{align}
\min &~\max_{j \in \{1,\ldots, n\}} 	
\left\{  
t_j + d_{(0,j),(s_j,j)} 
\right\}   
\tag{$\textsc{NLP}^n_k(s,\rho)$}  \label{equa:nlp}   \\
s.t.: ~& t_{j+1} -t_j  \geq d_{(i,j+1),(i,j)}, ~~j\in \{0,\ldots,n\}, i\in\{0,\ldots,k\}  \label{eq: speed const} \\
%& d_{(s_j,j),(s_k,k)} = \| V^n_{\rho_j} - V^n_{\rho_k} \|_2, ~~j,k \in \{1,\ldots,n\} \notag \\
& d_{(i,0),(s_l,l)} = 1, ~~l \in \{1,\ldots,n\}, i\in\{0,\ldots, k\} \label{eq: initial placement} \\
& d_{(s_j,j),(s_l,l)} = 2 \sinn{ mod(|\rho_j-\rho_l|,n) \cdot \tfrac{\pi}n  }, ~~j,l \in \{1,\ldots,n\} \label{eq: fixed dist const} \\
& t_0=-1, t_1 = 0 \label{eq: fixed time const} \\
& \left(\mathcal X^n_k,d \right)~\textrm{is a metric space}  \label{eq: metric const} \\
& \left(\mathcal X^n_k,d \right)~\textrm{is isometrically embeddable to}~ \left(\reals^2, \|\cdot\|_2 \right) \label{eq: emb const}
\end{align}
\ignore{
\label{equa:nlp}
{eq: speed const}
\label{eq: fixed dist const}
{eq: initial placement}
\label{eq: fixed time const}
\label{eq: metric const}
\label{eq: emb const}
}

\begin{lemma}[Introduced in \cite{GW24-iwoca}, for $k=1$]
\label{lem: nlp is opt s,rho algo}
Fix $s\in \mathcal B^n_k$ and $\rho \in \mathcal P_n$. 
Then the optimal value to~\ref{equa:nlp} equals the cost of the optimal $(s,\rho)$-algorithm to \pe{n}{k}.\footnote{For our lower bound arguments, we only need that the the optimal value to~\ref{equa:nlp} is a lower bound to the cost of the optimal $(s,\rho)$-algorithm.}
\end{lemma}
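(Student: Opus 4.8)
The plan is to fix $s\in\mathcal B^n_k$, $\rho\in\mathcal P_n$ and prove the two inequalities between the optimal value of \eqref{equa:nlp} and the cost of the optimal $(s,\rho)$-algorithm. The guiding observation is that the variables $t_j$ and $d_{(i,j),(i',j')}$ are meant to be, respectively, the visitation times and the pairwise Euclidean distances of the finitely many ``snapshots'' that determine the cost of an $(s,\rho)$-algorithm: the initial configuration of the $k+1$ agents together with the $n$ instants at which the vertices are reached in the order $\rho$. Constraints \eqref{eq: speed const}--\eqref{eq: emb const} are exactly the conditions under which such an array of times and distances can be realized by genuine unit-speed planar trajectories, so the NLP and the geometric problem should have the same optimum.

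For the inequality needed by the lower bounds (the one singled out in the footnote), I would take an arbitrary $(s,\rho)$-algorithm $\{\tau_i\}$, let $t_j$ be the visitation time of the $j$-th reached vertex $V^n_{\rho_j}$, and set $d_{(i,j),(i',j')}:=\|\tau_i(t_j)-\tau_{i'}(t_{j'})\|_2$ (slot $0$ reading off the starting positions). Each constraint is then read off the algorithm: \eqref{eq: speed const} is the $1$-Lipschitz property of $\tau_i$, \eqref{eq: fixed dist const} is the regular-$n$-gon distance formula of Section~\ref{sec: Notation and Terminology}, \eqref{eq: initial placement} and \eqref{eq: fixed time const} encode the initial configuration and the time origin, and \eqref{eq: metric const}--\eqref{eq: emb const} hold for free because the $d$'s are distances of actual points of $\reals^2$. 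Since $V^n_{\rho_j}=\tau_{s_j}(t_j)$ and the Queen sits at $\tau_0(t_j)$ at that moment, the objective term $t_j+d_{(0,j),(s_j,j)}$ is exactly $T_{\rho_j}+\|V^n_{\rho_j}-\tau_0(T_{\rho_j})\|_2$, so this feasible point has NLP value equal to the algorithm's cost. Hence the optimal NLP value is at most the cost of every, and therefore of the optimal, $(s,\rho)$-algorithm.

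For the reverse inequality I would reconstruct an algorithm from an optimal solution $(t,d)$ of the NLP. Constraint \eqref{eq: emb const} is precisely what lets me leave the abstract setting: it yields points $p_{i,j}\in\reals^2$ with $\|p_{i,j}-p_{i',j'}\|_2=d_{(i,j),(i',j')}$. By \eqref{eq: fixed dist const} the points assigned to the reached vertices are a congruent copy of the corresponding vertices of the oriented $n$-gon, so after one rigid motion (which changes no distance, hence preserves feasibility) I may assume $p_{s_j,j}=V^n_{\rho_j}$ for every $j$. Defining $\tau_i$ as the polyline through $p_{i,0},p_{i,1},\dots,p_{i,n}$ traversed at constant speed on each interval $[t_j,t_{j+1}]$, constraint \eqref{eq: speed const} guarantees every leg is short enough to be covered in the allotted time, so $\tau_i$ is a valid unit-speed trajectory; by construction $\{\tau_i\}$ is an $(s,\rho)$-algorithm whose cost equals, term by term, the NLP objective. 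Combined with the first part, the two optima coincide.

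The step I expect to be the genuine obstacle is the faithfulness of this reconstruction, and it is also where the fixed target dimension matters. First, \eqref{eq: emb const} is strictly stronger than the metric axioms \eqref{eq: metric const}: a finite metric need not embed isometrically into the \emph{plane}, and it is this planarity constraint---not a semidefinite or arbitrary-dimension relaxation---that keeps the optimum \emph{equal} to, rather than merely a bound on, the geometric cost. Second, for the two costs to match term by term I must ensure that each vertex $V^n_{\rho_j}$ is first visited exactly at time $t_j$ with the Queen located at $p_{0,j}$; a priori an agent could sweep through some other vertex in transit and lower a visitation time, altering the corresponding term. I would close this gap using the monotonicity $t_1\le\cdots\le t_n$ of the reached-vertex times together with the freedom to let each agent idle at its current point and traverse its next leg ``just in time,'' so that no vertex is reached ahead of its scheduled slot and the realized first-visitation times coincide with the $t_j$.
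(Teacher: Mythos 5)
Your overall architecture is the same as the paper's: read a feasible point of \ref{equa:nlp} off the snapshots of a given $(s,\rho)$-algorithm, and conversely use \eqref{eq: emb const} to realize an optimal solution as points $p_{i,j}\in\reals^2$, align them with the $n$-gon by a rigid motion, and join them by unit-speed polylines. The genuine gap is in your first direction---the one the footnote singles out as essential. Constraint \eqref{eq: initial placement} demands $d_{(i,0),(s_l,l)}=1$ for \emph{every} agent $i$ and \emph{every} $l$; since the positions indexed by $(s_l,l)$ are (copies of) the $n$ vertices, this says each slot-$0$ point lies at distance exactly $1$ from all $n$ vertices, and for $n\geq 3$ the unique such point in the plane is the centre. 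Your assignment ``slot $0$ $=$ the actual starting positions $\tau_i(0)$'' therefore yields an \emph{infeasible} point whenever the agents do not all start at the centre---which is the typical case: in the paper's own trajectory for \pe{9}{1}, the Queen starts at $V_2$ and the Servant at $V_1$, so your $d_{(0,0),(s_1,1)}=e_9\neq 1$. What is missing is the paper's normalization: slot $0$ is a \emph{virtual} configuration in which all $k+1$ agents sit at the centre at time $t_0=-1$ and move to their chosen initial placements (points of the unit disk) during $[-1,0]$, while the clock is reset (WLOG, since resetting only lowers the cost) so that the first vertex is visited at $t_1=0$. This simultaneously satisfies \eqref{eq: initial placement}, \eqref{eq: fixed time const}, and the $j=0$ case of \eqref{eq: speed const}, and it is exactly where the paper uses that the $n$-gon is inscribed in a unit-radius disk.

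On the converse direction you are in one respect more careful than the paper: you rightly flag that a polyline leg might sweep through a later-scheduled vertex, so the constructed trajectories need not form an $(s,\rho)$-algorithm. But your repair does not close this gap: idling and then traversing each leg ``just in time'' reschedules the motion without changing its geometric trace, so a leg passing through $V^n_{\rho_l}$ still visits that vertex before $t_l$. (A triangle-inequality argument using the Queen's unit speed shows that an early visit can only decrease the corresponding cost term, so the constructed trajectory costs at most the NLP optimum; but that alone does not give equality against the optimal $(s,\rho)$-algorithm, since the trajectory may realize a different configuration $(s',\rho')$. To be fair, the paper's own proof is silent on this point as well---its tightness argument for \eqref{eq: speed const} only rules out late arrivals, not early sweeps.)
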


\begin{proof}%[Proof of Lemma~\ref{lem: nlp is opt s,rho algo}]
First we show how to find a feasible solution to~\ref{equa:nlp} starting from a feasible $(s,\rho)$-algorithm to \pe{n}{k}.
For this, consider an $(s,\rho)$-algorithm to \pe{n}{k} along with its corresponding trajectories $\{\tau_i\}_{i \in \{0,\ldots,k\}}$ (i.e. feasible unit speed trajectories). 
We design a feasible solution to~\ref{equa:nlp} with objective cost equal to $c$.
For convenience, we modify the definition of \pe{n}{k} so that all agents start from the centre of the $n$-gon, and then we subtract 1 from the total evacuation time. 
Since the $n$-gon is inscribed in a unit radius disk, and the first agent will need to reach the first vertex as soon as possible, the two problems are equivalent. 
This is done by setting $t_0=-1$, and $0$ in the index refers the $0$'th visited location, which in this case is the origin, hence all agents are equidistant from the $n$-gone vertices, satisfying~\eqref{eq: initial placement}.
More generally, the intended meaning of the variables is that $t_i$ equals the time that vertex $\rho_i$ (i.e. the $i$-th in time order visited vertex) is visited by agent $s_i\in \{0,\ldots,k\}$.
Moreover the execution of an algorithm produces movements for every placement of the exit. Since the exit is not known to the algorithm, two executions of the algorithm
are identical for as long as the exit is not visited yet. 

For this reason, we consider an execution of the algorithm in which only the last visited vertex is the one with the exit. 
At time $t_j$, we consider point $A_{i,j}\in \reals^2$ to be the positioning of agent $i$, when vertex $s_j$ is visited. 
Clearly we have that these points, equipped with the $\ell_2$ metric define a metric space, by definition, embeddable in $(\reals^2,\ell_2)$. 
Moreover, denote the distance of points $A_{i,j}, A_{i',j'}$ by $d_{(i,j),(i',j')}$. 

With these evaluations in mind, we observe that constraints~\eqref{eq: speed const} are satisfied because each agent is moving at speed at most 1. 
Constraint~\eqref{eq: fixed dist const} is satisfied because each vertex is visited by some agent, and the distance between the positions of the corresponding agents when they visit these vertices equals the distances between these vertices (in other words, because $A_{s_j,j} = \rho_j$). And indeed, $2 \sinn{ mod(|\rho_j-\rho_k|,n) \cdot \tfrac{\pi}n  }$ equals the distance of $n$-gon vertices $\rho_j,\rho_l$. 

Next observe that without loss of generality, we may assume that at the beginning of the execution of the search algorithm, some agent lies on some vertex (recall that the initial positioning of the agents is an algorithmic choice). 
If not, we can reset the clock when the first vertex is visited, resulting in lower termination cost. 
Therefore, constraint~\eqref{eq: fixed time const} is satisfied as well. As for constraints~\eqref{eq: emb const}, it is satisfied by construction, while constraint~\eqref{eq: metric const} is redundant. 

We are left to argue that the objective quantifies correctly the worst case cost. For this we observe that if the exit is placed at the $j$-th visited vertex, which by definition is done by agent $s_j$ at time $t_j$, then the Queen at point $A_{0,j}$ needs to travel $\|A_{0,j}-A_{s_j,j}\|_2=d_{(0,j),(s_j,j)}$ to go the exit. 
Hence, $t_j + d_{(0,j),(s_j,j)}$ measures correctly the running time of the search algorithm, and the objective of the NLP correctly considers the maximum, among all placements of the exit, as the overall termination time. 

Second we show that an optimal solution to~\ref{equa:nlp} corresponds to a feasible trajectory for \pe{n}{k}. 
For this, we note that constraints~\eqref{eq: fixed dist const}, \eqref{eq: metric const} and~\eqref{eq: emb const} identify an embedding of the regular $n$-gon on the 2-dimensional Cartesian plane, 
where also by~\eqref{eq: initial placement}, all agents are initially placed at its centre. 

We now design the trajectories for the $k+1$ agents, having them all starting from the centre of the $n$-gon, and then subtracting 1 from the evacuation cost 
(this is equivalent to allowing agents starting from anywhere within the unit disk). 
Using the embedding~\eqref{eq: metric const} implied by \eqref{eq: emb const}, we mark on the plane points that we identify as $(i,j)$, for each agent $i\in \{0,\ldots,k\}$ and each time interval $j\in\{0,\ldots,n\}$. 
Note, have one time interval for each vertex being discovered, and an additional one, for the initial placement of the agents who ``discover'' the origin. 
We let agent $i$ move along line segments with endpoints $(i,j)\rightarrow (i,j+1)$, $j=0,\ldots,n-1$. These trajectories visit indeed all vertices of the $n$-gon, 
where specifically, due to the definition of configuration $s\in \mathcal B^n_k$ and $\rho \in \mathcal P_n$ and constraint~\eqref{eq: fixed time const}. 
It remains to observe that~\eqref{eq: speed const} imposes that these trajectories are indeed at most unit speed, showing this way that the trajectories are feasible to~\pe{n}{k}.
Now observe that in an optimal solution, and for each $j=0,\ldots,n$, at least one of the constraints~\eqref{eq: speed const} has to be tight for some $i$, as otherwise one would be able to reduce the value of the objective. 
Effectively, this shows that each $t_i$ can be thought as the time that vertex $\rho_i$ is visited, hence also showing that the objective~\ref{equa:nlp} quantifies correctly the cost of the derived feasible trajectory. 
\end{proof}

We note that while formulation~\ref{equa:nlp} can be effectively coded as an NLP in any programming language, solving it optimally remains a challenge. 
Due to the non-convex nature of the program, unless a sophisticated method specifically tailored to the NLP is implemented, it is impossible to guarantee a global solution.
Indeed, solvers can only ensure that the returned solution is a local optimizer.
Motivated by that, we relax~\ref{equa:nlp} into a tracktable optimization problem. This idea is borrowed again from~\cite{GW24-iwoca}.

\begin{lemma}
\label{lem: relaxation is LP}
The relaxation of~\ref{equa:nlp} where constraint~\eqref{eq: emb const} is omitted (\emph{denoted as $\textsc{LP}^n_k(s,\rho)$}), is a Linear Program.
\end{lemma}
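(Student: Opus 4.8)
The plan is to verify, constraint by constraint, that once the embeddability requirement~\eqref{eq: emb const} is removed, every remaining ingredient of~\ref{equa:nlp} is either a linear equality, a linear inequality, or can be rewritten as one. The decision variables $t_1,\dots,t_n$ and $\{d_{r_1,r_2}\}_{r_1,r_2 \in \mathcal X^n_k}$ appear only affinely throughout, so the whole argument reduces to inspecting each line and confirming that no products, nonlinear functions of the variables, or non-convex set-membership conditions survive.

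First I would linearize the objective. The min--max form $\min \max_j \{t_j + d_{(0,j),(s_j,j)}\}$ is handled by the standard epigraph trick: introduce a fresh scalar variable $C$, minimize $C$, and append the $n$ linear constraints $C \geq t_j + d_{(0,j),(s_j,j)}$ for $j \in \{1,\dots,n\}$. Each is affine in the variables, and the reformulated objective $\min C$ is linear, with optimal $C$ equal to the original value. Next I would dispatch the explicit constraints: the speed constraints~\eqref{eq: speed const} are the linear inequalities $t_{j+1}-t_j - d_{(i,j+1),(i,j)} \geq 0$, while the initial-placement constraints~\eqref{eq: initial placement}, the fixed-distance constraints~\eqref{eq: fixed dist const}, and the fixed-time constraints~\eqref{eq: fixed time const} each pin a specific variable to a numerical constant. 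Indeed the right-hand sides $1$, $2\sinn{\mathrm{mod}(|\rho_j-\rho_l|,n)\cdot \tfrac{\pi}{n}}$, $-1$, and $0$ are fixed data determined by $s$ and $\rho$, not variables, so these are plain linear equalities.

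The only line requiring genuine unpacking is the metric constraint~\eqref{eq: metric const}. Because the ground set $\mathcal X^n_k$ is finite, the property of being a metric space is equivalent to finitely many conditions on the entries $d_{r_1,r_2}$: symmetry $d_{r_1,r_2}=d_{r_2,r_1}$, self-distance $d_{r,r}=0$, non-negativity $d_{r_1,r_2}\geq 0$, and the triangle inequalities $d_{r_1,r_3}\leq d_{r_1,r_2}+d_{r_2,r_3}$ ranging over all ordered triples $r_1,r_2,r_3\in\mathcal X^n_k$. Each of these is a linear (in)equality, and there are at most $O(|\mathcal X^n_k|^3)$ of them. The single metric axiom that is not a closed linear constraint is the strict positivity $d_{r_1,r_2}>0$ for $r_1\neq r_2$; I would handle it by observing that relaxing it to $d_{r_1,r_2}\geq 0$ only enlarges the feasible region, so the program stays a valid relaxation while becoming a bona fide LP (and in any case most of these distances are already forced to be positive constants by~\eqref{eq: fixed dist const} and~\eqref{eq: initial placement}). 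Assembling the linear objective with all of the above linear constraints gives $\textsc{LP}^n_k(s,\rho)$.

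The main obstacle---really the only non-routine point---is the treatment of~\eqref{eq: metric const}: one must argue that the abstract statement ``$(\mathcal X^n_k,d)$ is a metric space'' collapses, on a finite ground set, to a finite linear system, and that the lone non-linear axiom (strict positivity) can be discarded without compromising the relaxation's role as a lower bound. Everything else is a direct syntactic check that the listed constraints and the epigraph-reformulated objective are affine in the variables.
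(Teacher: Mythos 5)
Your proof follows essentially the same route as the paper's: the epigraph trick (a fresh variable minimized subject to $y \geq t_j + d_{(0,j),(s_j,j)}$ for all $j$) to linearize the min--max objective, plus the observation that over the finite set $\mathcal X^n_k$ the metric axioms (non-negativity tight on the diagonal, symmetry, triangle inequalities) constitute a finite system of linear constraints, i.e.\ a polyhedron in the variables $d$. Your extra remark about dropping strict positivity for distinct points is a care point the paper handles implicitly by listing non-negativity with tightness required only at $r_1=r_2$, so it refines but does not alter the argument.
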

\begin{proof}%[Proof of Lemma~\ref{lem: relaxation is LP}]
The linearization of the objective of~\ref{equa:nlp} is possible by the standard trick of replacing the objective by that of minimizing $y$, where $y$ is a brand new variable, along with the constraints
$$
y \geq t_j + d_{(0,j),(s_j,j)}, ~j=1,\ldots,n.
$$
Hence, it remains to show that the requirement that $\left(\mathcal X^n_k,d \right)$ is a metric space, is actually a polyhedron in variables $d$. This is indeed possible, since $\mathcal X^n_k$ is finite, and variable $d$ is indexed by $X^n_k \times X^n_k$. That is, we may think of $d: \mathcal X^n_k \times \mathcal X^n_k \mapsto \reals$ as a metric, hence identified by the properties that \\
- $d_{r_1,r_2} \geq 0$, for $r_1,r_2 \in \mathcal X^n_k$, and inequality is tight for  $r_1=r_2$, \\
- $d_{r_1,r_2} = d_{r_2,r_1}$, for $r_1,r_2 \in \mathcal X^n_k$ \\
- $d_{r_1,r_2}+d_{r_2,r_3} \geq d_{r_1,r_3}$, for $r_1,r_2,r_3 \in \mathcal X^n_k$.
\end{proof}

The advantage of considering $\textsc{LP}^n_k(s,\rho)$ is that the optimization problem can be solved efficiently, and moreover, every solution (even numerical and computer-based) comes with a certificate of global optimality, unlike what can be accomplished for the NLP. 
For given $s,\rho$, let $obj^n_k(s,\rho)$ be the optimal solution to $\textsc{LP}^n_k(s,\rho)$. 
Since the LP is a relaxation of~\ref{equa:nlp}, we see that $obj^n_k(s,\rho)$ is a lower bound on the optimal solution to \ref{equa:nlp}, which is also a lower bound on the evacuation cost of the optimal $(s,\rho)$ algorithm for \pe{n}{k}. 
Since also the optimal solution to \pe{n}{k} is some $(s,\rho)$-algorithm, we obtain the following. 

\begin{corollary}
\label{cor: lower bound to penk}
The optimal cost for solving \pe{n}{k} is at least $\min_{s,\rho} obj^n_k(s,\rho)$, where $obj^n_k(s,\rho)$ denotes the optimal solution to $\textsc{LP}^n_k(s,\rho)$. 
\end{corollary}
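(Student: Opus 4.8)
The plan is to chain together three facts that are already available in the excerpt: the relaxation inequality implicit in Lemma~\ref{lem: relaxation is LP}, the exact characterization of the optimal $(s,\rho)$-algorithm given by Lemma~\ref{lem: nlp is opt s,rho algo}, and the structural Observation that the global optimum of \pe{n}{k} is itself an $(s,\rho)$-algorithm for some configuration. The whole argument is bookkeeping about the direction of inequalities across a finite index set, so I expect no genuine difficulty; the substantive work has been discharged by the two preceding lemmas.

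First I would fix an arbitrary pair $(s,\rho)$ with $s\in\mathcal B^n_k$ and $\rho\in\mathcal P_n$, and record that $\textsc{LP}^n_k(s,\rho)$ is obtained from \ref{equa:nlp} precisely by deleting the embeddability constraint~\eqref{eq: emb const}. Since the two programs share the same objective and the feasible region of the LP contains that of the NLP, the LP minimum cannot exceed the NLP minimum; that is, $obj^n_k(s,\rho)$ is a lower bound on the optimal value of \ref{equa:nlp}. Writing $C(s,\rho)$ for the cost of the optimal $(s,\rho)$-algorithm, Lemma~\ref{lem: nlp is opt s,rho algo} identifies this NLP optimum with $C(s,\rho)$, so that $obj^n_k(s,\rho)\le C(s,\rho)$ for every admissible pair.

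Next I would minimize over all configurations. Because $\mathcal B^n_k=\{0,\ldots,k\}^n$ and $\mathcal P_n$ are both finite, the minimum ranges over a finite set and is attained, and the pointwise bound of the previous step yields $\min_{s,\rho} obj^n_k(s,\rho)\le \min_{s,\rho} C(s,\rho)$. To finish I would invoke the Observation: the optimal \pe{n}{k} solution is an $(s^*,\rho^*)$-algorithm, so its cost equals $C(s^*,\rho^*)$ and therefore is at least $\min_{s,\rho} C(s,\rho)$; conversely every $(s,\rho)$-algorithm is a feasible \pe{n}{k} solution, so the \pe{n}{k} optimum is at most each $C(s,\rho)$ and hence at most their minimum. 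These two inequalities pin the optimal \pe{n}{k} cost exactly to $\min_{s,\rho} C(s,\rho)$, and combining with the displayed bound gives the claimed lower bound $\min_{s,\rho} obj^n_k(s,\rho)$.

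The only point that demands a moment of care, rather than being a true obstacle, is the orientation of the relaxation: one must keep in mind that dropping constraint~\eqref{eq: emb const} enlarges the feasible region of a minimization problem and so can only lower its optimum, which is exactly why ignoring embeddability into $\reals^2$ produces a valid lower bound and not an upper bound. Everything else is a transcription of the two lemmas, and the statement is phrased as a corollary for precisely this reason.
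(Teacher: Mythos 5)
Your proposal is correct and follows essentially the same route as the paper: the paper's own justification is precisely the chain that $obj^n_k(s,\rho)$ lower-bounds the optimum of \ref{equa:nlp} because the LP is a relaxation, which by Lemma~\ref{lem: nlp is opt s,rho algo} lower-bounds the cost of the optimal $(s,\rho)$-algorithm, and then the Observation that the global optimum of \pe{n}{k} is itself an $(s,\rho)$-algorithm closes the argument. Your additional remark that the optimum equals $\min_{s,\rho} C(s,\rho)$ exactly is true but unnecessary for the one-sided bound the corollary asserts.
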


\section{Lower Bounds on \pe{n}{k}}
\label{sec: lower bounds to pe}

We present our findings on the lower bounds for the polygon priority evacuation problem \pe{n}{k} for $k=1,2,3,4$ and various values of $n$. 
This section serves as the proof of the lower bounds mentioned in Theorem~\ref{thm: n-gon k Servants upper and lower bounds}. 
For each $n$ and $k$, our results for \pe{n}{k} are obtained by applying Corollary~\ref{cor: lower bound to penk}, specifically by solving $\textsc{LP}^n_k(s,\rho)$ for all $s \in \mathcal B^n_k$ and $\rho \in \mathcal P_n$, and reporting the smallest value. Our work extends the results presented by~\cite{GW24-iwoca}, which covered only the case of $k=1$ and $n \leq 9$.

A brute force consideration of Corollary~\ref{cor: lower bound to penk} suggests that one needs to solve $c_{n,k}:=n! (k+1)^n$ linear programs, each with $\Omega\left( n^2k^2 \right)$ variables and $\Omega\left( n^3k^3 \right)$ constraints. The number of configurations presents a challenge, reaching $c_{9,1}=1.8 \cdot 10^8$ LPs for $n=9$ and $k=1$, the limit configuration considered in~\cite{GW24-iwoca}. Part of our contribution was to efficiently handle the $c_{n,k}$ configurations. The strategies implemented to reduce the number of configurations include:

\begin{itemize}
    \item Assuming that the first visited vertex is vertex 1 and that the second visited vertex is $j \leq \lceil n/2\rceil$.
    \item Omitting configurations that provably result in weak lower bounds, based on the necessity for agents to visit the vertices suggested by configuration $(s,\rho)$, and omitting the last evacuation step by the Queen. For this we use the following observation. 
    
    \begin{lemma}
\label{naive_algs}
    The cost of an optimal priority  algorithm on the $n$-gon with $k$ Servants does not exceed $\min\left\{1+\left(\lceil\tfrac{n}{k}\rceil-1\right)e_n,2+\left(\lceil\tfrac{n}{k+1}\rceil-1\right)e_n\right\}$.
\end{lemma}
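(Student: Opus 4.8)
The plan is to prove the bound constructively, by exhibiting two explicit feasible algorithms whose worst-case Queen-evacuation costs equal the two terms inside the minimum; the optimal cost is then at most each of them, hence at most their minimum. Throughout I write $e_n = 2\sin(\pi/n)$ for the edge length of the $n$-gon, and I use two facts: in \pe{n}{k} the initial placement of every agent is a free algorithmic choice (so agents may start on vertices at no cost), and any two points of the closed unit disk are at distance at most $2$.

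First term $1+\left(\lceil n/k\rceil-1\right)e_n$. I would let the Queen start at the origin $O$ and never move, while the $k$ Servants do all the searching. Partition the cyclic list of vertices $V^n_1,\ldots,V^n_n$ into $k$ contiguous arcs, each containing at most $\lceil n/k\rceil$ vertices. Servant $i$ starts at an endpoint of arc $i$ and walks along consecutive edges to sweep it; visiting $m$ consecutive vertices costs $(m-1)e_n$, so every vertex is visited by time at most $\left(\lceil n/k\rceil-1\right)e_n$. Since the Queen sits at $O$, she is at distance exactly $1$ from every vertex at all times, so when the exit at $V^n_j$ is discovered at time $T_j$ her evacuation contributes $T_j + \|V^n_j - O\|_2 = T_j + 1$. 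Taking the worst case over $j$ gives cost at most $1+\left(\lceil n/k\rceil-1\right)e_n$.

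Second term $2+\left(\lceil n/(k+1)\rceil-1\right)e_n$. Here all $k+1$ agents search: partition the vertices into $k+1$ contiguous arcs of size at most $\lceil n/(k+1)\rceil$, and have each agent (Queen included) start at an endpoint of its own arc and sweep it. As before, every vertex is visited by time at most $\left(\lceil n/(k+1)\rceil-1\right)e_n$. When the exit $V^n_j$ is discovered at time $T_j$, the Queen lies somewhere in the closed unit disk, so $\|V^n_j - \tau_0(T_j)\|_2 \le 2$, and her evacuation contributes at most $T_j + 2 \le 2+\left(\lceil n/(k+1)\rceil-1\right)e_n$.

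Both algorithms are feasible, since every vertex lies on some agent's arc, so the optimal cost of \pe{n}{k} is bounded by each displayed quantity and hence by their minimum. The only points requiring care are the bookkeeping of the partition — that $k$ (resp.\ $k+1$) contiguous arcs of at most $\lceil n/k\rceil$ (resp.\ $\lceil n/(k+1)\rceil$) vertices can cover all $n$ vertices and sweep in the stated time — and the observation that the Queen's post-discovery travel is bounded by the radius $1$ when she idles at the center (first strategy) versus by the diameter $2$ when she participates in the search (second strategy). This trade-off between using one fewer searcher and paying a cheaper final hop is exactly what produces the two competing terms, and I expect no genuine obstacle beyond verifying these elementary estimates.
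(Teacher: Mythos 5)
Your proposal is correct and matches the paper's own proof essentially verbatim: both exhibit the same two naive algorithms, one with the Queen idle at the center (paying a final hop of $1$) and one with all $k+1$ agents sweeping evenly spaced contiguous arcs of the polygon (paying a final hop of at most $2$, the diameter bound). The only cosmetic difference is that you phrase the even spacing as an explicit partition into contiguous arcs, which is what the paper means by Servants ``spaced as evenly as possible'' sweeping adjacent vertices.
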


\begin{proof}
    It suffices to provide two naive evacuation algorithms $\mathcal{A}_1$ and $\mathcal{A}_2$ with costs at most $1+\left(\lceil\tfrac{n}{k}\rceil-1\right)e_n$ and $2+\left(\lceil\tfrac{n}{k+1}\rceil-1\right)e_n$ respectively, where $e_n$ is the length of a regular $n$-gon inscribed in a unit radius circle. 

    For algorithm $\mathcal{A}_1$, the Queen starts at the origin and each of the $k$ Servants start at vertices of the $n$-gon, spaced as ``evenly'' as possible. 
    The Servants move clockwise along a shortest path to the next adjacent vertex in the $n$-gon until every vertex has been searched (and hence the exit is found). 
    Then, the Queen moves to the exit, which takes time $1$. No Servant will need to search more than $\lceil\tfrac{n}{k}\rceil$ distinct vertices of the $n$-gon, so the exit is found in time at most $\left(\lceil\tfrac{n}{k}\rceil-1\right)e_n$ (recall that each Servant searches a vertex at time $0$ and moving to an adjacent vertex can be done in time $e_n$). 
    Hence, the cost of $\mathcal{A}_1$ is at most $1+\left(\lceil\tfrac{n}{k}\rceil-1\right)e_n$.

    For algorithm $\mathcal{A}_2$, each agent (including the Queen) starts on a vertex of the $n$-gon, spaced as ``evenly'' as possible. 
    The agents move clockwise along a shortest path to the next adjacent vertex in the $n$-gon until every vertex has been searched. 
    Then, the Queen moves to the exit, which takes time at most $2$, since the worst case scenario is that the Queen is opposite the exit when it is found. 
    Since there are $k+1$ agents searching for the exit, no agent will need to search more than $\lceil\tfrac{n}{k+1}\rceil$ distinct vertices of the $n$-gon, so the exit is found in time at most $\left(\lceil\tfrac{n}{k+1}\rceil-1\right)e_n$. Hence, the cost of $\mathcal{A}_2$ is at most $2+\left(\lceil\tfrac{n}{k+1}\rceil-1\right)e_n$.
\end{proof}

Given certain sub-optimal permutations $\rho$ and identity strings $s$, Lemma~\ref{naive_algs} can be used to quickly determine their sub-optimality. In particular, if the trajectory of any one agent is of length greater than the cost of $\mathcal{A}_1$ or $\mathcal{A}_2$, then no $(s,\rho)$-algorithm can be optimal, and hence can be omitted when computing lower bounds. 
    
    \item Considering only strings $s \in \mathcal B^n_k$, for which if the Queen visits two vertices at consecutive time stamps, then no other Servant visits any vertex afterwords (since it is always better than Servants should visit their assigned vertices without breaks). 
    \item Considering that Servants are identical, thus making configurations invariant under the renaming of Servants.
\end{itemize}

We combined these strategies with a practical implementation for solving the LPs using Julia's JuMP Clp interface, an efficient open-source linear programming solver~\cite{Clp,dunning2017jump}. 
Each LP solution comes with a proof of optimality, quantified by a dual gap, ensuring accuracy to at least 10 decimal digits. 
Additionally, where applicable, we preset specific variables to known optimal values (e.g., the visiting times of initially visited vertices by distinct Servants are zero, as Servants start at polygon vertices without loss of generality). 
This technique significantly reduced the running time for many instances. Most importantly, we processed the different LPs using parallel threading, always tracking the tentatively lowest value achieved. For subsequent configurations, we solved the corresponding LP only if the cost of visiting the polygon vertices by the agents, as per the configuration (excluding the final evacuation cost), was lower than the tentative lowest value, further expediting the running time of our numerical calculations. 
These strategies allowed us to handle values of $(n,k)$ as high as $(13,1)$ or $(10,4)$, corresponding to $5.1 \cdot 10^{13}$ and $3.5 \cdot 10^{13}$ configurations, respectively. However, these methods were insufficient to address other values of $(n,k)$ such as $(10,3)$ with seemingly fewer $c_{10,3}= 3.8 \cdot 10^{12}$ configurations. We suspect that for certain values of $n$ and $k$, the naive bounds used to avoid solving certain LPs were not stringent enough.

In Tables~\ref{tab: lb ngons k=1}, \ref{tab: lb ngons k=2}, \ref{tab: lb ngons k=3} and \ref{tab: lb ngons k=4}
below, we summarize the findings for computing $l^n_k := \min_{s,\rho} obj^n_k(s,\rho)$, where $obj^n_k(s,\rho)$ denotes the optimal solution to $\textsc{LP}^n_k(s,\rho)$. 
Specifically, we use $l^n_k$ as the reported lower bound for \pe{n}{k}, as per Corollary~\ref{cor: lower bound to penk}. 
For small values of $n$ we occasionally observe values of the LP which are provably lower than the universal lower bound of $1$. We explain this later, but we note that this is compatible with that the LP is a relaxation to the corresponding NLPs. Whenever this happen, we report the correct lower bound of $1$. 
For each $n$ and $k$ for which this lower bound is reported, we also provide the minimizing configuration $(s,\rho)$, which is also used as guidance for our upper bound results later. 
We report the minimum valued LP with 15 decimal digits, thus providing a numerical evaluation for the lower bound $l^n_k$ for \pe{n}{k}. 
For many of these values, and due to some tight results presented next, we have identified concrete representation of many of these numbers. For instance:
$
\min_{s,\rho} obj^5_2(s,\rho)= \sqrt{(5+\sqrt{5})/2} \approx 1.90211303
$
and
$
\min_{s,\rho} obj^5_3(s,\rho)= \sqrt{5-2\sqrt{5}}/2 \approx 1.53884176.
$
However, many of these numerical values have impractically long representations, and for this reason, we choose not to include them in this exposition.

Finally, as evident from the numerical values in Theorem~\ref{thm: n-gon k Servants upper and lower bounds}, many of our lower bounds are matched or nearly matched with upper bounds. 
Hence, provably, for some configurations $(s,\rho)$ (specifically those determining the lower bound for \pe{n}{k}), we have that~\ref{equa:nlp} and $\textsc{LP}^n_k(s,\rho)$ have the same optimal solution, indicating that the relaxations exhibit no gap. Since the LP relaxation to the NLP was obtained as a metric embedding relaxation, by dropping the constraint of embeddability in $\reals^2$, and retaining the $\ell_2$ distances only between polygon vertices, one might be tempted to check whether the metric space of the optimal LP solutions is indeed embeddable in $(\reals^2, \ell_2)$. 
This can be done efficiently by solving a Positive Semidefinite Program that computes the minimum distortion required to embed any given metric into $(\reals^2, \ell_2)$, with a distortion equal to 1 corresponding to affirmative embeddability. Moreover, a distortion equal to 1 would actually provide a matching upper bound without explicitly describing the evacuation algorithm. However the minimum possible distortion in many cases is provably greater than $1$, even when the metric is restricted to the positioning of the Queen alone. Therefore, this phenomenon provides a solid explanation as to why in many cases we were unable to provide matching upper and lower bounds.

\begin{table}[h!]
\centering
\begin{small}
\begin{tabular}{c||cccc}
$n$		& 		$\min_{s,\rho} obj^n_1(s,\rho)$	&		$\rho$	&	$s$		\\
\hline
3			&	1.7320508075688772			&	(1,2,3)	& (1,0,1) \\
4			&	2.121320343559643			&	(1,2,4,3)	& (1,0,1,0) \\
5			&	2.7144122731725724			&	(1,4,5,3,2)	& (1,0,1,0,0) \\
6$^{\star}$		& 2.8660253779249727		&	(1,2,6,3,5,4)	& (1,0,1,0,1,0) \\
7$^{\star}$		& 2.95125017805582		&	(1,2,3,7,4,6,5)	& (0,1,1,0,1,0,1) \\
8$^{\star}$		& 3.003207375377086		&	(1,2,3,8,4,6,5,7)	& (0,1,1,0,1,0,1,0) \\
9$^{\star}$		& 3.218913730099321		&	(1,2,9,3,8,4,7,5,6)	& (1,0,1,0,1,0,1,0,1) \\
10			& 3.1871244937949434		&	(1,2,10,3,9,4,8,6,7,5)	& (1,0,1,0,1,0,1,0,1,0) \\
11			& 3.3557769107573536		&	(1,2,3,11,10,4,9,5,8,6,7)	& (1,0,0,1,1,0,1,0,1,0,0) \\
12			& 3.3848655006886306		&	(1,2,3,12,4,11,10,5,9,8,6,7)	& (1,0,0,1,0,1,1,0,1,1,0,1) \\
13			& 3.3636191025088142		&	(1,2,13,3,4,12,5,11,6,7,9,8,10)	& (0,1,0,1,1,0,1,0,1,1,0,1,0)  %\\
%14			&	office dekstop 			& finishing around June 13			&
\end{tabular}
\end{small}
\caption{Lower bounds for \pe{n}{1}, i.e., $n$-gon priority evacuation with 1 Servant. Values of $n$ labeled with $^{\star}$ indicate known evaluations reported previously in~\cite{GW24-iwoca}.}
\label{tab: lb ngons k=1}
\end{table}

\begin{table}[h!]
\centering
\begin{small}
\begin{tabular}{c||ccc}
$n$		& 		$\min_{s,\rho} obj^n_2(s,\rho)$	&		$\rho$	 &	$s$	\\
\hline
3	&	0.8660254037844386	&	(1,2,3)	& (1,2,0)	\\
4	&	1.7071067811865475	&	(1,2,3,4)	& (1,2,0,0)	\\
5	&	1.9021130325903068	&	(1,2,3,5,4)	& (1,2,0,1,2)	\\
6	&	1.9999999999999993 	&	(1,3,5,4,2,6)	& (1,2,0,2,1,0)	\\
7	&	2.0834826998207037	&	(1,3,7,2,6,4,5)	& (1,0,2,1,2,0,2)	\\
8	&	2.2378405106469064	&	(1,3,2,8,4,5,7,6)	& (1,2,0,1,2,2,1,0)	\\
9	&	2.3528883263148823	&	(1,2,5,3,9,6,4,8,7)	& (1,2,0,2,1,0,2,1,1)	\\
10	&	2.3781074994199956	&	(1,5,2,3,10,6,9,4,7,8)	& (1,0,2,2,1,0,1,2,0,1)	\\
11	&	2.4629185509183094	&	(1,2,9,3,11,8,10,4,6,5,7)	& (1,2,0,2,1,0,1,2,0,2,0)	%\\
%12			&	home dekstop			& finishing around June 21			&
\end{tabular}
\end{small}
\caption{Lower bounds for \pe{n}{2}, i.e., $n$-gon priority evacuation with 2 Servants. All results are novel. }
\label{tab: lb ngons k=2}
\end{table}

\begin{table}[h!]
\centering
\begin{small}
\begin{tabular}{c||ccc}
$n$		& 		$\min_{s,\rho} obj^n_3(s,\rho)$	&		$\rho$ 	&	$s$	\\
\hline
3	&	0.8660254037844386	&	(1,2,3)	& (1,2,0)	\\
4	&	1.0		&	(1,2,3,4)	& (1,2,3,0)	\\
5	&	1.5388417685876266	&	(1,2,5,4,3)	& (1,2,3,0,2)	\\
6	&	1.866025403784437	&	(1,3,5,4,6,2)	& (1,2,3,2,3,1)	\\
7	&	1.8426953904169392	&	(1,3,6,2,4,7,5)	& (1,2,3,1,2,3,2)	\\
8	&	1.9134171618254483	&	(1,5,3,8,4,2,7,6)	& (1,2,3,1,2,3,1,0)	\\
9	&	1.8508331567966465	&	(1,2,4,6,5,3,9,8,7)	& (1,2,3,0,3,2,1,1,0)	%\\
%10	&	1.9510565162951534	&	(1,2,4,6,3,10,5,9,8,7)	& (1,2,3,0,2,1,3,1,0,3)	REMOVE\\ %5 days in old laptop
\end{tabular}
\end{small}
\caption{Lower bounds for \pe{n}{3}, i.e., $n$-gon priority evacuation with 3 Servants. All results are novel. }
\label{tab: lb ngons k=3}
\end{table}

\begin{table}[h!]
\centering
\begin{small}
\begin{tabular}{c||ccc}
$n$		& 		$\min_{s,\rho} obj^n_4(s,\rho)$	num	&		$\rho$		&	$s$	\\
\hline
3	&	0.8660254037844386	&	(1,2,3)	& (1,2,0)	\\
4	&	1.0 	&	(1,2,3,4)	& (1,2,3,0)	\\
5	&	0.9510565162951532	&	(1,2,5,4,3)	& (1,2,3,4,0)	\\
6	&	1.4999999999999991	&	(1,4,6,5,2,3)	& (1,2,3,4,1,2)	\\
7	&	1.6495989607031372	&	(1,3,7,2,5,4,6)	& (1,2,3,4,0,2,3)	\\
8	&	1.6892463972414653	&	(1,3,8,6,5,2,7,4)	& (1,2,3,4,4,2,3,4)	\\
9	&	1.6688480396635432	&	(1,3,5,9,4,2,8,6,7)	& (1,2,3,4,2,1,4,3,3)	\\
10	&	1.618033988749892	&	(1,4,5,9,3,6,10,2,7,8)	& (1,2,3,4,2,3,4,1,3,0)	%\\
%11	&	old laptop	&	June 12, started	& 30 days		\\
\end{tabular}
\end{small}
\caption{Lower bounds for \pe{n}{4}, i.e., $n$-gon priority evacuation with 4 Servants. All results are novel. }
\label{tab: lb ngons k=4}
\end{table}

\section{Upper Bounds on \pe{n}{k}}
\label{sec: upper bounds to pe}

For each $n,k$, we prove Theorem~\ref{thm: n-gon k Servants upper and lower bounds} by presenting feasible search trajectories to \pe{n}{k} for the $k+1$ agents and performing a worst case analysis. Due to the large number pairs $(n,k)$ for which we report results, we move all the details to Section~\ref{sec: algo details upper bounds polygon priority evacuation}. In this section we demonstrate our techniques by discussing the solution to~\pe{9}{1}, which is also one of the upper bounds that is provably optimal. 
For this, we also standarize the notation we will use for all other upper bounds that we present in the Appendix. 

When considering a solution to~\pe{n}{k}, we fix some $(s,\rho)\in \mathcal X^n_k $ and then we describe the $(s,\rho)$-algorithm together with its analysis based on a figure (depicting the trajectories) and a table giving trajectories' details and performing the worst case analysis. Next we describe the notation that is used in the figures and the tables, e.g. 
Figure~\ref{9gonk1fig} and Table~\ref{9gonk1tab}.

Recall that permutation element $\rho_i$ denotes the $i^{th}$ discovered vertex, which is visited by agent $s_i\in \{0,\ldots,k\}$. 
Servants move always at full speed $1$ over the prescribed vertices, and hence no further detail on their movements is required. 
Their whereabouts are irrelevant after they have visited all their assigned vertices. 
We denote the length of a regular $n$-gon edge inscribed in the unit radius disk as $e_n=2\sin(\pi/n)$, and we denote the origin by $\cal{O}$. 
Vertices of the $n$-gon where previously denoted as $V^n_j$, $j=1,\ldots,n$. 
In order to ease notation, we drop the superscript, whenever $n$ is clear from the context, and in our figures we simply write $V_j$.

We use the colour red to denote the trajectory of the Servant(s), while the colour blue is used to denote the trajectory of the Queen. 
Moreover, we use square vertices to denote the placements of the exits which induce the maximum evacuation time for any given search trajectory. 
We use a solid line to denote the trajectory the agents follow. 
We use a dashed line to show when the Queen deviates from the prescribed trajectory to evacuate at an announced exit that yields the maximum evacuation time. 
%Moreover, these dashed lines show the trajectory of the queen for when a given vertex gives rise to the maximum evacuation time. 

We let $t_i$ denote the first time when vertex $\rho_i$ is discovered by an agent.
We denote by $Q^{(i)}$ the location of the Queen at time $t_i$. 
We use segments with endpoints of the form $Q^{(i)},Q^{(i+s)}$ to indicate that the Queen has stayed put at point $Q^{(i+s)}$ from time $t_{i+1}$ and until $t_{i+s}$
Also, the tables inlcude $d(Q^{(i)},Exit)$ which is the distance between the Queen and exit, when the exit is placed at vertex $\rho_i$.

The trajectories of the $(s,\rho)$-algorithms are formally given in tables. 
For each $i \in \{1,\ldots,n\}$, we specify the location of the Queen at time $t_i$. Together with the Queen's distance to vertex $\rho_i$, they define the evacuation cost for the exit placement $\rho_i$. 
Whenever the positioning of the Queen is not a vertex, or a self-evident point, we define it formally either as an abstract expression, or as the solution to a formal non-linear system (similarly to how upper bounds have been described for all previous positive results to priority evacuation problems). 
In the reported trajectory tables, cost entries indicated by $*$ denote the worst case evacuation time for the provided trajectory, over all exit placements. 
Now we discuss the solution details to~\pe{9}{1}. 

\begin{example}[Upper Bound to \pe{9}{1}]
For \pe{9}{1}, we use $s=(1,0,1,0,1,0,1,0,1)$ and $\rho=(1,2,9,3,8,4,7,5,6)$, see also Figure~\ref{9gonk1fig} and Table~\ref{9gonk1tab}.
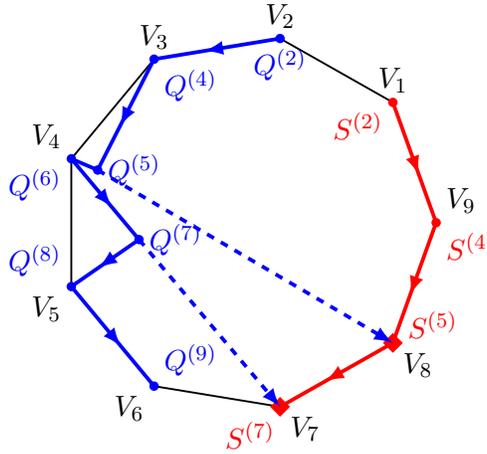
\begin{figure}[h!]
    \centering
    \begin{tikzpicture}[scale=0.25]
    \tikzset{
        mid arrow/.style={
            postaction={decorate, decoration={markings, mark=at position 0.6 with {\arrow{latex}}}}
        }
    }
        \node (a) at (7.7,6.4) {};
        \fill [fill=black] (a) circle (0.105) node [above] {$V_1$};
        \node (b) at (1.7,9.8) {};
        \fill [fill=black] (b) circle (0.105) node [above] {$V_2$};
        \node (c) at (-5,8.7) {};
        \fill [fill=black] (c) circle (0.105) node [above] {$V_3$};
        \node (d) at (-9.4,3.4) {};
        \fill [fill=black] (d) circle (0.105) node [above left] {$V_4$};
        \node (e) at (-9.4,-3.4) {};
        \fill [fill=black] (e) circle (0.105) node [below left] {$V_5$};
        \node (f) at (-5,-8.7) {};
        \fill [fill=black] (f) circle (0.105) node [below left] {$V_6$};
        \node (g) at (1.7,-9.8) {};
        \fill [fill=black] (g) circle (0.105) node [below right] {$V_7$};
        \node (h) at (7.7,-6.4) {};
        \fill [fill=black] (h) circle (0.105) node [below right] {$V_8$};
        \node (i) at (10,0) {};
        \fill [fill=black] (i) circle (0.105) node [above right] {$V_9$};

        \draw [line width=0.25mm, black] (10,0)--(7.7,6.4)--(1.7,9.8)--(-5,8.7)--(-9.4,3.4)--(-9.4,-3.4)--(-5,-8.7)--(1.7,-9.8)--(7.7,-6.4)--(10,0);
        %\draw [line width=0.5mm, black] [latex-latex] (0,-11) -- (0,11);
        %\draw [line width=0.5mm, black] [latex-latex] (-11,0) -- (11,0);

        \node at (a) (7.7,6.4) {};
        \fill [fill=red] (a) circle (0.25) node [below left] {\textcolor{red}{$S^{(2)}$}};
        \node at (i) (10,0) {};
        \fill [fill=red] (i) circle (0.25) node [below right] {\textcolor{red}{$S^{(4)}$}};
        \node[diamond, draw=red, fill=red, minimum size=0.25cm, inner sep=0] (h) at (7.7,-6.4) {};
        \node[above right] at (h.south east) {\textcolor{red}{$S^{(5)}$}};
        %\node at (h) (7.7,-6.4) {};
        %\fill [fill=red] (h) circle (0.25) node [above right] {\textcolor{red}{$S^{(3)}$}};
        \node[diamond, draw=red, fill=red, minimum size=0.25cm, inner sep=0] (g) at (1.7,-9.8) {};
        \node[below left] at (g.south east) {\textcolor{red}{$S^{(7)}$}};
        %\node at (g) (1.7,-9.8) {};
        %\fill [fill=red] (g) circle (0.25) node [below left] {\textcolor{red}{$S^{(5)}$}};
        \draw [line width=0.5mm, red, mid arrow] (7.7,6.4) -- (10,0);
        \draw [line width=0.5mm, red, mid arrow] (10,0) -- (7.7,-6.4);
        \draw [line width=0.5mm, red, mid arrow] (7.7,-6.4) -- (1.7,-9.8);

        \node at (b) (1.7,9.8) {};
        \fill [fill=blue] (b) circle (0.25) node [below] {\textcolor{blue}{$Q^{(2)}$}};
        \node at (c) (-5,8.7) {};
        \fill [fill=blue] (c) circle (0.25) node [below right] {\textcolor{blue}{$Q^{(4)}$}};
        %\node (j) at (-8.4,2.8) {};
        %\fill [fill=blue] (j) circle (0.25) node [right] {\textcolor{blue}{$Q^{(3)}$}};
        \node (j) at (-8,2.8) {};
        \fill [fill=blue] (j) circle (0.25) node [right] {\textcolor{blue}{$Q^{(5)}$}};
        \node at (d) (-9.4,3.4) {};
        \fill [fill=blue] (d) circle (0.25) node [below left] {\textcolor{blue}{$Q^{(6)}$}};
        \node (k) at (-5.8,-0.9) {};
        \fill [fill=blue] (k) circle (0.25) node [right] {\textcolor{blue}{$Q^{(7)}$}};
        \node at (e) (-9.4,-3.4) {};
        \fill [fill=blue] (e) circle (0.25) node [above left] {\textcolor{blue}{$Q^{(8)}$}};
        \node at (f) (-5,-8.7) {};
        \fill [fill=blue] (f) circle (0.25) node [above right] {\textcolor{blue}{$Q^{(9)}$}};
        \draw [line width=0.5mm, blue, mid arrow] (1.7,9.8) -- (-5,8.7);
        \draw [line width=0.5mm, blue, mid arrow] (-5,8.7) -- (-8,2.8);
        %\draw [line width=0.5mm, blue, mid arrow] (-8.4,2.8) -- (-9.4,3.4);
        \draw [line width=0.5mm, blue] (-8,2.8)  -- (-9.4,3.4) node[currarrow,blue,pos=0.55,xscale=-1,sloped,scale=0.75] {};
        \draw [line width=0.5mm, blue, mid arrow] (-9.4,3.4) -- (-5.8,-0.9);
        \draw [line width=0.5mm, blue, mid arrow] (-5.8,-0.9) -- (-9.4,-3.4);
        \draw [line width=0.5mm, blue, mid arrow] (-9.4,-3.4) -- (-5,-8.7);
        %\draw[dashed] [line width=0.5mm, blue] [-latex] (-8.4,2.8) -- (7.7,-6.4);
        \draw[dashed] [line width=0.5mm, blue] [-latex] (-8,2.8) -- (7.7,-6.4);
        \draw[dashed] [line width=0.5mm, blue] [-latex] (-5.8,-0.9) -- (1.7,-9.8);
    \end{tikzpicture}
    \caption{A search trajectory for \pe{9}{1}. 
Trajectory details can be found in Table~\ref{9gonk1tab}.
The Queen waits at $V_3$ for an amount of time such that the Queen arrives at $Q^{(5)}$ at time $2e_9$. The position of $Q^{(5)}$ has been adjusted for clarity.}
\label{9gonk1fig}
\end{figure}
\begin{table}[h!]
\centering
\begin{small}
    \begin{tabular}{|c|c|c|c|c|c|c|}
        \hline
        $i$ & Agent & Vertex & $t_i$ & $Q^{(i)}$ & $d(Q^{(i)} , Exit)$ & Cost \\
        \hline
        1 & $S_1$ & $V_1$ & 0 & $V_2$ & $e_9$ & $e_9$\\
        \hline
        2 & $Q$ & $V_2$ & 0 & $V_2$ & 0 & $2\sin(\tfrac{\pi}{9})$ \\
        \hline
        3 & $S_1$ & $V_9$ & $e_9$ & $V_3$ & $\sqrt{3}$ & $e_9+\sqrt{3}$ \\
        \hline
        4 & $Q$ & $V_3$ & $e_9$ & $V_3$ & 0 & $e_9$ \\
        \hline
        5 & $S_1$ & $V_8$ & $2e_9$ & $(-0.83682,0.28263)$ & 1.85083 & ${3.21891}^*$ \\
        \hline
        6 & $Q$ & $V_4$ & 1.48686 & $V_4$ & 0 & 1.48686 \\
        \hline
        7 & $S_1$ & $V_7$ & $3e_9$ & $(-0.57635,-0.09099)$ & 1.16679 & ${3.21891}^*$ \\
        \hline
        8 & $Q$ & $V_5$ & 2.49374 & $V_5$ & 0 & 2.49374 \\
        \hline
        9 & $Q$ & $V_6$ & 3.17778 & $V_6$ & 0 & 3.17778 \\
        \hline
    \end{tabular}
\end{small}
\caption{Trajectories' details for the upper bound of \pe{9}{1}; 
see Figure~\ref{9gonk1fig}.}
\label{9gonk1tab}
\end{table}

For the formal description of Queen's positions, we define points $Q^{(5)}$ and $Q^{(7)}$ as convex combinations of the points, $V_4$ and $V_8$, and $V_4$ and $V_7$, respectively. 
Thus, $Q^{(5)}=(1-l_1)V_4+l_1V_8$ and $Q^{(7)}=(1-l_2)V_4+l_2V_7$, for some $l_1,l_2\in[0,1]$. $l_1$ and $l_2$ are obtained as the solution to the following non-linear system
\begin{align}
&        l_1\| V_4 -V_8\|+l_2\| V_4-V_7\|=e_9     \label{k1n9eq1new} \\
&        (1-l_1)\| V_4 -V_8\|=l_1\| V_4 -V_8\|+(1-l_2)\| V_4-V_7\|+l_2\| V_4-V_7\|         \label{k1n9eq2new}
\end{align}
The solution to the above system satisfies $l_1\| V_4 -V_8\|\leq e_9$. 
\eqref{k1n9eq1new} is derived by imposing that the time taken by the Queen to travel from $Q^{(5)}$ to $V_4$ to $Q^{(7)}$ is exactly $e_9$. 
Moreover, \eqref{k1n9eq2new} is derived by equating the evacuation times for when the exit is placed at $V_8$ and $V_7$. It follows that $l_1=0.06031$ and $l_2=0.32635$.\\
%\textbf{Comment:} Added the values of $l_1$ and $l_2$
\end{example}

\section{Improved Lower Bounds on \de{k} for $k=1,2$}
\label{sec: lower bounds to de}

This section provides the proof of our second main contribution, as stated in Theorem~\ref{thm: improved lower bounds summary}. 
The key element in this proof is the following lemma, first utilized in~\cite{czyzowicz2020priority123} and restricted to $k=1$. 

\begin{lemma}
\label{lem: n-gon lb gives disk lb}
The following is true for all $n \geq 3$. 
Let $l_{n,k}$ be a lower bound on \pe{n}{k}. 
Then, no algorithm for \de{k} has evacuation cost less than 
$1+ 2\pi/((k+1)n) + l_{n,k}$.
\end{lemma}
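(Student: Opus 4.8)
The plan is to establish a reduction from the disk problem to the polygon problem: given any algorithm $\mathcal{A}$ for \de{k} with evacuation cost $C$, I would construct from it a feasible algorithm for \pe{n}{k} whose cost is at most $C$, shifted by an additive correction. Since the $n$-gon vertices $V^n_j$ all lie on the perimeter of the unit disk, any trajectories that solve \de{k} in particular visit every point on the perimeter, hence they visit all the $n$-gon vertices. The natural idea is to stop the disk algorithm at the moment the exit vertex is discovered and then let the Queen proceed to it, comparing this to what the disk algorithm would do for a nearby continuous exit placement.

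The key quantitative step concerns the additive term $2\pi/((k+1)n)$, which I expect arises from a covering/angular-spacing argument. Here is the intended accounting. Consider an optimal disk algorithm and an arbitrary $n$-gon vertex $V^n_j$ placed at angle $\theta_j$. The disk algorithm must eventually visit $V^n_j$; let $T_j$ be that visitation time and let the Queen incur total cost $T_j + \|V^n_j - \tau_0(T_j)\|_2 \le C$. The subtlety is that the polygon cost is measured from the moment the \emph{first} agent reaches the vertex, whereas for the disk we must account for the cost of covering the full circle. I would argue that among the $k+1$ agents sweeping the circle, by a pigeonhole/averaging argument some arc of angular length at least $2\pi/(k+1)$ is covered by a single agent; translating arc-coverage into elapsed time (an agent moving at unit speed on the unit circle covers angular measure at rate at most $1$) forces the last-discovered vertex to be reached no earlier than a time reflecting the $2\pi/((k+1)n)$ spacing between consecutive vertices the algorithm would otherwise skip. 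The ``$1+$'' term is the radius contribution, consistent with the reformulation in Lemma~\ref{lem: nlp is opt s,rho algo} where agents were reset to start on the perimeter and $1$ was subtracted from the evacuation cost; here we reinstate it.

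More concretely, I would run the argument in the contrapositive. Suppose for contradiction a disk algorithm achieves cost strictly below $1 + 2\pi/((k+1)n) + l_{n,k}$. I would extract from it a schedule visiting the $n$ polygon vertices: record for each vertex the first agent and time of visitation, inducing some configuration $(s,\rho)$. The disk-to-polygon loss is controlled by observing that consecutive perimeter points that the disk algorithm must traverse between polygon vertices span an angular gap of $2\pi/n$, of which a fraction $1/(k+1)$ must be shouldered by a single agent; subtracting the corresponding travel-time slack, together with the radial offset of $1$, yields a feasible \pe{n}{k} solution of cost below $l_{n,k}$, contradicting that $l_{n,k}$ lower-bounds \pe{n}{k}.

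The hard part will be making the angular/time bookkeeping rigorous, specifically justifying that exactly $2\pi/((k+1)n)$ (and not some weaker constant) can be saved when passing from the continuous disk to the discrete $n$-gon. This hinges on the pigeonhole claim that \emph{some} agent is individually responsible for covering at least a $1/(k+1)$ fraction of the angular separation between adjacent vertices, and on carefully relating that agent's arc-traversal time to the shift in visitation times. I would isolate this as a short sub-lemma and treat the remaining reduction (feasibility of the induced polygon trajectories and the cost comparison) as routine given Lemma~\ref{lem: nlp is opt s,rho algo} and the definition of \de{k}.
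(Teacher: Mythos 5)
Your reduction skeleton is sound and matches the paper's in spirit: the suffix of the disk schedule, restarted at the first time $t_0$ at which any polygon vertex is visited, is a feasible solution to \pe{n}{k} (initial placements are an algorithmic choice in the polygon problem), so a disk algorithm of cost $C$ satisfies $C \geq t_0 + l_{n,k}$, and the ``$1+$'' is indeed the radius contribution since $t_0 \geq 1$. The genuine gap is in how you extract the extra $2\pi/((k+1)n)$. Your pigeonhole --- some single agent shoulders a $1/(k+1)$ fraction of the $2\pi/n$ arc between adjacent vertices, and this ``delays'' discovery --- cannot work, for two reasons. First, for a \emph{fixed} polygon nothing forces any delay at all: a disk algorithm may send an agent straight to a vertex, achieving $t_0 = 1$ exactly, and the arcs between vertices can be swept \emph{after} all vertices have been visited, so their traversal time shifts neither the visitation times $T_j$ nor the Queen's positions at those times; it only inflates the disk cost at non-vertex exit placements, a quantity your accounting never uses. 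Second, delaying the \emph{last}-discovered vertex is the wrong target: the lower bound $l_{n,k}$ already charges for the entire schedule after $t_0$, so any delay occurring inside that schedule is double counting; the only quantity that can legitimately be added to $l_{n,k}$ is $t_0$ itself.

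The missing idea is that the polygon must be chosen \emph{adversarially as a rotation, after watching the algorithm} --- this is exactly what the paper does. Run the disk algorithm for time $1 + t - \epsilon$ with $t = 2\pi/((k+1)n)$. Since each agent needs time $1$ just to reach the perimeter, the set $S$ of visited perimeter points has measure at most $(k+1)(t-\epsilon) < 2\pi/n$. Parameterize rotations $\phi \in [0,2\pi/n)$ of the $n$-gon; as $\phi$ varies, the $n$ vertices sweep disjoint arcs, so if every rotation had at least one visited vertex we would get $|S| \geq 2\pi/n$, a contradiction. Hence some rotated copy has \emph{all} vertices unvisited at time $1+t-\epsilon$, i.e. $t_0 \geq 1 + t - \epsilon$ for that copy, and therefore $C \geq 1 + t - \epsilon + l_{n,k}$ for every $\epsilon>0$. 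This measure-over-rotations argument is what must replace your inter-vertex pigeonhole; it also explains why the additive term decays with $n$ --- the adversary's freedom to rotate shrinks --- rather than because any agent is forced to traverse part of each gap before the vertices are found.
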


\begin{proof}%[Proof of Lemma~\ref{lem: n-gon lb gives disk lb}]
Fix $\epsilon>0$, along with an algorithm for \de{k}.
We let the search algorithm run for time $1+t - \epsilon$, where $t=2\pi/((k+1)n)$
We claim that there is an $n$-gon $P_n$ inscribed in the DISK, such that none of its vertices has been explored. 

Indeed, let $S$ denote the subset of the disk searched by the $k+1$ agents. Because each agent needs time 1 to go the perimeter, 
and no agent has searched more than $t - \epsilon$, we have that $|S| \leq (k+1)(t - \epsilon) $. 
 
Now consider an arbitrary placement (inscription) of the n-gon $P_n$ in the disk. 
For any $\phi\in [0,2\pi/n)$, consider the corresponding rotation of $ P_n$ by $\phi$, and define the set 
$$D=\{ \phi \in [0,2\pi/n): ~\textrm{at least one vertices of}~P_n\textrm{, rotated by}~\phi,~\textrm{has been visited}\}.$$
Note that all rotations $\phi \in [0,2\pi/n)$ correspond to distinct $n$-gons, and the trajectories of the $n$-gon vertices over these rotations cover disjoint parts of the perimeter of the disk. 
Hence, assuming that for every rotation $\phi \in [0,2\pi/n)$ of $P_n$, at least one vertices of that $P_n$ has been visited, we conclude that $|D|=2\pi/n$, and hence 
that for the searched space we have $|S|\geq 2\pi/n$. 
Together with that $|S| \leq (k+1)(t - \epsilon) $, we conclude that $2\pi/n \leq (k+1)(t-\epsilon)$, a contradiction for $t=2\pi/((k+1)n)$. 

Hence, we showed that for every $\epsilon>0$, that can also be chosen to be arbitrarily small, 
after the algorithm for \de{k} runs for time $1+4\pi/((k+1)n) - \epsilon$, there is a $n$-gon with the property that none of its vertices has been explored. 
The $n$-gon can be presented then to the algorithm, and hence, the additional time required to terminate is that of solving the evacuation problem for $P_n$. 
Overall, we showed that \de{k} requires time
$1+ 4\pi/((k+1)n) + l_{n,k}'-\epsilon$, for every $\epsilon>0$. 
\end{proof}

We can now prove Theorem~\ref{thm: improved lower bounds summary} using Lemma~\ref{lem: n-gon lb gives disk lb} and the lower bounds established by Theorem~\ref{thm: n-gon k Servants upper and lower bounds}. 
For this, we utilize the already derived bounds of \pe{n}{k} for $n \geq 6$, and we summarize the induced lower bounds for \de{k} in Table~\ref{tab: summary disk lb based on old method}, which are in agreement with those reported in Theorem~\ref{thm: improved lower bounds summary}. 
The star $*$ next to some lower bounds indicates an improvement upon the previous best results known. Bold text indicates the best result for a fixed $k$, and over all values of $n$ for which a lower bound to \pe{n}{k} was available. 
There is no improvement in the state-of-the-art with this method for $k=3,4$, but we still report the derived lower bounds for completeness. 

%\todo{RECALCULATE ALL THESE VALUES TO MAKE SURE THEY ARE CORRECT}
\begin{table}[h]
\centering
\begin{scriptsize}
\begin{tabular}{|c|c|c|c|c|c|c|c|c|c|}
\hline
\diagbox{$k$}{$n$} & 6 & 7 & 8 & 9 & 10 & 11 & 12 & 13 & Best Previous LB  \\ \hline
1 & 4.38962 & 4.40005 & 4.3959 & 4.56798 & 4.50128  & 4.64138 & \textbf{4.64666}$^*$ & 4.60528  & 4.56798~\cite{GW24-iwoca}  \\ \hline
2 &3.34907  & 3.38268 & 3.49964 & 3.5856 & 3.58755 & \textbf{3.65332}$^*$ & NA & NA & 3.6307~\cite{czyzowicz2020priority123}  \\ \hline
3 & \textbf{3.12782} & 3.06709  & 3.10977 & 3.02537 & 3.10814 & NA & NA & NA & 3.2017~\cite{czyzowicz2020priority123}  \\ \hline
4 & 2.70944 & 2.82912 & \textbf{2.84633} & 2.80847 & 2.74369 & NA & NA & NA & 2.91322~\cite{czyzowicz2020priority4} \\ \hline
\end{tabular}
\end{scriptsize}
\caption{Lower bounds on \de{k} as derived as immediate corollaries of Lemma~\ref{lem: n-gon lb gives disk lb} and Theorem~\ref{thm: improved lower bounds summary}.
Entries with NA correspond to values of $(n,k)$ for which no lower bound to \pe{n}{k} was derived, and hence these parameters are not applicable to our current argument. 
}
\label{tab: summary disk lb based on old method}
\end{table}

\section{Upper and Lower Bounds on \wpe{w}{n} \& Improved Lower Bounds on \wde{w}}
\label{sec: w-weighted results}

In this section, we describe the derivation of values in Figures~\ref{fig: wWeightedDisk} and~\ref{fig: wWeightedgon}, leading to Theorems~\ref{thm: new w-weighted lower bound} and~\ref{thm: upper and lower w-weighted ngon}. Setting $k=1$, we begin with a modification to \ref{equa:nlp} to model \wpe{w}{n} for all $n \geq 3$ and $w \in [0,1]$. Note that \wpe{0}{n} coincides with problem \pe{n}{1}. The feasible trajectories of the two agents are characterized by the same series of constraints. The objective of \wpe{w}{n}, as defined in Section~\ref{sec: definition and results}, is to minimize
$
\max_{j \in \{1,\ldots, n\}} \left\{ t_j + \tfrac{d_{(0,j),(s_j,j)} + w \cdot d_{(1,j),(s_j,j)}}{1+w} \right\}.
$
Importantly, the objective remains linear in variables $d$. Therefore, similar to our approach in Lemma~\ref{lem: relaxation is LP}, dropping constraint~\ref{eq: emb const} results in a Linear Program, whose optimal solution provides a lower bound to the optimal $(s,\rho)$-algorithm for \wpe{w}{n}. We denote this Linear Program as $\textsc{wLP}^n(s,\rho)$, and its optimal solution as $wobj^n(s,\rho)$. The following lemma was used in~\cite{GW24-iwoca} only for $n \leq 7$.
\begin{lemma}[\cite{GW24-iwoca}]
\label{lem: lb to disk from ngon w-search}
No algorithm for \wde{w} has a cost less than $1 + \frac{\pi}{n} + \min_{(s,\rho) \in \mathcal{X}^n_1} wobj^n(s,\rho)$, for all $n \geq 3$.
\end{lemma}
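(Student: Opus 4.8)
The plan is to adapt the reduction of Lemma~\ref{lem: n-gon lb gives disk lb} to the $w$-weighted objective, reproducing its geometric covering argument and then invoking the relaxation machinery already developed for \wpe{w}{n}. Fix $w\in[0,1]$, $n\geq 3$, and $\epsilon>0$, together with an arbitrary feasible algorithm for \wde{w}. Since there are only $2$ agents, the total perimeter arc that can have been \emph{visited} by either agent after search time $\pi/n - \epsilon$ (measured from the moment the agents reach the perimeter, i.e., after the initial unit cost of leaving the origin) is at most $2(\pi/n-\epsilon)$, strictly less than $2\pi/n$. By the same rotation argument as in Lemma~\ref{lem: n-gon lb gives disk lb}, specialized to $k=1$ so that $2\pi/((k+1)n)=\pi/n$, there must exist a rotation $\phi$ of the inscribed $n$-gon $P_n$ none of whose vertices has yet been visited by \emph{either} agent. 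The key point to verify is that the relevant covered measure is the union of the two agents' visited arcs, which matches the $w$-weighted setting because visitation (discovery of the exit) is what matters for triggering the instance, independent of which agent subsequently completes the weighted cost.

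Next I would argue that, conditioned on this unexplored $n$-gon, the remaining cost the algorithm must incur is at least the optimal $w$-weighted cost of searching $P_n$, i.e. at least $\min_{(s,\rho)\in\mathcal X^n_1} wobj^n(s,\rho)$. This is where I would lean on the earlier development: by Lemma~\ref{lem: relaxation is LP} (as adapted to the $w$-weighted objective in the discussion preceding this lemma), $wobj^n(s,\rho)$ is a lower bound on the optimal $(s,\rho)$-algorithm for \wpe{w}{n}, and by the analogue of the Observation the optimal solution for the discrete instance is an $(s,\rho)$-algorithm, so $\min_{(s,\rho)} wobj^n(s,\rho)$ is a valid lower bound for \wpe{w}{n}. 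Because no vertex of $P_n$ has been touched at the stopping time, the adversary may reveal $P_n$ as the true target support, and the agents face a fresh instance of \wpe{w}{n} with both starting positions already on the perimeter (an algorithmic choice that only helps the algorithm, hence consistent with using the lower bound); the weighted completion cost from this point is therefore at least $\min_{(s,\rho)} wobj^n(s,\rho)$.

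Adding the already-incurred $1+\pi/n-\epsilon$ (the unit cost to reach the perimeter plus the search time) to this residual lower bound, and letting $\epsilon\to 0$, yields the claimed bound $1+\frac{\pi}{n}+\min_{(s,\rho)\in\mathcal X^n_1} wobj^n(s,\rho)$. I would close by noting that the bound holds for every $n\geq 3$ uniformly, so one may take the maximum over $n$, which is exactly how Figure~\ref{fig: wWeightedDisk} is generated for $n=11,12$.

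The main obstacle I anticipate is making the "residual cost is at least the discrete $w$-weighted optimum" step fully rigorous under the weighted objective. The subtlety is that the $w$-weighted cost is $(T_0(x)+w\,T_1(x))/(1+w)$ with \emph{both} agents' arrival times entering, so I must be careful that the time already elapsed is correctly shared: I would measure all times from a common clock, so that the residual subproblem is a genuine \wpe{w}{n} instance whose agents start on the perimeter at the stopping time, and verify that allowing arbitrary perimeter starting positions (rather than forcing the specific positions reached) can only decrease the residual cost, so the LP-based lower bound $\min_{(s,\rho)} wobj^n(s,\rho)$ still applies. A secondary point to handle cleanly is the measure-theoretic claim that the two agents' total visited arc has measure at most $2(\pi/n-\epsilon)$; this follows from each agent being $1$-Lipschitz after reaching the perimeter, but I would state it as the $k=1$ specialization of the covering bound already used in Lemma~\ref{lem: n-gon lb gives disk lb} to avoid redoing the argument.
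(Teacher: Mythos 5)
Your proposal is correct and takes essentially the approach the paper relies on: the paper itself states this lemma without proof (importing it from~\cite{GW24-iwoca}), but its proof of the analogous unweighted reduction, Lemma~\ref{lem: n-gon lb gives disk lb}, is exactly the rotation/covering argument you reproduce, specialized to $k=1$ so that $2\pi/((k+1)n)=\pi/n$. Your two added ingredients are also the right ones and are handled correctly: the common elapsed time $1+\pi/n-\epsilon$ factors out of the weighted average $(T_0(x)+wT_1(x))/(1+w)$ since it is shared by both agents, and the residual problem with forced starting positions can only cost more than the free-placement optimum of \wpe{w}{n}, which is in turn bounded below by $\min_{(s,\rho)} wobj^n(s,\rho)$.
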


Figure~\ref{fig: wWeightedDisk} shows the known upper bound for \wde{w} and the known lower bounds obtained by applying Lemma~\ref{lem: lb to disk from ngon w-search} with $n=7$. As described in Section~\ref{sec: lower bounds to pe}, our contribution includes pushing the computational boundaries for computing $\min_{(s,\rho) \in \mathcal{X}^n_1} wobj^n(s,\rho)$ for larger $n$, reaching $n=12$ for the $w$-weighted search problem. We compute lower bounds on \wpe{w}{n} for $w$ starting from $0$ with a step size of $0.01$. These linear programs were solved using Julia's JuMP Clp solver~\cite{Clp,dunning2017jump}. The strongest lower bounds were obtained for $n=11$ and $n=12$, with different $n$-gon lower bounds dominating for different values of $w$. Figure~\ref{fig: wWeightedDisk} depicts the new and stronger lower bounds obtained by applying Lemma~\ref{lem: lb to disk from ngon w-search} for $n=11$ and $n=12$.

The new lower bounds are stronger than the straightforward lower bound of $1 + \pi$ reported in~\cite{GW24-iwoca} for all $w \in [0,0.7]$, and represent an improvement over the previously best lower bound known for the same range of $w$. More importantly, the gap between the previously best upper and lower bounds is now reduced by more than half for all $w \in [0,0.5]$. This discussion concludes the statements in Theorem~\ref{thm: new w-weighted lower bound}.

Next, we justify and motivate Theorem~\ref{thm: upper and lower w-weighted ngon}. The argument used to derive Theorem~\ref{thm: new w-weighted lower bound} relies on the computed lower bounds for \wpe{w}{n}, which were obtained using LP relaxations to the exact formulations for determining optimal $(s,\rho)$-algorithms. If these relaxations introduced a significant gap, the lower bound argument would not be tight, modulo the reduction proposed in Lemma~\ref{lem: lb to disk from ngon w-search}. Therefore, for $w \in [0,1]$ and $n=11,12$, we aim to find upper bounds for \wpe{w}{n}. If these bounds match the derived lower bounds, it would show that the lower bound analysis is tight, and further improvement would require a different technique or higher values of $n$.

Despite the lower bounds on $\min_{(s,\rho) \in \mathcal{X}^n_1} wobj^n(s,\rho)$ being obtained for various $(s,\rho) \in \mathcal{X}^n_1$ as $w$ ranged over $[0,1]$, computing upper bounds can be done using fixed configurations $(s,\rho) \in \mathcal{X}^n_1$. For $n=11$, we use $s=(1,0,0,1,1,0,1,0,1,0,1)$ and $\rho=(1,2,3,11,10,4,9,5,8,6,7)$, while for $n=12$, we use $s=(1,0,1,0,1,0,1,0,1,0,1,0)$ and $\rho=(1,2,12,3,11,4,10,5,9,6,8,7)$. For these configurations, we compute a numerical solution to the Non-Linear Program~\eqref{equa:nlp} using the objective of~\wpe{w}{n}, from $w=0$ with a step size of $0.02$. The numerical solutions were obtained using Julia's JuMP Ipopt~\cite{Ipopt,dunning2017jump}, which uses an interior point method for solving NLPs. Although these solutions are only guaranteed to be locally optimal, they correspond to feasible search trajectories (which may not be optimal). Nevertheless, in Figure~\ref{fig: wWeightedgon}, the derived upper and lower bounds are tight or nearly tight, similar to what we observed for \pe{n}{k}.

\section{Discussion}
\label{sec: discussion}

In this work, we made several lower bound improvements for well-known multi-agent search problems. Our results are obtained by novel upper and lower bounds, often tight or nearly tight, for searching hidden items where target locations form regular $n$-gons.
We extend existing techniques to larger $n$-gons and more agents, addressing significant computational challenges. Our contributions also lie in overcoming these challenges. The tightness of our results for $n$-gons show that further improvements in lower bounds on disk-related search problems require either increasing \(n\) or changing the lower bound techniques.
The tight bounds are based on solutions to LP relaxations inspired by metric embedding relaxations, which relax Non-Linear Programs (NLPs). We observed that these relaxations often impose no gap, even though the solutions may induce metric-related discrepancies. Investigating this phenomenon and strengthening the LPs to model optimal search algorithms more would be valuable.
Finally, our techniques apply to other search problems, especially those with asymmetric objectives. It remains to be seen if these techniques can be adapted for the face-to-face search problems, where good lower bounds still elude us.

\bibliographystyle{abbrv}
\bibliography{ngonevacuation}

\appendix

\section{Algorithmic Details for the Upper Bounds on \pe{n}{k} }
\label{sec: algo details upper bounds polygon priority evacuation}

\subsection{Polygon Priority Evacuation with $1$ Servant}
\label{sec: 1 servant upper bounds}

Before we give the upper bound details for \pe{n}{1}, we present all relevant figures. 
\input{UpperBoundsFigs/k1servants/ALLgonk1fig}

\subsubsection{\pe{3}{1} }
\label{sec: n3k1}
We use $s=(1,0,0)$ and $\rho=(1,2,3)$. 
The agents' trajectories show in Figure~\ref{3gonk1fig} and in Table~\ref{3gonk1tab}. 
\begin{table}[h!]
\centering
\begin{small}
    \begin{tabular}{|c|c|c|c|c|c|c|}
        \hline
        $i$ & Agent & Vertex & $t_i$ & $Q^{(i)}$ & $d(Q^{(i)} , Exit)$ & Cost \\
        \hline
         1 & $S_1$  &$V_1$ & 0 & $V_2$ & $e_3$ & ${\sqrt{3}}^*$ \\
         \hline
         2 & $Q$ & $V_2$ & 0 &$V_2$ & 0 & 0 \\
         \hline
         3 & $Q$ & $V_3$ & $e_3$ & $V_3$ & 0 & ${\sqrt{3}}^*$ \\
         \hline
    \end{tabular}
    \end{small}
\caption{Trajectories' details for the upper bound of \pe{3}{1}; see Figure~\ref{3gonk1fig}.}
\label{3gonk1tab}
\end{table}

\subsubsection{\pe{4}{1} }
\label{sec: n4k1}
We use $s=(1,0,1,0)$ and $\rho=(1,2,4,3)$. 
The agents' trajectories show in Figure~\ref{4gonk1fig} and in Table~\ref{4gonk1tab}. 
\begin{table}[h!]
\centering
    \begin{small}
    \begin{tabular}{|c|c|c|c|c|c|c|}
        \hline
        $i$ & Agent & Vertex & $t_i$ & $Q^{(i)}$ & $d(Q^{(i)} , Exit)$ & Cost \\
        \hline
        1 & $S_1$ & $V_1$ & 0 & $V_2$ & $e_4$ & $e_4$ \\
        \hline
        2 & $Q$ & $V_2$ & 0 & $V_2$ & 0 & 0 \\
        \hline
        3 & $S_1$ & $V_4$ & $e_4$ & $(\tfrac{1}{2}(-1 + \sqrt{3}),\tfrac{1}{2}(1-\sqrt{3}))$ & $-1+\sqrt{3}$ & ${-1+\sqrt{2}+\sqrt{3}}^*$ \\
        \hline
        4 & $Q$ & $V_3$ & $-1+\sqrt{2}+\sqrt{3}$ & $V_3$ & 0 & ${-1+\sqrt{2}+\sqrt{3}}^*$  \\
        \hline
        \end{tabular}
            \end{small}
\caption{Trajectories' details for the upper bound of \pe{4}{1}; see Figure~\ref{4gonk1fig}.}
\label{4gonk1tab}
\end{table}

The evacuation time is maximized when the exit is placed at either $V_3$ or $V_4$. The point $Q^{(3)}$ is found by equating the evacuation times for both above placements of the exit. Moreover, $Q^{(3)}$ is given as the solution to the following non-linear system 
\begin{align}
&       \| V_3 -Q^{(3)}\| =\| V_4-Q^{(3)} \| 	        \label{k1n4eq1} \\
 &       \| V_2 -Q^{(3)}\|=e_4		        \label{k1n4eq2}
\end{align}
\eqref{k1n4eq1} is found by equating the evacuation times for when the exit is placed at $V_3$ and $V_4$. Similarly, \eqref{k1n4eq2} is derived from imposing that the distance the Queen travels from $Q^{(2)}$ to $Q^{(3)}$ is exactly $e_4$.

\subsubsection{\pe{5}{1} }
\label{sec: n5k1}
We use $s=(1,0,1,0,0)$ and $\rho=(1,4,5,3,2)$. 
The agents' trajectories show in Figure~\ref{5gonk1fig} and in Table~\ref{5gonk1tab}. 
Note that the Queen waits a $V_4$ for a time such that the Queen reaches $Q^{(3)}$ at time $e_5$.
\begin{table}[h!]
\centering
    \begin{small}
        \begin{tabular}{|c|c|c|c|c|c|c|}
        \hline
        $i$ & Agent & Vertex & $t_i$ & $Q^{(i)}$ & $d(Q^{(i)} , Exit)$ & Cost \\
        \hline
        1 & $S_1$ & $V_1$ & 0 & $V_4$ & $2\sin(\tfrac{2\pi}{5})$ &$2\sin(\tfrac{2\pi}{5})$ \\
        \hline
        2 & $Q$ & $V_4$ & 0 & $V_4$ & 0 & 0 \\
        \hline
        3 & $S_1$ & $V_5$ & $e_5$ & $(-0.46353,-0.47553)$ & 1.53884 & ${2.71441}^*$ \\
        \hline
        4 & $Q$ & $V_3$ & 1.53884 & $V_3$ & 0 & 1.53884 \\
        \hline
        5 & $Q$ & $V_2$ & 2.71441 & $V_2$ & 0 & ${2.71441}^*$  \\
        \hline
        \end{tabular}
      \end{small}
\caption{Trajectories' details for the upper bound of \pe{5}{1}; see Figure~\ref{5gonk1fig}.}
\label{5gonk1tab}
\end{table}

The evacuation time is maximized when the exit is placed at either $V_2$ or $V_5$. The point $Q^{(3)}$ is found by equating the evacuation times for both above placements of the exit. Note that we require that $Q^{(3)}$ is a convex combination of the points $V_3$ and $V_5$, where $Q^{(3)}=(1-l)V_3+lV_5$, and $l\in [0,1]$. The value of $l$ can be found by solving the following equation $(1-2l)\| V_3 -V_5 \| = e_5$, asserting that the evacuation times for when the exit is placed at $V_5$ and $V_2$ have to be equal. It follows that $l=0.19098$\\

\subsubsection{\pe{6}{1} }
\label{sec: n6k1}
We use $s=(1,0,1,0,1,0)$ and $\rho=(1,2,6,3,5,4)$. 
The agents' trajectories show in Figure~\ref{6gonk1fig} and in Table~\ref{6gonk1tab}. 
The Queen waits at $Q^{(3)}$ until time $e_6$ has elapsed.
\begin{table}[h!]
\centering
    \begin{small}
    \begin{tabular}{|c|c|c|c|c|c|c|}
        \hline
        $i$ & Agent & Vertex & $t_i$ & $Q^{(i)}$ & $d(Q^{(i)} , Exit)$ & Cost \\
        \hline
        1 & $S_1$ & $V_1$ & 0 & $V_2$ & $e_6$ & $e_6$ \\
        \hline
        2 & $Q$ & $V_2$ & 0 & $V_2$ & 0 & 0 \\
        \hline
        3 & $S_1$ & $V_6$ & $e_6$ & $(-\tfrac{\sqrt{3}}{2}, 0)$ & $1+\tfrac{\sqrt{3}}{2}$ & ${2+\tfrac{\sqrt{3}}{2}}^*$\\
        \hline
        4 & $Q$ & $V_3$ & $2-\tfrac{\sqrt{3}}{2}$ & $V_3$ & 0 & $2-\tfrac{\sqrt{3}}{2}$ \\
        \hline
        5 & $S_1$ & $V_5$ & $2e_6$ & $(-\tfrac{1}{4},-\tfrac{\sqrt{3}}{4}) $ & $\tfrac{\sqrt{3}}{2}$ & ${2+\tfrac{\sqrt{3}}{2}}^*$\\
        \hline
        6 & $Q$ & $V_4$ & $\tfrac{5}{2}$ & $V_4$ & 0 & $\tfrac{5}{2}$\\
        \hline
    \end{tabular}
  \end{small}
\caption{Trajectories' details for the upper bound of \pe{6}{1}; 
see Figure~\ref{6gonk1fig}.}
\label{6gonk1tab}
\end{table}

The evacuation time is maximized when the exit is placed at either $V_6$ or $V_5$. The point $Q^{(3)}$ and $Q^{(5)}$ are found by equating the evacuation times for both above placements of the exit. Moreover, we require that the points $Q^{(3)}$ and $Q^{(5)}$ are convex combinations of the points, $V_3$ and $V_6$, and $V_3$ and $V_5$, respectively. 
%Thus, $Q^{(3)}$ lies on the chord between $V_3$ and $V_6$ and $Q^{(5)}$ lies on the chord between $V_3$ and $V_5$. 
Hence, we require that $Q^{(3)}=(1-l_1)V_3 +l_1V_6$ and $Q^{(5)}=(1-l_2)V_3+l_2V_5$, for $l_1,l_2\in [0,1]$. 
We determine $l_1$ and $l_2$ by solving to the following system.
\begin{eqnarray}
&    (1-l_1) \| V_6-V_3 \|-(1-l_2)\| V_5-V_3 \|=e_6 	\label{k1n6eq1} \\
&    l_1 \| V_6-V_3 \| + l_2 \| V_5-V_3 \| =e_6	\label{k1n6eq2}
\end{eqnarray}
\eqref{k1n6eq1} is derived from equating the evacuation times for when the exit is placed at $V_6$ and $V_5$. Similarly, \eqref{k1n6eq2} is derived by imposing that the time taken by the Queen to travel from $Q^{(3)}$ to $V_3$ to $Q^{(5)}$ is $e_6$. It follows that $l_1=\tfrac{1}{2}(1-\tfrac{\sqrt{3}}{2})$ and $l_2=\tfrac{1}{2}$.

\subsubsection{\pe{7}{1} }
\label{sec: n7k1}
We use $s=(0,1,1,0,1,0,0)$ and $\rho=(1,2,3,7,4,6,5)$. 
The agents' trajectories show in Figure~\ref{7gonk1fig} and in Table~\ref{7gonk1tab}. 
\begin{table}[h!]
\centering
    \begin{small}
    \begin{tabular}{|c|c|c|c|c|c|c|}
        \hline
        $i$ & Agent & Vertex & $t_i$ & $Q^{(i)}$ & $d(Q^{(i)} , Exit)$ & Cost \\
        \hline
        1 & $Q$ & $V_1$ & 0 & $V_1$ & 0 & 0\\
        \hline
        2 & $S_1$ & $V_2$ & 0 & $V_1$ & $e_7$ & $e_7$ \\
        \hline
        3 & $S_1$ & $V_3$ & $e_7$ & $V_7$ & $2\sin(\tfrac{3\pi}{7})$ & $e_7+2\sin(\tfrac{3\pi}{7})$ \\
        \hline
        4 & $Q$ & $V_7$ & $e_7$ & $V_7$ & 0 & $e_7$  \\
        \hline
        5 & $S_1$ & $V_4$ & $2e_7$ & $(0.33162,-0.55344)$ & 1.23838 & ${2.97391}^*$ \\
        \hline
        6 & $Q$ & $V_6$ & 2.10614 & $V_6$ & 0 & 2.10614 \\
        \hline
        7 & $Q$ & $V_5$ & 2.97391 & $V_5$ & 0 & ${2.97391}^*$ \\
        \hline
    \end{tabular}
\end{small}
\caption{Trajectories' details for the upper bound of \pe{7}{1}; 
see Figure~\ref{7gonk1fig}.}
\label{7gonk1tab}
\end{table}

The evacuation time is maximized when the exit is placed at either $V_4$ or $V_5$. The point $Q^{(5)}$ is found by equating the evacuation times for both above placements of the exit. The point $Q^{(5)}$ is given as the solution to the following non-linear system
    \begin{eqnarray}
&        \| Q^{(5)}-V_4 \| -\| Q^{(5)}-V_6\|=e_7	        \label{k1n7eq1} \\
&        \| Q^{(5)}-V_7 \| = e_7        \label{k1n7eq2}
    \end{eqnarray}
\eqref{k1n7eq1} is derived by equating the evacuation times for when the exit is placed at $V_4$ and $V_5$. 
Similarly, \eqref{k1n7eq2} is derived by imposing that the time taken for the Queen to travel from $V_7$ to $Q^{(5)}$ is exactly $e_7$.

\subsubsection{\pe{8}{1} }
\label{sec: n8k1}
We use $s=(0,1,1,0,1,0,1,0)$ and $\rho=(1,2,3,8,4,6,5,7)$. 
The agents' trajectories show in Figure~\ref{8gonk1fig} and in Table~\ref{8gonk1tab}. 
\begin{table}[h!]
\centering
\begin{small}
    \begin{tabular}{|c|c|c|c|c|c|c|}
        \hline
        $i$ & Agent & Vertex & $t_i$ & $Q^{(i)}$ & $d(Q^{(i)} , Exit)$ & Cost \\
        \hline
        1 & $Q$ & $V_1$ & 0 & $V_1$ & 0 & 0 \\
        \hline
        2 & $S_1$ & $V_2$ & 0 & $V_1$ & $e_8$ & $e_8$ \\
        \hline
        3 & $S_1$ & $V_3$ & $e_8$ & $V_8$ & $2\sin(\tfrac{3\pi}{8})$ & $\sqrt{2(2+\sqrt{2})}$ \\
        \hline
        4 & $Q$ & $V_8$ & $e_8$ & $V_8$ & 0 & $e_8$ \\
        \hline
        5 & $S_1$ & $V_4$ & $2e_8$ & $(0.41287,-0.49099)$ & 1.49575 & ${3.02649}^*$ \\
        \hline
        6 & $Q$ & $V_6$ & 2.18614 & $V_6$ & 0 & 2.18614  \\
        \hline
        7 &  $S_1$ & $V_5$ & $3e_8$ & $(0,-0.89004$) & 0.76039 & ${3.02649}^*$ \\
        \hline
        8 & $Q$ & $V_7$ & 3.02649 & $V_7$ & 0 & ${3.02649}^*$ \\
        \hline
    \end{tabular}
\end{small}
\caption{Trajectories' details for the upper bound of \pe{8}{1}; 
see Figure~\ref{8gonk1fig}.}
\label{8gonk1tab}
\end{table}

The evacuation time is maximized when the exit is placed at either $V_8$, $V_5$, or $V_7$. The points $Q^{(5)}$ and $Q^{(7)}$ are found by equating the evacuation times for all above placements of the exit. The points $Q^{(5)}$ and $Q^{(7)}$ are given as the solution to the following non-linear system
    \begin{eqnarray}
&        \| Q^{(5)}-V_8 \| = e_8		  \label{k1n8eq1}\\
&        \| Q^{(5)}-V_6 \| +\| Q^{(7)}-V_6 \| = e_8		    \label{k1n8eq2}\\
&      \| Q^{(7)}-V_7 \|-\| Q^{(7)}-V_5 \| =0		  \label{k1n8eq3} \\
&        \| Q^{(5)}-V_4 \|-\| Q^{(7)}-V_5 \|=e_8		    \label{k1n8eq4}
    \end{eqnarray}
\eqref{k1n8eq1} is derived by imposing that the time taken by the Queen to travel from $V_8$ to $Q^{(5)}$ is exactly $e_8$. 
\eqref{k1n8eq2} is derived by imposing that the time taken by the Queen to travel from $Q^{(5)}$ to $V_6$ to $Q^{(7)}$ is exactly $e_8$. 
\eqref{k1n8eq3} is derived by equating the evacuation times for when the exit is placed at $V_7$ and $V_5$. 
\eqref{k1n8eq4} is derived by equating the evacuation time for when the exit is placed at $V_4$ and $V_5$. \\

\subsubsection{\pe{9}{1} }
\label{sec: n9k1}
We use $s=(1,0,1,0,1,0,1,0,1)$ and $\rho=(1,2,9,3,8,4,7,5,6)$. 
The agents' trajectories show in Figure~\ref{9gonk1fig} and in Table~\ref{9gonk1tab}. 
The Queen waits at $V_3$ for an amount of time such that the Queen arrives at $Q^{(5)}$ at time $2e_9$.

The evacuation time is maximized when the exit is placed at either $V_8$ or $V_7$. The points $Q^{(5)}$ and $Q^{(7)}$ are found by equating the evacuation times for both above placements of the exit. The points $Q^{(5)}$ and $Q^{(7)}$ are given as convex combinations of the points, $V_4$ and $V_8$, and $V_4$ and $V_7$, respectively. 
Thus, we require that 
%$Q^{(3)}$  lies on the chord between $V_4$ and $V_8$, and  $Q^{(5)}$ lies on the chord between $V_4$ and $V_7$. Moreover, these points can be formally defined as 
$Q^{(5)}=(1-l_1)V_4+l_1V_8$ and $Q^{(7)}=(1-l_2)V_4+l_2V_7$, for some $l_1,l_2\in [0,1]$. The values for $l_1$ and $l_2$ are given as the solution to the following non-linear system
    \begin{eqnarray}
&      l_1\| V_4 -V_8\|+l_2\| V_4-V_7\|=e_9	    \label{k1n9eq1}\\
&      (1-l_1)\| V_4 -V_8\|=l_1\| V_4 -V_8\|+(1-l_2)\| V_4-V_7\|+l_2\| V_4-V_7\|.		        \label{k1n9eq2}
    \end{eqnarray}
The solution to the system above satisfies $l_1\| V_4 -V_8\|\leq e_9$. 
Moreover, \eqref{k1n9eq1} is derived by imposing that the time taken by the Queen to travel from $Q^{(5)}$ to $V_4$ to $Q^{(7)}$ is exactly $e_9$. 
Similarly, \eqref{k1n9eq2} is derived by equating the evacuation times for when the exit is placed at $V_8$ and $V_7$.
It follows that $l_1=0.06031$ and $l_2=0.32635$.

\subsubsection{\pe{10}{1} }
\label{sec: n10k1}
We use $s=(1,0,1,0,1,0,1,0,1,0)$ and $\rho=(1,2,10,3,9,4,8,6,7,5)$. 
The agents' trajectories show in Figure~\ref{10gonk1fig} and in Table~\ref{10gonk1tab}. 
The Queen waits at $V_3$ for an amount of time such that the Queen arrives at $Q^{(5)}$ at time $2e_{10}$. The position of $Q^{(5)}$ has been adjusted for clarity.
\begin{table}[h!]
\centering
\begin{small}
    \begin{tabular}{|c|c|c|c|c|c|c|}
        \hline
        $i$ & Agent & Vertex & $t_i$ & $Q^{(i)}$ & $d(Q^{(i)} , Exit)$ & Cost \\
        \hline
        1 & $S_1$ & $V_1$ & 0 & $V_2$ & $\tfrac{1}{2}(-1+\sqrt{5})$ & $\tfrac{1}{2}(-1+\sqrt{5})$  \\
        \hline
        2 & $Q$ & $V_2$ & 0 & $V_2$ & 0 & 0 \\
        \hline
        3 & $S_1$ & $V_{10}$ & $e_{10}$ & $V_3$ & $2\sin(\tfrac{3\pi}{10})$ & $\tfrac{1}{2}(-1+\sqrt{5})+2\sin(\tfrac{3\pi}{10})$ \\
        \hline
        4 & $Q$ & $V_3$ & $e_{10}$ & $V_3$ & 0 & $\tfrac{1}{2}(-1+\sqrt{5})$ \\
        \hline
        5 & $S_1$ & $V_9$ & $2e_{10}$ & $(-0.79237,0.57569)$ & 1.97942 & ${3.21549}^{*}$ \\
        \hline
        6 & $Q$ & $V_4$ & 1.2506 & $V_4$ & 0 & 1.2506 \\
        \hline
        7 & $S_1$ & $V_8$ & $3e_{10}$ & $(-0.65457,0.01064)$ & 1.36138 & ${3.21549}^{*}$ \\
        \hline
        8 & $Q$ & $V_6$ & $4e_{10}$ & $V_6$ & 0 & $4e_{10}$ \\
        \hline
        9 & $S_1$ & $V_7$ & $4e_{10}$ & $V_6$ & $e_{10}$ & $5e_{10}$ \\
        \hline
        10 & $Q$ & $V_5$ & $5e_{10}$ & $V_5$ & 0 & $5e_{10}$ \\
        \hline
    \end{tabular}
\end{small}
\caption{Trajectories' details for the upper bound of \pe{10}{1}; 
see Figure~\ref{10gonk1fig}.}
\label{10gonk1tab}
\end{table}

The evacuation time is maximized when the exit is placed at either $V_9$ or $V_8$. The points $Q^{(5)}$ and $Q^{(7)}$ are found by equating the evacuation times for both above placements of the exit. The points $Q^{(5)}$ and $Q^{(7)}$ are given as the solution to the following non-linear system
    \begin{eqnarray}
&        l\| V_4-V_9\| + \| Q^{(7)}-V_4\|=e_{10}	        \label{k1n10eq1}\\
&      (1-l)\| V_4-V_9\|-\| Q^{(7)}-V_8\|=e_{10}.	      \label{k1n10eq2}\\
&	      \| V_3-((1-l)V_4+lV_9)\| + t =e_{10}		                    \label{k1n10eq3}\\
    \end{eqnarray}

\eqref{k1n10eq1} is derived by imposing that the time taken by the Queen to travel from $Q^{(5)}$ to $V_4$ to $Q^{(7)}$ is exactly $e_{10}$.
\eqref{k1n10eq2} is derived by imposing equating the evacuation times for when the exit is placed at $V_8$ and $V_9$.
\eqref{k1n10eq3} is derived by imposing that the Queen arrives at $Q^{(5)}$ at time $2e_{10}$. Note that $t$ denotes the total time the Queen spends waiting, given as $t=e_{10}-\| Q^{(5)}-V_3\|$.
It follows that $l=0.01029$.

\subsubsection{\pe{11}{1}}
\label{sec: n11k1}
We use $s=(1,0,0,1,1,0,1,0,1,0,0)$ and $\rho=(1,2,3,11,10,4,9,5,8,6,7)$. 
The agents' trajectories show in Figure~\ref{11gonk1fig} and in Table~\ref{11gonk1tab}. 
\begin{table}[h!]
\centering
\begin{small}
    \begin{tabular}{|c|c|c|c|c|c|c|}
        \hline
        $i$ & Agent & Vertex & $t_i$ & $Q^{(i)}$ & $d(Q^{(i)} , Exit)$ & Cost \\
        \hline
        1 & $S_1$ & $V_1$ & 0 & $V_2$ & $e_{11}$ & $e_{11}$ \\
        \hline
        2 & $Q$ & $V_2$ & 0 & $V_2$ & 0 & 0 \\
        \hline
        3 & $Q$ & $V_3$ & $e_{11}$ & $V_3$ & 0 & $e_{11}$ \\
        \hline
        4 & $S_1$ & $V_{11}$ & $e_{11}$ & $V_3$ & $2\sin(\tfrac{3\pi}{11})$ & $e_{11}+\sin(\tfrac{3\pi}{11}))$ \\
        \hline
        5 & $S_1$ & $V_{10}$ & $2e_{11}$ & $V_4$ & 1.97964 & 3.10657 \\
        \hline
        6 & $Q$ & $V_4$ & $2e_{11}$ & $V_4$ & 0 & $4\sin(\tfrac{\pi}{11})$ \\
        \hline
        7 & $S_1$ & $V_9$ & $3e_{11}$ & $(-0.81659,0.21599)$ & 1.66880 & ${3.35919}^{*}$ \\
        \hline
        8 & $Q$ & $V_5$ & 1.84769 & $V_5$ & 0 & 1.84769 \\
        \hline
        9 & $S_1$ & $V_8$ & $4e_{11}$ & $(-0.73990,-0.05996)$ & 1.10533 & ${3.35919}^{*}$ \\
        \hline
        10 & $Q$ & $V_6$ & 2.56596 & $V_6$ & 0 & 2.56596 \\
        \hline
        11 & $Q$ & $V_7$ & 3.12942 & $V_7$ & 0 & 3.12942 \\
        \hline
    \end{tabular}
\end{small}
\caption{Trajectories' details for the upper bound of \pe{11}{1}; 
see Figure~\ref{11gonk1fig}.}
\label{11gonk1tab}
\end{table}

The evacuation time is maximized when the exit is placed at either $V_9$ or $V_8$. The points $Q^{(7)}$ and $Q^{(9)}$ are found by equating the evacuation times for both above placements of the exit. The points $Q^{(7)}$ and $Q^{(9)}$ are given by the solution to the following non-linear system. Note that we require that $Q^{(9)}$ is a convex combination of the points $V_8$ and $V_5$, where $Q^{(9)}=(1-l)V_5+lV_8$ and $l\in [0,1]$.
    \begin{eqnarray}
&       \| Q^{(7)}-V_4\|=e_{11}	        \label{k1n11eq1} \\
 &     \| Q^{(7)}-V_9\| - \| Q^{(7)}-V_5 \| = \| V_8 - V_5 \|	      \label{k1n11eq2} \\
&   l\| V_8 -V_5 \|+\| Q^{(7)}-V_5 \|=e_{11}	        \label{k1n11eq3}
    \end{eqnarray}
\eqref{k1n11eq1} is derived by imposing that the time taken by the Queen to travel from $V_4$ to $Q^{(7)}$ is exactly $e_{11}$. 
\eqref{k1n11eq2} is derived by equating the evacuation times for when the exit is placed at $V_9$ and $V_8$. 
\eqref{k1n11eq3} is derived by imposing that the time taken for the Queen to travel from $Q^{(7)}$ to $V_5$ to $Q^{(9)}$ is exactly $e_{11}$. 
It follows that $l=0.26872$.

\subsubsection{\pe{12}{1} }
\label{sec: n12k1}
We use $s=(1,0,0,1,0,1,1,0,1,1,0,1)$ and $\rho=(1,2,3,12,4,11,10,5,9,8,6,7)$. 
The agents' trajectories show in Figure~\ref{12gonk1fig} and in Table~\ref{12gonk1tab}. 
\begin{table}[h!]
\centering
\begin{small}
    \begin{tabular}{|c|c|c|c|c|c|c|}
        \hline
        $i$ & Agent & Vertex & $t_i$ & $Q^{(i)}$ & $d(Q^{(i)} , Exit)$ & Cost \\
        \hline
        1 & $S_1$ & $V_1$ & 0 & $V_2$ & $e_{12}$ & $e_{12}$ \\
        \hline
        2 & $Q$ & $V_2$ & 0 & $V_2$ & 0 & 0 \\
        \hline
        3 & $Q$ & $V_3$ & $e_{12}$ & $V_3$ & 0 & $e_{12}$ \\
        \hline
        4 & $S_1$ & $V_12$ & $e_{12}$ & $V_3$ & $\sqrt{2}$ & $e_{12}+\sqrt{2}$ \\
        \hline
        5 & $Q$ & $V_4$ & $2e_{12}$ & $V_4$ & 0 & $2e_{12}$ \\
        \hline
        6 & $S_1$ & $V_{11}$ & $2e_{12}$ & $V_4$ & $\tfrac{\sqrt{6}+\sqrt{2}}{2}$ & $\tfrac{3\sqrt{6}-\sqrt{2}}{2}$ \\
        \hline
        7 & $S_1$ & $V_{10}$ & $3e_{12}$ & $(-0.78869,0.43637)$ & 1.83220 & ${3.38511}^*$ \\
        \hline
        8 & $Q$ & $V_5$ & 1.65320 & $V_5$ & 0 & 1.65306 \\
        \hline
        9 & $S_1$ & $V_9$ & $4e_{12}$ & $(-0.68500,0.18645)$ & 1.37000 & ${3.38511}^*$\\
        \hline
        10 & $S_1$ & $V_8$ & $5e_{12}$ & $(-0.86048,-0.17482)$ & 0.77955 & 3.19631\\
        \hline
        11 & $Q$ & $V_6$ & 2.64042 & $V_6$ & 0 & 2.64042 \\
        \hline
        12 & $Q$ & $V_7$ & 3.15806 & $V_7$ & 0 & 3.15806 \\
        \hline
    \end{tabular}
\end{small}
\caption{Trajectories' details for the upper bound of \pe{12}{1}; 
see Figure~\ref{12gonk1fig}.}
\label{12gonk1tab}
\end{table}

The evacuation time is maximized when the exit is placed at either $V_{10}$ or $V_9$. The points $Q^{(7)}$ and $Q^{(9)}$ are found by equating the evacuation times for both above placements of the exit. The points $Q^{(7)}$ and $Q^{(9)}$ are given as the solution to the following non-linear system.  We require that $Q^{(9)}$ is a convex combination of the points $V_5$ and $V_9$, hence $Q^{(9)}=(1-l)V_5+lV_9$, for some $l\in [0,1]$. 
    \begin{eqnarray}
&        \| Q^{(7)}-V_4 \| = e_{12}		        \label{k1n12eq1} \\
&      \| Q^{(7)}-V_5 \| + \| V_5-V_9 \| = \| Q^{(7)}-V_{10} \|	        \label{k1n12eq2} \\
&      \| V_5-Q^{(7)}\|+l\|V_5-V_9\| =e_{12}			        \label{k1n12eq3}
    \end{eqnarray}
\eqref{k1n12eq1} is derived by imposing that the time taken for the Queen to travel from $V_4$ to $Q^{(7)}$ is exactly $e_{12}$. 
Similarly, \eqref{k1n12eq2} is derived by equating the evacuation times for when the exit is placed at $V_9$ and $V_8$.
\eqref{k1n12eq3} is derived by imposing that the time take for the Queen to travel from $Q^{(7)}$ to $V_5$ to $Q^{(9)}$ is exactly $e_{12}$.
It follows that $l=0.20903$.

\subsubsection{\pe{13}{1} }
\label{sec: n13k1}
We use $s=(0,1,0,1,1,0,1,0,1,1,0,1,0)$ and $\rho=(1,2,13,3,4,12,5,11,6,7,9,8,10)$. 
The agents' trajectories show in Figure~\ref{12gonk1fig} and in Table~\ref{12gonk1tab}. 
Positions for $Q^{(7)}$ and $Q^{(12)}$ have been adjusted for clarity.
\begin{table}[h!]
\centering
\begin{small}
    \begin{tabular}{|c|c|c|c|c|c|c|}
        \hline
        $i$ & Agent & Vertex & $t_i$ & $Q^{(i)}$ & $d(Q^{(i)} , Exit)$ & Cost \\
        \hline
        1 & $Q$ & $V_1$ & 0 & $V_1$ & 0 & 0 \\
        \hline
        2 & $S_1$ & $V_2$ & 0 & $V_1$ & $e_{13}$ & $e_{13}$\\
        \hline
        3 & $Q$ & $V_{13}$ & $e_{13}$ & $V_{13}$ & 0 & $e_{13}$ \\
        \hline
        4 & $S_1$ & $V_3$ & $e_{13}$ & $V_{13}$ & $2\sin(\tfrac{3\pi}{13})$ & $e_{13}+\sin(\tfrac{3\pi}{13}))$ \\
        \hline
        5 & $S_1$ & $V_4$ & $2e_{13}$ & $V_{12}$ & $2\sin(\tfrac{5\pi}{13})$ & $2(e_{13}+\sin(\tfrac{5\pi}{13}))$ \\
        \hline
        6 & $Q$ & $V_{12}$ & $2e_{13}$ & $V_{12}$ & 0 & $2e_{13}$ \\
        \hline
        7 & $S_1$ & $V_5$ & $3e_{13}$ & $(-0.52981,-0.77980)$ & 1.92773 & ${3.36362}^*$ \\
        \hline
        8 & $Q$ & $V_{11}$ & 1.49359 & $V_{11}$ & 0 & 1.49359 \\
        \hline
        9 & $S_1$ & $V_6$ & $4e_{13}$ & $(0.22163,-0.58386)$ & 1.44909 & ${3.36362}^*$ \\
        \hline
        10 & $S_1$ & $V_7$ & $5e_{13}$ & $(-0.20052,-0.71402)$ & 0.90493 & 3.29809 \\
        \hline
        11 & $Q$ & $V_9$ & 2.66256 & $V_9$ & 0 & 2.66256 \\
        \hline
        12 & $S_1$ & $V_8$ & $6e_{13}$ & $(-0.30447,-0.84673)$ & 0.48050 & 3.35229 \\
        \hline
        13 & $Q$ & $V_{10}$ & 3.32117 & $V_{10}$ & 0 & 3.32117 \\
        \hline
    \end{tabular}
\end{small}
\caption{Trajectories' details for the upper bound of \pe{13}{1}; 
see Figure~\ref{13gonk1fig}.}
\label{13gonk1tab}
\end{table}

The evacuation time is maximized when the exit is placed at either $V_6$ or $V_5$. The points $Q^{(7)}$ and $Q^{(9)}$ are found by equating the evacuation times for both above placements of the exit. We require that $Q^{(7)}$ and $Q^{(9)}$ are convex combinations of the points, $V_5$ and $V_{11}$, and $V_6$ and $V_{11}$, respectively. Hence, we have 
$Q^{(7)}=(1-l_1)V_5 + l_1V_{11}$ and $Q^{(7)}=(1-l_2)V_6+l_2V_{11}$, for osme $l_1,l_2 \in [0,1]$. 
These values can be found by solving the following system of equations. 
    \begin{eqnarray}
&        (1-l_1)\| V_{11}-V_5\| + (1-l_2)\| V_{11}-V_6\| =e_{13}	        \label{k1n13eq1} \\
&       (1-l_1)\| V_{11}-V_5\| +\| V_{11}-V_6\|=l_1\| V_{11}-V_5\|		        \label{k1n13eq2}
    \end{eqnarray}
The solution to the system above satisfies $ \| V_{12}-Q^{(7)}\| \leq e_{13}$
\eqref{k1n13eq1} is derived by imposing that the time taken for the Queen to travel from $Q^{(7)}$ to $V_{11}$ to $Q^{(9)}$ is exactly $e_{13}$. 
\eqref{k1n13eq2} is derived by equating the evacuation times for when the exit is placed $V_5$ and $V_6$.
It follows that $l_1=0.97094$ and $l_2=0.77490$.\\

\subsection{Polygon Priority Evacuation with $2$ Servants}
\label{sec: 2 servant upper bounds}

Before we give the upper bound details for \pe{n}{2}, we present all relevant figures. 
\input{UpperBoundsFigs/k2servants/ALLgonk2fig}

\subsubsection{\pe{3}{2} }
\label{sec: n3k2}
We use $s=(1,2,0)$ and $\rho=(1,2,3)$. 
The agents' trajectories show in Figure~\ref{3gonk2fig} and in Table~\ref{3gonk2tab}. 
\begin{table}[h!]
\centering
\begin{small}
    \begin{tabular}{|c|c|c|c|c|c|c|}
        \hline
        $i$ & Agent & Vertex & $t_i$ & $Q^{(i)}$ & $d(Q^{(i)} , Exit)$ & Cost \\
        \hline
        1 & $S_1$ & $V_1$ & 0 & $\cal{O}$ & 1 & 1 $^*$\\
        \hline
        2 & $S_2$ & $V_2$ & 0 & $\cal{O}$ & 1 & 1 $^*$\\
        \hline
        3 & $Q$ & $V_3$ & 1 & $V_3$ & 0 & 1 $^*$\\
        \hline
    \end{tabular}
\end{small}
\caption{Trajectories' details for the upper bound of \pe{3}{2}; 
see Figure~\ref{3gonk2fig}.}
\label{3gonk2tab}
\end{table}

\subsubsection{\pe{4}{2} }
\label{sec: n4k2}
We use $s=(1,2,0,0)$ and $\rho=(1,2,3,4)$. 
The agents' trajectories show in Figure~\ref{4gonk2fig} and in Table~\ref{4gonk2tab}. 
\begin{table}[h!]
\centering
\begin{small}
    \begin{tabular}{|c|c|c|c|c|c|c|}
        \hline
        $i$ & Agent & Vertex & $t_i$ & $Q^{(i)}$ & $d(Q^{(i)} , Exit)$ & Cost \\
        \hline
        1 & $S_1$ & $V_1$ & 0 & $(0,-\sqrt{2}/2)$ & $1+\sqrt{2}/2$ & $1+\sqrt{2}/2$ $^*$\\
        \hline
        2 & $S_2$ & $V_2$ & 0 & $(0,-\sqrt{2}/2)$ & $\sqrt{3/2}$ & $\sqrt{3/2}$\\
        \hline
        3 & $Q$ & $V_3$ & $1-\sqrt{2}/2$ & $V_3$ & 0 & $1-\sqrt{2}/2$ \\
        \hline
        4 & $Q$ & $V_4$ & $1+\sqrt{2}/2$ & $V_4$ & 0 & $1+\sqrt{2}/2$ $^*$\\
        \hline
    \end{tabular}
\end{small}    
\caption{Trajectories' details for the upper bound of \pe{4}{2}; 
see Figure~\ref{4gonk2fig}.}
\label{4gonk2tab}
\end{table}

The evacuation time is maximized when the exit is placed at either $V_1$ or $V_4$, and hence the following equation holds.
\begin{equation}
\|V_1-Q^{(1)}\|=\|V_3-Q^{(1)}\|+\|V_3-V_4\|
\end{equation}
Since the Queen immediately moves to either $V_1$ or $V_3$, the Queen's initial position can be expressed as a convex combination of these two points. That is, the Queen starts on the $y$-axis, so we have one variable (the Queen's initial $y$-coordinate) and one equation. The solution can be seen in the table, and the cost is approximately 1.707107.

\subsubsection{\pe{5}{2} }
\label{sec: n5k2}
We use $s=(1,2,0,1,2)$ and $\rho=(1,2,3,5,4)$. 
The agents' trajectories show in Figure~\ref{5gonk2fig} and in Table~\ref{5gonk2tab}. 
\begin{table}[h!]
\centering
\begin{small}
    \begin{tabular}{|c|c|c|c|c|c|c|}
        \hline
        $i$ & Agent & Vertex & $t_i$ & $Q^{(i)}$ & $d(Q^{(i)} , Exit)$ & Cost \\
        \hline
        1 & $S_1$ & $V_1$ & 0 & $V_3$ & $\sqrt{(5+\sqrt{5})/2}$ & $\sqrt{(5+\sqrt{5})/2}$ $^*$\\
        \hline
        2 & $S_2$ & $V_2$ & 0 & $V_3$ & $e_5$ & $e_5$\\
        \hline
        3 & $Q$ & $V_3$ & 0 & $V_3$ & 0 & 0 \\
        \hline
        4 & $S_1$ & $V_5$ & $e_5$ & $\left(\tfrac{\sqrt{5}-1}{4},\tfrac{\sqrt{5-2\sqrt{5}}}{2}-\sqrt{\tfrac{5-\sqrt{5}}{8}}\right)$ & $\sqrt{(5+\sqrt{5})/2}-e_5$ & $\sqrt{(5+\sqrt{5})/2}$ $^*$\\
        \hline
        5 & $Q$ & $V_4$ & $\sqrt{(5+\sqrt{5})/2}$ & $V_4$  & 0 & $\sqrt{(5+\sqrt{5})/2}$ $^*$\\
        \hline
    \end{tabular}
\end{small}
\caption{Trajectories' details for the upper bound of \pe{5}{2}; 
see Figure~\ref{5gonk2fig}.}
\label{5gonk2tab}
\end{table}

$Q^{(4)}$ is identified as the intersection of the lines $V_1V_4$ and $V_3V_5$. This is because the Queen wishes to be equidistant from both $V_4$ and $V_5$ at time $t_4$. This is when the exit is revealed, and the only options are that is it discovered by $S_1$ to be at $V_5$, or it is determined to be at the unique unsearched vertex $V_4$. Alternatively, $Q^{(4)}$ can be found by solving the following system of equations:
\begin{eqnarray*}
&\|V_3-Q^{(4)}\|=e_5\\
&\|V_3-V_5\|=\|V_3-Q^{(4)}\|+\|V_4-Q^{(4)}\|
\end{eqnarray*}
In the second equation, both sides are equal to the cost of the algorithm, which is approximately 1.902113.

\subsubsection{\pe{6}{2} }
\label{sec: n6k2}
We use $s=(1,2,0,1,2,0)$ and $\rho=(1,3,5,2,4,6)$. 
The agents' trajectories show in Figure~\ref{6gonk2fig} and in Table~\ref{6gonk2tab}. 
\begin{table}[h!]
\centering
\begin{small}
    \begin{tabular}{|c|c|c|c|c|c|c|}
        \hline
        $i$ & Agent & Vertex & $t_i$ & $Q^{(i)}$ & $d(Q^{(i)} , Exit)$ & Cost \\
        \hline
        1 & $S_1$ & $V_1$ & 0 & $V_5$ & $\sqrt{3}$ & $\sqrt{3}$ \\
        \hline
        2 & $S_2$ & $V_3$ & 0 & $V_5$ & $\sqrt{3}$ & $\sqrt{3}$\\
        \hline
        3 & $Q$ & $V_5$ & 0 & $V_5$ & 0 & 0 \\
        \hline
        4 & $S_1$ & $V_2$ & 1 &  $\cal{O}$  & 1 & 2 $^*$\\
        \hline
        5 & $S_2$ & $V_4$ & 1 & $\cal{O}$ & 1 & 2 $^*$\\
        \hline
        6 & $Q$ & $V_6$ & 2 &  $V_6$  & 0 & 2 $^*$\\
        \hline
    \end{tabular}
\end{small}
\caption{Trajectories' details for the upper bound of \pe{6}{2}; 
see Figure~\ref{6gonk2fig}.}
\label{6gonk2tab}
\end{table}

Here, a formal system of equations is unnecessary. The Queen should head from $V_6$ to the origin, at which point the exit is revealed to be either $V_2$, $V_4$, or $V_6$, and the Queen is equidistant from all of them.

\subsubsection{\pe{7}{2} }
\label{sec: n7k2}
We use $s=(1,0,2,1,2,0,0)$ and $\rho=(4,6,3,5,2,7,1)$. 
The agents' trajectories show in Figure~\ref{7gonk2fig} and in Table~\ref{7gonk2tab}. 
\begin{table}[h!]
\centering
\begin{small}
    \begin{tabular}{|c|c|c|c|c|c|c|}
        \hline
        $i$ & Agent & Vertex & $t_i$ & $Q^{(i)}$ & $d(Q^{(i)} , Exit)$ & Cost \\
        \hline
        1 & $S_1$ & $V_4$ & 0 & $V_6$ & $\sqrt{2-2\cos(4\pi/7)}$ & $\sqrt{2-2\cos(4\pi/7)}$ \\
        \hline
        2 & $Q$ & $V_6$ & 0 & $V_6$ & 0 & 0 \\
        \hline
        3 & $S_2$ & $V_3$ & 0 & $V_6$ & $\sqrt{2-2\cos(6\pi/7)}$ & $\sqrt{2-2\cos(6\pi/7)}$ \\
        \hline
        4 & $S_1$ & $V_5$ & $e_7$ & $(x_{2,7},0)$ & $e_7+1-x_{2,7}$ & $2e_7+1-x_{2,7}$ $^*$\\
        \hline
        5 & $S_2$ & $V_2$ & $e_7$ &  $(x_{2,7},0)$  &  $e_7+1-x_{2,7}$ & $2e_7+1-x_{2,7}$ $^*$\\
        \hline
        6 & $Q$ & $V_7$ & $e_7+1-x_{2,7}$ &  $V_7$  & 0 & $e_7+1-x_{2,7}$ \\
        \hline
        7 & $Q$ & $V_1$ & $2e_7$ &  $V_1$  & 0 & $2e_7+1-x_{2,7}$ $^*$\\
        \hline
    \end{tabular}
\end{small}
\caption{Trajectories' details for the upper bound of \pe{7}{2}; 
see Figure~\ref{7gonk2fig}.}
\label{7gonk2tab}
\end{table}

In this example, the Queen moves in such a way as to minimize the cost when the exit is at $V_2$, $V_5$, or $V_1$. Her second position $Q^{(5)}$ is equidistant from both $V_2$ and $V_5$ (i.e.~on the $x$-axis), and slightly to the left of $V_7$. Hence, $Q^{(5)}$ is a convex combination of $V_7$ and the origin.

The distance between $Q^{(5)}$ and $V_7$ is tuned so that the evacuation time is equal to the cost when the exit is at either $V_2$, $V_5$ or $V_1$. Also note that the Queen must be at $Q^{(5)}$ at time $e_7$, since this is when $V_2$ and $V_5$ are visited for the first time, however the distance from $V_6$ to $Q^{(5)}$ is less than $e_7$. We have the equation
\begin{equation}
e_7+\|V_2-Q^{(5)}\|=2e_7+\|V_7-Q^{(5)}\|
\end{equation}
where both sides represent the cost of the algorithm, which is approximately 2.140269. The right hand side is the evacuation time when the exit is at $V_1$.
Solving the equation, we get that the $x$-coordinate of $Q^{(5)}$ as $x_{2,7}=\tfrac{2\sin(\pi/7)(1+\sin(\pi/7))}{1+2\sin(\pi/7)+\sin(\pi/14)} \approx 0.595266$.

\subsubsection{\pe{8}{2} }
\label{sec: n8k2}
We use $s=(1,2,0,1,2,1,2,0)$ and $\rho=(1,3,2,8,4,7,5,6)$. 
The agents' trajectories show in Figure~\ref{8gonk2fig} and in Table~\ref{8gonk2tab}. 
\begin{table}[h!]
\centering
\begin{small}
    \begin{tabular}{|c|c|c|c|c|c|c|}
        \hline
        $i$ & Agent & Vertex & $t_i$ & $Q^{(i)}$ & $d(Q^{(i)} , Exit)$ & Cost \\
        \hline
        1 & $S_1$ & $V_1$ & 0 & $V_2$ & $e_8$ & $e_8$ \\
        \hline
        2 & $S_2$ & $V_3$ & 0 & $V_2$ & $e_8$ & $e_8$ \\
        \hline
        3 & $Q$ & $V_2$ & 0 & $V_2$ & 0 & 0 \\
        \hline
        4 & $S_1$ & $V_8$ & $e_8$ & $(0,1-e_8)$ & $w_{2,8}$ & $e_8+w_{2,8}$ \\
        \hline
        5 & $S_2$ & $V_4$ & $e_8$ & $(0,1-e_8)$ & $w_{2,8}$ & $e_8+w_{2,8}$ \\
        \hline
        6 & $S_1$ & $V_7$ & $2e_8$ & $(0,1-2e_8)$ & $z_{2,8}$ & $2e_8+z_{2,8}$ $^*$\\
        \hline
        7 & $S_2$ & $V_5$ & $2e_8$ & $(0,1-2e_8)$ & $z_{2,8}$ & $2e_8+z_{2,8}$ $^*$\\
        \hline
        8 & $Q$ & $V_6$ & 2 & $V_6$ & 0 & 2 \\
        \hline
    \end{tabular}
\end{small}
\caption{Trajectories' details for the upper bound of \pe{8}{2}; 
see Figure~\ref{8gonk2fig}.}
\label{8gonk2tab}
\end{table}

In this example, the Queen's best strategy is to move to $V_6$ as fast as possible while remaining equidistant from each Servant (i.e.~on the $y$-axis). Hence, both $Q^(5)$ and $Q^(7)$ are convex combinations of $V_2$ and $V_6$. The distances $\|V_4-Q^{(5)}\|$ and $\|V_5-Q^{(7)}\|$ are $w_{2,8}=\sqrt{4-\sqrt{2}-2\sqrt{2-\sqrt{2}}} \approx 1.027158$ and $z_{2,8}=\sqrt{10-3\sqrt{2}-(4+2\sqrt{2})\sqrt{2-\sqrt{2}}} \approx 0.728771$ respectively. The cost of the algorithm is approximately 2.259505.

\subsubsection{\pe{9}{2} }
\label{sec: n9k2}
We use $s=(1,2,0,1,2,0,1,2,0)$ and $\rho=(4,5,8,3,6,9,2,7,1)$. 
The agents' trajectories show in Figure~\ref{9gonk2fig} and in Table~\ref{9gonk2tab}. 
\begin{table}[h!]
\centering
\begin{small}
    \begin{tabular}{|c|c|c|c|c|c|c|}
        \hline
        $i$ & Agent & Vertex & $t_i$ & $Q^{(i)}$ & $d(Q^{(i)} , Exit)$ & Cost \\
        \hline
        1 & $S_1$ & $V_4$ & 0 & $V_8$ & $\sqrt{2-2\cos\left(\tfrac{8\pi}{9}\right)}$ & $\approx1.969616$\\
        \hline
        2 & $S_2$ & $V_5$ & 0 & $V_8$ & $\sqrt{3}$ & $\approx1.73205$ \\
        \hline
        3 & $Q$ & $V_8$ & 0 & $V_8$ & 0 & 0 \\
        \hline
        4 & $S_1$ & $V_3$ & $e_9$ & $(x_{2,9},0)$ & $\sqrt{x_{2,9}^2+x_{2,9}+1}$ &  $\approx2.371762$ $^*$\\
        \hline
        5 & $S_2$ & $V_6$ & $e_9$ & $(x_{2,9},0)$ & $\sqrt{x_{2,9}^2+x_{2,9}+1}$ &  $\approx2.371762$ $^*$\\
        \hline
        6 & $Q$ & $V_9$ & $e_9+1-x_{2,9}$ & $V_9$ & 0 & $\approx0.73545$ \\
        \hline
        7 & $S_1$ & $V_2$ & $2e_9$ & $(y_{2,9},0)$ & $\sqrt{y_{2,9}^2-2y_{2,9}\cos\left(\tfrac{4\pi}{9}\right)+1}$ & $\approx2.37176$ $^*$ \\
        \hline
        8 & $S_2$ & $V_7$ & $2e_9$ & $(y_{2,9},0)$ & $\sqrt{y_{2,9}^2-2y_{2,9}\cos\left(\tfrac{4\pi}{9}\right)+1}$ & $\approx2.37176$ $^*$\\
        \hline
        9 & $Q$ & $V_1$ & $2e_9+\sqrt{y_{2,9}^2-2y_{2,9}\cos\left(\tfrac{2\pi}{9}\right)+1}$ & $V_1$ & 0 & $\approx2.12446$ \\
        \hline 
    \end{tabular}
\end{small}
\caption{Trajectories' details for the upper bound of \pe{9}{2}; 
see Figure~\ref{9gonk2fig}.}
\label{9gonk2tab}
\end{table}

In this example, the worst case cost occurs when the exit is placed at either $V_3$/$V_6$ or $V_2$/$V_7$. At the times these pairs of vertices are visited, the Queen wishes to be equidistant from each vertex, and hence $Q^{(5)}$ and $Q^{(8)}$ are both on the $x$-axis (i.e.~convex combinations of $V_9$ and the origin). The $x$-coordinates of $Q^{(5)}$ and $Q^{(8)}$ are tuned so that the evacuation times are equal when the exit is at $V_3$/$V_6$ or $V_2$/$V_7$. We therefore have the following system of equations. Note that the first equation ensures that the Queen moves at unit speed from $Q^{(5)}$ to $V_9$ to $Q^{(8)}$, and in the second equation, both sides are equal to the cost. Also note that the Queen is at $Q^{(5)}$ at time $e_9$ even though $\|V_8-Q^{(5)}\|<e_9$, and that the worst case cost does \emph{not} occur when the exit is placed at $V_1$.
\begin{eqnarray}
& \|V_9-Q^{(5)}\|+\|V_9-Q^{(8)}\|=e_9\\
& e_9+\|V_3-Q^{(5)}\|=2e_9+\|V_2-Q^{(8)}\|
\end{eqnarray}
The solution is $Q^{(5)}=(x_{2,9},0)$ and $Q^{(8)}=(y_{2,9},0)$, where $x_{2,9}\approx 0.948587$ and $y_{2,9}\approx 0.367373$. 
%Also note that the values in the \emph{cost} of Table~\ref{9gonk2tab} have been removed for the sake of space. These values can easily be found, as the cost when the exit is at the $i^{th}$ vertex is simply $t_i+d(Q^{(i)} , Exit)$. 
The cost of the algorithm is approximately $2.371762$.

\subsubsection{\pe{10}{2} }
\label{sec: n10k2}
We use $s=(1,2,0,1,2,0,1,2,0,0)$ and $\rho=(5,6,9,4,7,10,3,8,1,2)$. 
The agents' trajectories show in Figure~\ref{10gonk2fig} and in Table~\ref{10gonk2tab}. 
\begin{table}[h!]
\centering
\begin{small}
    \begin{tabular}{|c|c|c|c|c|c|c|}
        \hline
        $i$ & Agent & Vertex & $t_i$ & $Q^{(i)}$ & $d(Q^{(i)} , Exit)$ & Cost \\
        \hline
        1 & $S_1$ & $V_5$ & 0 & $V_9$ & $\sqrt{2-2\cos\left(\tfrac{4\pi}{5}\right)}$ & $\sqrt{2-2\cos\left(\tfrac{4\pi}{5}\right)}$ \\
        \hline
        2 & $S_2$ & $V_6$ & 0 & $V_9$ & $\sqrt{2-2\cos\left(\tfrac{3\pi}{5}\right)}$ & $\sqrt{2-2\cos\left(\tfrac{3\pi}{5}\right)}$ \\
        \hline
        3 & $Q$ & $V_9$ & 0 & $V_9$ & 0 & 0 \\
        \hline
        4 & $S_1$ & $V_4$ & $e_{10}$ & $(0.87153,0.02706)$ & $1.7716$ & $2.38956$ $^*$ \\
        \hline
        5 & $S_2$ & $V_7$ & $e_{10}$ & $(0.87153,0.02706)$ & $1.5331$ &  $2.15113$ \\
        \hline
        6 & $Q$ & $V_{10}$ & $0.74932$ & $V_{10}$ & 0 &  $0.74932$ \\
        \hline
        7 & $S_1$ & $V_3$ & $2e_{10}$ & $(0.54684,0.17765)$ & $1.1535$ & $2.38956$ $^*$ \\
        \hline
        8 & $S_2$ & $V_8$ & $2e_{10}$ & $(0.54684,0.17765)$ & $1.1535$ & $2.38956$ $^*$ \\
        \hline
        9 & $Q$ & $V_1$ & $1.7325$ & $V_1$ & 0 & $\approx1.7325$ \\
        \hline 
        10 & $Q$ & $V_2$ & $3e_{10}$ & $V_2$ & 0  & $\approx1.8541$ \\
        \hline 
    \end{tabular}
\end{small}
\caption{Trajectories' details for the upper bound of \pe{10}{2}; 
see Figure~\ref{10gonk2fig}.}
\label{10gonk2tab}
\end{table}

The worst case cost occurs when the exit is at $V_4$, $V_3$, or $V_8$. The positions of the two unknowns $Q^{(5)}$ and $Q^{(8)}$ can be found by solving the following system of equations, where both sides of the final equation are the cost. Note that we have four equations and four unknowns, since $Q^{(5)}$ and $Q^{(8)}$ each have two unknown coordinates.
\begin{eqnarray}
& \|V_9-Q^{(5)}\|=e_{10}\\
& \|V_{10}-Q^{(5)}\|+\|V_{10}-Q^{(8)}\|=e_{10}\\
& \|V_3-Q^{(8)}\|=\|V_8-Q^{(8)}\|\\
& e_{10}+\|V_4-Q^{(5)}\|=2e_{10}+\|V_3-Q^{(8)}\|
\end{eqnarray}

The solution can be seen in the table, and the cost is approximately $2.389560$.

\subsubsection{\pe{11}{2} }
\label{sec: n11k2}
We use $s=(1,2,0,1,2,0,1,2,0,2,0)$ and $\rho=(5,6,2,4,7,1,3,8,10,9,11)$. 
The agents' trajectories show in Figure~\ref{11gonk2fig} and in Table~\ref{11gonk2tab}. They were found using Julia's JuMP Ipopt NLP solver.
\begin{table}[h!]
\centering
\begin{small}
    \begin{tabular}{|c|c|c|c|c|c|c|}
        \hline
        $i$ & Agent & Vertex & $t_i$ & $Q^{(i)}$ & $d(Q^{(i)} , Exit)$ & Cost \\
        \hline
        1 & $S_1$ & $V_5$ & 0 & $V_2$ & $\sqrt{2-2\cos\left(\tfrac{6\pi}{11}\right)}$ & $1.511499$  \\
        \hline
        2 & $S_2$ & $V_6$ & 0 & $V_2$ & $\sqrt{2-2\cos\left(\tfrac{8\pi}{11}\right)}$ & $1.511499$  \\
        \hline
        3 & $Q$ & $V_2$ & 0 & $V_2$ & 0 & 0 \\
        \hline
        4 & $S_1$ & $V_4$ & $e_{11}$ & $(0.81026,0.51379)$ & $1.48497$ & $ 2.04843$ \\
        \hline
        5 & $S_2$ & $V_7$ & $e_{11}$ & $(0.81026,0.51379)$  &  $1.93864$& $2.50210$ $^*$ \\
        \hline
        6 & $Q$ & $V_1$ & $ e_{11}+0.04100$ & $V_1$ & 0 & $ 0.60446$ \\
        \hline
        7 & $S_1$ & $V_3$ & $2e_{11}$ & $(0.79057,0.02054)$ & $1.34528$ & $2.47221$ \\
        \hline
        8 & $S_2$ & $V_8$ & $2e_{11}$ & $(0.79057,0.02054)$ & $1.37517$ & $2.50210$ $^*$ \\
        \hline
        9 & $Q$ & $V_{10}$ & $3e_{11}$ & $V_{10}$ & 0 &  $1.69039$\\
        \hline 
        10 & $S_2$ & $V_9$ & $3e_{11}$ & $V_{10}$ & $e_{11}$ & $2.25386$ \\
        \hline
        11 & $Q$ & $V_{11}$ & $4e_{11}$ & $V_{11}$ & 0 & $2.25386$ \\
        \hline 
    \end{tabular}
\end{small}
\caption{Trajectories' details for the upper bound of \pe{11}{2}; 
see Figure~\ref{11gonk2fig}.}
\label{11gonk2tab}
\end{table}

\subsection{Polygon Priority Evacuation with $3$ Servants}
\label{sec: 3 servant upper bounds}

Before we give the upper bound details for \pe{n}{3}, we present all relevant figures. 
Note that we omit the algorithm for \pe{3}{3}, which is straightforward. 
\input{UpperBoundsFigs/k3servants/ALLgonk3fig}

%\subsubsection{\pe{3}{3} }
%\label{sec: n3k3}
%missing 
%\input{UpperBoundsFigs/k3servants/3gonk3fig}
%\input{UpperBoundsTabs/k3servants/3gonk3tab}
%missing

\subsubsection{\pe{4}{3} }
\label{sec: n4k3}
We use $s=(1,2,3,0)$ and $\rho=(1,2,3,4)$. 
The agents' trajectories show in Figure~\ref{4gonk3fig} and in Table~\ref{4gonk3tab}. 
\begin{table}[h!]
\centering
\begin{small}
    \begin{tabular}{|c|c|c|c|c|c|c|}
        \hline
        $i$ & Agent & Vertex & $t_i$ & $Q^{(i)}$ & $d(Q^{(i)} , Exit)$ & Cost \\
        \hline
        1 & $S_1$ & $V_1$ & 0 & $\mathcal{O}$ & 1 & 1 \\
        \hline
        2 & $S_2$ & $V_2$ & 0 & $\mathcal{O}$ & 1 & 1 \\
        \hline
        3 & $S_3$ & $V_3$ & 0 & $\mathcal{O}$ & 1 & 1 \\
        \hline
        4 & $Q$ & $V_4$ & 1 & $V_4$ & 0 & $1^*$ \\
        \hline
    \end{tabular}
\end{small}
\caption{Trajectories' details for the upper bound of \pe{4}{3}; 
see Figure~\ref{4gonk3fig}.}
\label{4gonk3tab}
\end{table}

\subsubsection{\pe{5}{3} }
\label{sec: n5k3}
We use $s=(1,2,3,0,2)$ and $\rho=(1,2,3,5,4)$. 
The agents' trajectories show in Figure~\ref{5gonk3fig} and in Table~\ref{5gonk3tab}. 
\begin{table}[h!]
\centering
\begin{small}
    \begin{tabular}{|c|c|c|c|c|c|c|}
        \hline
        $i$ & Agent & Vertex & $t_i$ & $Q^{(i)}$ & $d(Q^{(i)} , Exit)$ & Cost \\
        \hline
        1 & $S_1$ & $V_1$ & 0 & $(x_{3,5},0)$ & $ 1.00230$ & $ 1.00230$ \\
        \hline
        2 & $S_2$ & $V_2$ & 0 & $(x_{3,5},0)$ & $y_{3,5}+e_5$ & $y_{3,5}+e_5$ $^*$\\
        \hline
        3 & $S_3$ & $V_3$ & 0 & $(x_{3,5},0)$ & $y_{3,5}+e_5$ & $y_{3,5}+e_5$ $^*$\\
        \hline
        4 & $Q$ & $V_5$ & $y_{3,5}$ & $V_5$ & 0 & $y_{3,5}$\\
        \hline
        5 & $Q$ & $V_4$ & $y_{3,5}+e_5$ & $V_4$ & 0 & $y_{3,5}+e_5$ $^*$\\
        \hline
    \end{tabular}
\end{small}
\caption{Trajectories' details for the upper bound of \pe{5}{3}; 
see Figure~\ref{5gonk3fig}.}
\label{5gonk3tab}
\end{table}

In this example the evacuation time is maximized when the exit is placed at either $V_2$/$V_3$ or $V_4$. The Queen starts on the $x$-axis so as to be equidistant from both $V_2$ and $V_3$, so $Q^{(3)}$ is a convex combination of $V_5$ and the origin. The Queen's initial $x$-coordinate is tuned so that the evacuation times are equal when the exit is placed at $V_2$/$V_3$ or $V_4$. We therefore have the following equation, in which both sides represent the cost of the algorithm.
\begin{equation}
\|V_2-Q^{(3)}\|=\|V_5-Q^{(3)}\|+e_5
\end{equation}
We can solve the equation to get that $Q^{(3)}=(x_{3,5},0)$, where $x_{3,5}=\tfrac{2\sqrt{10-2\sqrt{5}}+5-\sqrt{5}}{2\sqrt{10-2\sqrt{5}}+5+\sqrt{5}}\approx0.625397$. Also write $y_{3,5}=1-x_{3,5}\approx0.374603$. The cost of the algorithm is approximately $1.550174$.

\subsubsection{\pe{6}{3} }
\label{sec: n6k3}
We use $s=(1,2,3,2,3,1)$ and $\rho=(1,3,5,4,6,2)$. 
The agents' trajectories show in Figure~\ref{6gonk3fig} and in Table~\ref{6gonk3tab}. 
\begin{table}[h!]
\centering
\begin{small}
    \begin{tabular}{|c|c|c|c|c|c|c|}
        \hline
        $i$ & Agent & Vertex & $t_i$ & $Q^{(i)}$ & $d(Q^{(i)} , Exit)$ & Cost \\
        \hline
        1 & $S_1$ & $V_1$ & 0 & $\mathcal{O}$ & 1 & 1 \\
        \hline
        2 & $S_2$ & $V_3$ & 0 & $\mathcal{O}$ & 1 & 1 \\
        \hline
        3 & $S_3$ & $V_5$ & 0 & $\mathcal{O}$ & 1 & 1 \\
        \hline
        4 & $S_2$ & $V_4$ & $e_6$ & $\mathcal{O}$ & 1 & $2^*$ \\
        \hline
        5 & $S_3$ & $V_6$ & $e_6$ & $\mathcal{O}$ & 1 & $2^*$ \\
        \hline
        6 & $Q$ & $V_2$ & $e_6$ & $\mathcal{O}$ & 1 & $2^*$ \\
        \hline
    \end{tabular}
\end{small}
\caption{Trajectories' details for the upper bound of \pe{6}{3}; 
see Figure~\ref{6gonk3fig}.}
\label{6gonk3tab}
\end{table}

The evacuation time is maximized when the exit is placed at either $V_6$, $V_4$ or $V_2$. The point $Q^{(6)}$ is found by equating the evacuation times for all above placements of the exit.
The point $Q^{(6)}$ is given as the solution to the following non-linear system.
    \begin{eqnarray}
&        \| V_2-Q^{(6)}\| - \| V_4-Q^{(6)}\| = 0	        \label{k3n6eq1}	\\
&        \| V_2-Q^{(6)}\| - \| V_6-Q^{(6)}\| = 0	        \label{k3n6eq2}
    \end{eqnarray}
\eqref{k3n6eq1} is derived by equating the evacuation times for the when the exit is placed at $V_2$ and $V_4$. 
Similarly, \eqref{k3n6eq2} is derived from equating the evacuation times for when the exit is placed at $V_2$ and $V_6$.

\subsubsection{\pe{7}{3} }
\label{sec: n7k3}
We use $s=(1,2,3,1,2,3,2)$ and $\rho=(1,3,6,2,4,7,5)$. 
The agents' trajectories show in Figure~\ref{7gonk3fig} and in Table~\ref{7gonk3tab}. The Queen's strategy is very simple; she waits at the origin until $V_2$, $V_4$, and $V_7$ are searched, at which time the exit is revealed, and she heads there immediately.
\begin{table}[h!]
\centering
\begin{small}
    \begin{tabular}{|c|c|c|c|c|c|c|}
        \hline
        $i$ & Agent & Vertex & $t_i$ & $Q^{(i)}$ & $d(Q^{(i)} , Exit)$ & Cost \\
        \hline
        1 & $S_1$ & $V_1$ & 0 & $\cal{O}$ & 1 & 1 \\
        \hline
        2 & $S_2$ & $V_3$ & 0 & $\cal{O}$ & 1 & 1 \\
        \hline
        3 & $S_3$ & $V_6$ & 0 & $\cal{O}$ & 1 & 1 \\
        \hline
        4 & $S_1$ & $V_2$ & $e_7$ & $\cal{O}$ & 1 & $1+e_7$ $^*$\\
        \hline
        5 & $S_2$ & $V_4$ & $e_7$ & $\cal{O}$ & 1 & $1+e_7$ $^*$ \\
        \hline
        6 & $S_3$ & $V_7$ & $e_7$ & $\cal{O}$ & 1 & $1+e_7$ $^*$ \\
        \hline
        7 & $Q$ & $V_5$ & $1+e_7$  & $V_5$ & 0 & $1+e_7$ $^*$ \\
        \hline
    \end{tabular}
\end{small}
\caption{Trajectories' details for the upper bound of \pe{7}{3}; 
see Figure~\ref{7gonk3fig}.}
\label{7gonk3tab}
\end{table}

\subsubsection{\pe{8}{3} }
\label{sec: n8k3}
We use $s=(1,2,3,1,2,3,1,0)$ and $\rho=(1,5,3,8,4,2,7,6)$. 
The agents' trajectories show in Figure~\ref{8gonk3fig} and in Table~\ref{8gonk3tab}. 
\begin{table}[h!]
\centering
\begin{small}
    \begin{tabular}{|c|c|c|c|c|c|c|}
        \hline
        $i$ & Agent & Vertex & $t_i$ & $Q^{(i)}$ & $d(Q^{(i)} , Exit)$ & Cost \\
        \hline
        1 & $S_1$ & $V_1$ & 0 & $\mathcal{O}$ & 1 & 1 \\
        \hline
        2 & $S_2$ & $V_5$ & 0 & $\mathcal{O}$ & 1 & 1 \\
        \hline
        3 & $S_3$ & $V_3$ & 0 & $\mathcal{O}$ & 1 & 1 \\
        \hline
        4 & $S_1$ & $V_8$ & $e_8$ & $(0.13014,-0.13083)$ & 0.87964 & 1.64501 \\
        \hline
        5 & $S_2$ & $V_4$ & $e_8$ & $(0.13014,-0.13083)$ & 1.13768 & 1.90305 \\
        \hline
        6 & $S_3$ & $V_2$ & $e_8$ & $(0.13014,-0.13083)$ & 1.13829 & 1.90366 \\
        \hline
        7 & $S_1$ & $V_7$ & $2e_8$ & $(0.35355,-0.85355)$ & 0.38268 & ${1.91342}^*$ \\
        \hline
        8 & $Q$ & $V_6$ & 1.91342 & $V_6$ & 0 & ${1.91342}^*$ \\
        \hline
    \end{tabular}
\end{small}
\caption{Trajectories' details for the upper bound of \pe{8}{3}; 
see Figure~\ref{8gonk3fig}.}
\label{8gonk3tab}
\end{table}

The evacuation time is maximized when the exit is placed at either $V_7$ or $V_6$. The point $Q^{(7)}$ is found by equating the evacuation times for both above placements of the exit.
We define $Q^{(7)}$ as a convex convex combination of the points $V_6$ and $V_7$, hence $Q^{(7)}=(1-l)V_6+lV_7$ for some $l\in [0,1]$. 
The value of $l$ is the solution to the non-linear equation.
%        \label{k3n8eq1}
$(1-l)\| V_7-V_6\|-l\| V_7-V_6\| = 0,$
which is derived by equating the evacuation times for when the exit is placed at $V_6$ and $V_7$. It follows that $l=\frac{1}{2}$.

\subsubsection{\pe{9}{3} }
\label{sec: n9k3}
We use $s=(1,2,3,0,1,2,3,1,0)$ and $\rho=(4,5,7,9,3,6,8,2,1)$. 
The agents' trajectories show in Figure~\ref{9gonk3fig} and in Table~\ref{9gonk3tab}. 
\begin{table}[h!]
\centering
\begin{small}
    \begin{tabular}{|c|c|c|c|c|c|c|}
        \hline
        $i$ & Agent & Vertex & $t_i$ & $Q^{(i)}$ & $d(Q^{(i)} , Exit)$ & Cost \\
        \hline
        1 & $S_1$ & $V_4$ & 0 & $(x_{3,9},0)$ & $1.91361$ & $1.91361$ $^*$\\
        \hline
        2 & $S_2$ & $V_5$ & 0 & $(x_{3,9},0)$ & $1.91361$ & $1.91361$ $^*$\\
        \hline
        3 & $S_3$ & $V_7$ & 0 & $(x_{3,9},0)$ & $1.24976$ & $1.24976$\\
        \hline
        4 & $Q$ & $V_9$ & $1-x_{3,9}$ & $V_9$ & 0 & $ 0.05688$\\
        \hline
        5 & $S_1$ & $V_3$ & $e_9$ & $(y_{3,9},0)$ & $ 1.22957$ & $ 1.91361$ $^*$\\
        \hline
        6 & $S_2$ & $V_6$ & $e_9$ & $(y_{3,9},0)$ & $ 1.22957$ & $ 1.91361$ $^*$\\
        \hline
        7 & $S_3$ & $V_8$ & $e_9$ & $(y_{3,9},0)$ & $ 0.75351$ & $ 1.43755$\\
        \hline
        8 & $S_1$ & $V_2$ & $2e_9$ & $(0.39472,0.68369)$ & $0.37356$ & $1.74164$\\
        \hline
        9 & $Q$ & $V_1$ & $2e_9+0.37356$ & $V_1$ & 0 & $1.74164$\\
        \hline 
    \end{tabular}
\end{small}
    \caption{Trajectories' details for the upper bound of \pe{9}{3}; 
see Figure~\ref{9gonk3fig}.}
\label{9gonk3tab}
\end{table}

In this example the evacuation time is maximized when the exit is placed at $V_3$/$V_6$ or $V_4$/$V_5$. The Queen must therefore be equidistant from $V_4$ and $V_5$ at time $0$, and equidistant from $V_3$ and $V_6$ at time $e_9$. Hence, $Q^{(3)}$ and $Q^{(7)}$ both lie on the $x$-axis (i.e.~they are convex combinations of $V_9$ and the origin). The distance from the Queen's initial position $Q^{(3)}$ to $V_9$ is tuned so that the evacuation times are equal when the exit is placed at $V_3$/$V_6$ or $V_4$/$V_5$, assuming she moves at unit speed from $Q^{(3)}$ to $V_9$ to $Q^{(7)}$. 

In the following system of equations, both sides of the last equation are the cost of the algorithm. The first equation ensures the Queen moves at unit speed from $Q^{(3)}$ to $V_9$ to $Q^{(7)}$. 
\begin{eqnarray}
&\|V_{9}-Q^{(3)}\|+\|V_{9}-Q^{(7)}\|=e_{9}\\
&\|V_4-Q^{(3)}\|=e_{9}+\|V_3-Q^{(7)}\|
\end{eqnarray}

Solving this system we get $Q^{(3)}=(x_{3,9},0)$ where $x_{3,9}\approx 0.943113$, and $Q^{(7)}=(y_{3,9},0)$ where $y_{3,9}=2-e_9-x_{3,9}\approx 0.372847$. The exact position of $Q^{(4)}$ is not as important, as there are several placements for which the evacuation time is less than the cost when the exit is placed at $V_1$ or $V_2$. We therefore arbitrarily decide for $Q^{(4)}$ to be distance $e_9$ from $Q^{(3)}$ and equidistant from both $V_1$ and $V_2$. This yields $Q^{(4)}\approx (0.394729,0.683691)$. Since the cost occurs when the exit is placed at $V_4$, the cost is $\|V_4-Q^{(3)}\|=\sqrt{\left(x_{3,9}-\cos\left(\tfrac{8\pi}{9}\right)\right)^2+\sin^2\left(\tfrac{8\pi}{9}\right)}\approx1.913618$.

\subsection{Polygon Priority Evacuation with $4$ Servants}
\label{sec: 4 servant upper bounds}

Before we give the upper bound details for \pe{n}{4}, we present all relevant figures. 
Note that we omit the algorithms for \pe{3}{4} and \pe{4}{4}, which are straightforward. 

\input{UpperBoundsFigs/k4servants/ALLgonk4fig}

%\subsubsection{\pe{3}{4} }
%\label{sec: n3k4}
%\input{UpperBoundsFigs/k4servants/3gonk4fig}
%\input{UpperBoundsTabs/k4servants/3gonk4tab}
%missing
%\subsubsection{\pe{4}{4} }
%\label{sec: n4k4}
%\input{UpperBoundsFigs/k4servants/4gonk4fig}
%\input{UpperBoundsTabs/k4servants/4gonk4tab}
%missing

\subsubsection{\pe{5}{4} }
We use $s=(1,2,3,4,0)$ and $\rho=(1,2,3,4,5)$. 
The agents' trajectories show in Figure~\ref{5gonk4fig} and in Table~\ref{5gonk4tab}. 
\begin{table}[h!]
\centering
\begin{small}
    \begin{tabular}{|c|c|c|c|c|c|c|}
        \hline
        $i$ & Agent & Vertex & $t_i$ & $Q^{(i)}$ & $d(Q^{(i)} , Exit)$ & Cost \\
        \hline
        1 & $S_1$ & $V_1$ & 0 & $\cal{O}$ & 1 & 1 $^*$\\
        \hline
        2 & $S_2$ & $V_2$ & 0 & $\cal{O}$ & 1 & 1 $^*$\\
        \hline
        3 & $S_3$ & $V_3$ & 0 & $\cal{O}$ & 1 & 1 $^*$\\
        \hline
        4 & $S_4$ & $V_4$ & 0 & $\cal{O}$ & 1 & 1 $^*$\\
        \hline
        5 & $Q$ & $V_5$ & 1 & $V_5$ & 0 & 1 $^*$\\
        \hline
    \end{tabular}
\end{small}
\caption{Trajectories' details for the upper bound of \pe{5}{4}; 
see Figure~\ref{5gonk4fig}.}
\label{5gonk4tab}
\end{table}

\subsubsection{\pe{6}{4} }
\label{sec: n6k4}
We use $s=(1,2,3,4,1,2)$ and $\rho=(1,4,6,5,2,3)$. 
The agents' trajectories show in Figure~\ref{6gonk4fig} and in Table~\ref{6gonk4tab}. The Queen moves to the midpoint of $V_2$ and $V_3$ and waits there until time $e_6$, at which point the exit is revealed and she moves there immediately.
\begin{table}[h!]
\centering
\begin{small}
    \begin{tabular}{|c|c|c|c|c|c|c|}
        \hline
        $i$ & Agent & Vertex & $t_i$ & $Q^{(i)}$ & $d(Q^{(i)} , Exit)$ & Cost \\
        \hline
        1 & $S_1$ & $V_1$ & 0 & $\cal{O}$ & 1 & 1 \\
        \hline
        2 & $S_2$ & $V_4$ & 0 &  $\cal{O}$ & 1 & 1 \\
        \hline
        3 & $S_3$ & $V_6$ & 0 &  $\cal{O}$ & 1 & 1 \\
        \hline
        4 & $S_4$ & $V_5$ & 0 &  $\cal{O}$ & 1 & 1 \\
        \hline
        5 & $S_1$ & $V_2$ & 1 & $\left(-\tfrac{3}{4},\tfrac{\sqrt{3}}{4}\right)$ & 1/2 & 3/2 $^*$ \\
        \hline
        6 & $S_2$ & $V_3$ & 1 & $\left(-\tfrac{3}{4},\tfrac{\sqrt{3}}{4}\right)$ & 1/2 & 3/2 $^*$ \\
        \hline
    \end{tabular}
\end{small}
\caption{Trajectories' details for the upper bound of \pe{6}{4}; see Figure~\ref{6gonk4fig}.}
\label{6gonk4tab}
\end{table}

\subsubsection{\pe{7}{4} }
\label{sec: n7k4}
We use $s=(1,2,3,4,0,2,3)$ and $\rho=(3,5,2,4,7,6,1)$. 
The agents' trajectories show in Figure~\ref{6gonk4fig} and in Table~\ref{6gonk4tab}. 
\begin{table}[h!]
\centering
\begin{small}
    \begin{tabular}{|c|c|c|c|c|c|c|}
        \hline
        $i$ & Agent & Vertex & $t_i$ & $Q^{(i)}$ & $d(Q^{(i)} , Exit)$ & Cost \\
        \hline
        1 & $S_1$ & $V_3$ & 0 & $(x_{4,7},0)$ & $\sqrt{x_{4,7}^2-2\cos\left(\tfrac{6\pi}{7}\right)x_{4,7}+1}$ & $\sqrt{x_{4,7}^2-2\cos\left(\tfrac{6\pi}{7}\right)x_{4,7}+1}$ \\
        \hline
        2 & $S_2$ & $V_5$ & 0 & $(x_{4,7},0)$ & $\sqrt{x_{4,7}^2-2\cos\left(\tfrac{4\pi}{7}\right)x_{4,7}+1}$ & $\sqrt{x_{4,7}^2-2\cos\left(\tfrac{4\pi}{7}\right)x_{4,7}+1}$ \\
        \hline
        3 & $S_3$ & $V_2$ & 0 & $(x_{4,7},0)$ & $\sqrt{x_{4,7}^2-2\cos\left(\tfrac{4\pi}{7}\right)x_{4,7}+1}$ & $\sqrt{x_{4,7}^2-2\cos\left(\tfrac{4\pi}{7}\right)x_{4,7}+1}$ \\
        \hline
        4 & $S_4$ & $V_4$ & 0 & $(x_{4,7},0)$ & $\sqrt{x_{4,7}^2-2\cos\left(\tfrac{6\pi}{7}\right)x_{4,7}+1}$ & $\sqrt{x_{4,7}^2-2\cos\left(\tfrac{6\pi}{7}\right)x_{4,7}+1}$ \\
        \hline
        5 & $Q$ & $V_7$ & $1-x_{4,7}$ & $(1,0)$ & 0 & $1-x_{4,7}$ \\
        \hline
        6 & $S_2$ & $V_6$ & $e_7$ & $\left(\cos\left(\tfrac{2\pi}{7}\right),0\right)$ & $\sin\left(\tfrac{2\pi}{7}\right)$ & $e_7+\sin\left(\tfrac{2\pi}{7}\right)$ $^*$ \\
        \hline
        7 & $S_3$ & $V_1$ & $e_7$ & $\left(\cos\left(\tfrac{2\pi}{7}\right),0\right)$ & $\sin\left(\tfrac{2\pi}{7}\right)$ & $e_7+\sin\left(\tfrac{2\pi}{7}\right)$ $^*$ \\
        \hline
    \end{tabular}
\end{small}
\caption{Trajectories' details for the upper bound of \pe{7}{4}; 
see Figure~\ref{7gonk4fig}.}
\label{7gonk4tab}
\end{table}

In this example the evacuation time is maximized when the exit is placed at either $V_1$ or $V_6$. In order to minimize the cost, the Queen should be equidistant and as close as possible to both vertices at the time they are visited, i.e.~at time $e_7$. Hence, $Q^{(7)}=\left(\cos\left(\tfrac{2\pi}{7}\right),0\right)$. There are many placements of the Queen's first position $Q^{(4)}$ so that the cost does not occur if the exit is at either $V_3$ or $V_4$. We therefore arbitrarily decide that $Q^{(4)}$ lies on the $x$-axis, and that the following equation holds, which ensures she moves at unit speed from $Q^{(4)}$ to $V_7$ to $Q^{(7)}$:
 \begin{equation}\|V_7-Q^{(4)}\|+\|V_7-Q^{(7)}\|=e_7
 \end{equation}

Using this equation and the known value of $Q^{(7)}$, we can solve for the Queen's initial position, which is $Q^{(4)}=(x_{4,7},0)$ where $x_{4,7}=1-2\sin\left(\tfrac{\pi}{7}\right)+2\sin^2\left(\tfrac{\pi}{7}\right)\approx 0.508743$. The cost is approximately $1.649599$.

\subsubsection{\pe{8}{4} }
\label{sec: n8k4}
We use $s=(1,2,3,4,2,3,4,0)$ and $\rho=(1,3,8,6,2,7,5,4)$. 
The agents' trajectories show in Figure~\ref{8gonk4fig} and in Table~\ref{8gonk4tab}. In this example, the Queen's strategy is to wait at the origin until time $e_8$, at which point the exit is revealed and she moves there immediately. The cost is approximately 1.765367.%\input{UpperBoundsFigs/k4servants/8gonk4fig}
\begin{table}[h!]
\centering
\begin{small}
    \begin{tabular}{|c|c|c|c|c|c|c|}
        \hline
        $i$ & Agent & Vertex & $t_i$ & $Q^{(i)}$ & $d(Q^{(i)} , Exit)$ & Cost \\
        \hline
        1 & $S_1$ & $V_1$ & 0 & $\cal{O}$ & 1 & 1 \\
        \hline
        2 & $S_2$ & $V_3$ & 0 &  $\cal{O}$ & 1 & 1 \\
        \hline
        3 & $S_3$ & $V_8$ & 0 &  $\cal{O}$ & 1 & 1 \\
        \hline
        4 & $S_4$ & $V_6$ & 0 &  $\cal{O}$ & 1 & 1 \\
        \hline
        5 & $S_2$ & $V_2$ & $\sqrt{2-\sqrt{2}}$ & $\cal{O}$ & 1 & $1+\sqrt{2-\sqrt{2}}$ $^*$\\
        \hline
        6 & $S_3$ & $V_7$ & $\sqrt{2-\sqrt{2}}$ & $\cal{O}$ & 1 & $1+\sqrt{2-\sqrt{2}}$ $^*$\\
        \hline
        7 & $S_4$ & $V_5$ & $\sqrt{2-\sqrt{2}}$ & $\cal{O}$ & 1 & $1+\sqrt{2-\sqrt{2}}$ $^*$\\
        \hline
        8 & $Q$ & $V_4$ & $1+\sqrt{2-\sqrt{2}}$ & $V_4$ & $0$ & $1+\sqrt{2-\sqrt{2}}$ $^*$ \\
        \hline
    \end{tabular}
\end{small}
\caption{Trajectories' details for the upper bound of \pe{8}{4}; 
see Figure~\ref{8gonk4fig}.}
\label{8gonk4tab}
\end{table}

\subsubsection{\pe{9}{4} }
\label{sec: n9k4}
We use $s=(1,2,3,4,1,2,3,4,0)$ and $\rho=(1,3,5,7,2,4,6,8,9)$. 
The agents' trajectories show in Figure~\ref{9gonk4fig} and in Table~\ref{9gonk4tab}. This is another example where the Queen's strategy is quite simple. She waits at the origin until time $e_9$, at which point the exit is revealed and she moves there immediately. The cost is approximately 1.684040.%\input{UpperBoundsFigs/k4servants/9gonk4fig}
\begin{table}[h!]
\centering
\begin{small}
    \begin{tabular}{|c|c|c|c|c|c|c|}
        \hline
        $i$ & Agent & Vertex & $t_i$ & $Q^{(i)}$ & $d(Q^{(i)} , Exit)$ & Cost \\
        \hline
        1 & $S_1$ & $V_1$ & 0 & $\cal{O}$ & 1 & 1 \\
        \hline
        2 & $S_2$ & $V_3$ & 0 & $\cal{O}$ & 1 & 1 \\
        \hline
        3 & $S_3$ & $V_5$ & 0 & $\cal{O}$ & 1 & 1 \\
        \hline
        4 & $S_4$ & $V_7$ & 0 & $\cal{O}$ & 1 & 1 \\
        \hline
        5 & $S_1$ & $V_2$ & $e_9$ & $\cal{O}$ & 1 & $1+e_9$ $^*$\\
        \hline
        6 & $S_2$ & $V_4$ & $e_9$ & $\cal{O}$ & 1 & $1+e_9$ $^*$\\
        \hline
        7 & $S_3$ & $V_6$ & $e_9$ & $\cal{O}$ & 1 & $1+e_9$ $^*$\\
        \hline
        8 & $S_4$ & $V_8$ & $e_9$ & $\cal{O}$ & 1 & $1+e_9$ $^*$\\
        \hline
        9 & $Q$ & $V_9$ & $1+e_9$ & $V_9$ & 0 & $1+e_9$ $^*$\\
        \hline 
    \end{tabular}
\end{small}
\caption{Trajectories' details for the upper bound of \pe{9}{4}; 
see Figure~\ref{9gonk4fig}.}
\label{9gonk4tab}
\end{table}

\subsubsection{\pe{10}{4} }
\label{sec: n10k4}

\begin{table}[h!]
\centering
\begin{small}
\begin{tabular}{|c|c|c|c|c|c|c|}
        \hline
        $i$ & Agent & Vertex & $t_i$ & $Q^{(i)}$ & $d(Q^{(i)} , Exit)$ & Cost \\
        \hline
        1 & $S_1$ & $V_2$ & 0 & $(0,y_{4,10})$ & 1.05276 & 1.05276 \\
        \hline
        2 & $S_2$ & $V_3$ & 0 & $(0,y_{4,10})$ & 1.05276 & 1.052760 \\
        \hline
        3 & $S_3$ & $V_5$ & 0 & $(0,y_{4,10})$ & 1.00153 & 1.00153 \\
        \hline
        4 & $S_4$ & $V_{10}$ & 0 & $(0,y_{4,10})$ & 1.00153 & 1.00153 \\
        \hline
        5 & $S_1$ & $V_1$ & $e_{10}$ & $(0,y_{4,10})$ & 1.03349 & $1.65152$ $^*$ \\
        \hline
        6 & $S_2$ & $V_4$ & $e_{10}$ & $(0,y_{4,10})$ & 1.03349 & $1.65152$ $^*$  \\
        \hline
        7 & $S_3$ & $V_6$ & $e_{10}$ & $(0,y_{4,10})$ & $0.96851$ & $1.58654$  \\
        \hline
        8 & $S_4$ & $V_9$ & $e_{10}$ & $(0,y_{4,10})$ & $0.96851$ & $1.58654$  \\
        \hline
        9 & $S_3$ & $V_7$ & $2e_{10}$ & $(0,y_{4,10}-e_{10})$ & $0.41545$ & $1.65152$ $^*$ \\
        \hline 
        10 & $Q$ & $V_8$ & $1.65152$ & $V_8$ & 0 & $1.65152$ $^*$ \\
        \hline 
    \end{tabular}
\end{small}
\caption{Trajectories' details for the upper bound of \pe{10}{4}; 
see Figure~\ref{10gonk4fig}.}
\label{10gonk4tab}
\end{table}

We use $s=(1,2,3,4,1,2,3,4,3,0)$ and $\rho=(2,3,5,10,1,4,6,9,7,8)$. The agents' trajectories show in Figure~\ref{10gonk4fig} and in Table~\ref{10gonk4tab}. In this example, the evacuation time is maximized when the exit is at either $V_1$/$V_4$ or $V_7$/$V_8$. The Queen waits at some position $(0,y_{4,10})$ until time $e_{10}$ when $V_1$ and $V_4$ are visited. If the exit is found at either of these vertices, the Queen heads there immediately. Otherwise, she moves downward immediately so that she is as close to $V_7$/$V_8$ as possible at time $2e_{10}$. Thus, at time $2e_{10}$ she is at $(0,y_{4,10}-e_{10})$. The value of $y_{4,10}$ is tuned so that the evacuation times are equal when the exit is at either $V_1$/$V_4$ or $V_7$/$V_8$, and hence we have the following equation with a single unknown:

\begin{equation}
e_{10}+\| V_1-(0,y_{4,10})\|=2e_{10}+\| V_7-(0,y_{4,10}-e_{10})\|
\end{equation}

We can solve this equation to get $y_{4,10}\approx -0.0553293$, and the cost is approximately $1.651526$.

%We use $s=(1,2,3,4,1,2,3,4,3,0)$ and $\rho=(1,3,5,9,2,4,6,10,7,8)$. The agents' trajectories show in Figure~\ref{10gonk4fig} and in Table~\ref{10gonk4tab}. In this example, the Queen waits at the origin until time $e_{10}$ when $V_2$, $V_4$, $V_6$, and $V_{10}$ are revealed. If the exit is at any of these vertices then she heads there immediately (yielding an evacuation time that is less than the cost of the algorithm). Otherwise, she knows that the exit is at either $V_7$ or $V_8$, and so she immediately moves downward along the $y$-axis in order to remain equidistant from them both. It should be clear that the Queen's second unique location is $Q^{(9)}=(0,-e_{10})$. In the table below, note that the distance from $Q^{(9)}$ to both $V_7$ and $V_8$ is $x_{4,10}=\sqrt{\tfrac{1}{2}\left(5-\sqrt{5}-\sqrt{10-2\sqrt{5}}\right)}\approx0.454308$, and the cost is approximately 1.690376.
%\input{UpperBoundsFigs/k4servants/10gonk4fig}
%\input{UpperBoundsTabs/k4servants/10gonk4tab}

\end{document}